\newtheorem{thm}{Theorem}[section]
\newtheorem{prop}[thm]{Proposition}
\newtheorem{lem}[thm]{Lemma}
\newtheorem{cor}[thm]{Corollary}
\newtheorem{claim}[thm]{Claim}
\newtheorem{assumption}[thm]{Assumption}
\theoremstyle{definition}
\newtheorem{dfn}[thm]{Definition}
\theoremstyle{remark}
\newtheorem*{acknowledgment}{Acknowledgements}
\newcommand{\C}{\mathbb{C}}
\newcommand{\N}{\mathbb{N}}
\newcommand{\R}{\mathbb{R}}
\newcommand{\Z}{\mathbb{Z}}
\title[Homological bulk-edge correspondence for Weyl semimetal]
{Homological bulk-edge correspondence for \\ Weyl semimetal}
\author[K. Gomi]{Kiyonori Gomi}
\address{
Department of Mathematics, 
Tokyo Institute of Technology, 
2-12-1 Ookayama, Meguro-ku, Tokyo, 152-8551, Japan.}
\email{kgomi@math.titech.ac.jp}
\subjclass[2010]{Primary 81T35; Secondary 82D35, 81Q70, 58J30.}
\keywords{
bulk-edge correspondence, 
Weyl semimetal, 
Fermi arc, 
relative homology, 
spectral flow}
\date{}
\begin{document}

\begin{abstract}
For a certain translation invariant tight-binding model of three-dimensional Weyl semimetals, we establish a bulk-edge correspondence as an equality of two relative homology classes, based on an idea of Mathai and Thiang: From spectral information on the edge Hamiltonian, we construct a relative homology class on the surface momentum space. This class agrees with the image under the surface projection of a homology class on the bulk momentum space relative to the Weyl points, constructed from the bulk Hamiltonian. Furthermore, the relative homology class on the surface momentum space can be represented by homology cycles whose images constitute the Fermi arc, the locus where the edge Hamiltonian admits zero spectrum.
\end{abstract}

\maketitle

\tableofcontents

%%%%%%%%%%%%%%%%%%%%%%%%%%%%%%%%%%%%%%%%%%%%%%%%
%%%%%%%%%%%%%%%%%%%%%%%%%%%%%%%%%%%%%%%%%%%%%%%%

\section{Introduction}
\label{sec:introduction}

%%%%%%%%%%%%%%%%%%%%%%%%%%%%%%%%%%%%%%%%%%%%%%%%
\subsection{Bulk-edge correspondence for topological insulators}

The \textit{bulk-edge correspondence} is a prominent feature of topological insulators. As is known, in the experimental observation of the $3$-dimensional quantum spin Hall effect \cite{Hsieh}, the non-trivial topological phases of the bulk insulating states are confirmed by detecting the corresponding edge states. Thus, from theoretical viewpoint, the justification of the bulk-edge correspondence is an important theme, and has been achieved in various situations as the coincidence of two quantities called the bulk index and the edge index (see e.g.\ \cite{ASV,ASS,BES,Bra,G-P,Hatsu,KRS1,KRS2,P-SB}).

\medskip

For example, in a tight-binding model which describes an insulator as a quantum system on the $d$-dimensional square lattice $\Z^d$ with translation invariance, the information on the Hamiltonian $H$ of the quantum system is encoded into a continuous family $\hat{H} = \{ \hat{H}(k) \}_{k \in T^d}$ of invertible Hermitian matrices parametrized by the quasi-momentum $k$, which forms the $d$-dimensional torus $T^d = \R^d/2\pi \Z^d$ called the Brillouin zone. The eigenspaces of $\hat{H}(k)$ with negative eigenvalues form a vector bundle $E_{\hat{H}}$ on $T^d$ called the Bloch bundle. In the case that $d = 2$, we have the first Chern number of $E_{\hat{H}}$, which we denote by $c_1(\hat{H}) \in \Z$. This is the bulk index. A typical Hamiltonian, called the Qi-Wu-Zhang (QWZ) model in \cite{AOP}, is
$$
\hat{H}_{\mathrm{QWZ}}(k_x, k_y)
= \sin k_x \sigma k_x + \sin k_y \sigma_y + 
(u + \cos k_x + \cos k_y) \sigma_z,
$$
where $\sigma_x$, $\sigma_y$ and $\sigma_z$ are the Pauli matrices
\begin{align*}
\sigma_x &=
\left(
\begin{array}{cc}
0 & 1 \\
1 & 0
\end{array}
\right),
&
\sigma_y &=
\left(
\begin{array}{rr}
0 & -i \\
i & 0
\end{array}
\right),
&\sigma_z &=
\left(
\begin{array}{rr}
1 & 0 \\
0 & -1
\end{array}
\right).
\end{align*}
If we take the parameter $u$ to be $0 < u < 2$, then $\hat{H}_{\mathrm{QWZ}}(k_x, k_y)$ is invertible for all $(k_x, k_y) \in T^2$, and the Chern number is $c_1(\hat{H}_{\mathrm{QWZ}}) = 1$.

To formulate the edge index, one introduces an edge to the lattice, and then impose the Dirichlet boundary condition to define a self-adjoint operator $H^\sharp$, often called the edge Hamiltonian. Using the translation invariance in the direction along the edge, $H^\sharp$ induces a continuous family of self-adjoint operators $\acute{H}^\sharp = \{ \acute{H}^\sharp(k) \}_{k \in S^1}$ parametrized by the circle $T^1 = S^1 = \R/2\pi \Z$. It turns out that $\acute{H}^\sharp(k)$ is Fredholm, so that the spectral flow $\mathrm{sf}(\acute{H}^\sharp) \in \Z$ makes sense. The edge index is $- \mathrm{sf}(\acute{H}^\sharp)$. Now, the bulk-edge correspondence for the $2$-dimensional topological insulator (Chern insulator) is stated as the equality $c_1(\hat{H}) = -\mathrm{sf}(\acute{H}^\sharp)$.

%%%%%%%%%%%%%%%%%%%%%%%%%%%%%%%%%%%%%%%%%%%%%%%%
\subsection{Bulk-edge correspondence for Weyl semimetals}

The bulk-edge correspondence is also prominent in other topological states of matters. This is the case for the \textit{Weyl semimetal} \cite{AMV,Wan,Xu}, which is experimentally confirmed by observing its characteristic edge states, known as \textit{Fermi arcs}. However, mathematical study of the bulk-edge correspondence for Weyl semimetals, in particular one which explains topology of Fermi arcs, seem to be not fully developed yet. In this direction of research, a work of Mathai and Thiang \cite{M-T1,M-T2} suggests a formulation of the bulk-edge correspondence. Based on their idea, we mathematically formulate and prove a bulk-edge correspondence for a particular model of Weyl semimetals in this paper. The model is more or less a protypical one, so that our result should be regarded as a first step toward the general case. 
\medskip

Based on the idea in \cite{M-T1,M-T2}, we here formulate the bulk-edge correspondence that we anticipate. Let $\hat{H} : T^3 \to \mathrm{Herm}(\C^2)_0$ be a continuous map from the $3$-dimensional torus to the space $\mathrm{Herm}(\C^2)_0$ of $2$ by $2$ traceless Hermitian matrices. For $\hat{H}$ to describe a Weyl semimetal, we tentatively require the following two properties.
\begin{enumerate}
\item[(W1)]
The set $W = \{ k \in T^3 |\ \det\hat{H}(k) = 0 \}$ is finite.

\item[(W2)]
At each point $(k^0_x, k^0_y, k^0_z) \in W$, there is an approximation
$$
\hat{H}(k_x, k_y, k_z) \sim \sum_{i = x, y, z} (k_i - k^0_i) \sigma_i.
$$
\end{enumerate}
We will call $W$ in (W1) the set of \textit{Weyl points} of $\hat{H}$. Away from the Weyl points, $\hat{H}(k)$ is invertible. Thus, as in the case of topological insulators, we have the Bloch bundle over $T^3 \backslash W$, and hence its first Chern class in the integral cohomology group
$$
c_1(E_{\hat{H}}) \in H^2(T^3 \backslash W).
$$
This is essentially the same as the (bulk) topological invariant of the Weyl semimetal $\hat{H}$ studied in \cite{M-T1,M-T2}.

Now, suppose an edge on the lattice $\Z^3$. To be concrete, let $\Z^2 \times \{ 1 \}$ be the edge. Then, as in the case of topological insulators, we can get the edge Hamiltonian, namely, a continuous family of self-adjoint operators $\{ \acute{H}^\sharp(k) \}_{k \in T^2}$ parametrized by the $2$-dimensional torus $T^2$ consisting of quasi-momenta on the boundary surface $\Z^2 \times \{ 1 \}$. A definition of the Fermi arc of $\hat{H}$ is the locus of $k \in T^2$ where $\acute{H}^\sharp(k)$ admits zero spectrum:
$$
F = \{ k \in T^2 |\ 
0 \in \sigma(\acute{H}^\sharp(k)) \},
$$
where $\sigma(\acute{H}^\sharp(k))$ is the spectrum of $\acute{H}^\sharp(k)$. Note that $\pi(W) \subset F$, where $\pi : T^3 \to T^2$ is the projection $\pi(k_x, k_y, k_z) = (k_x, k_y)$. This is because $\acute{H}^\sharp(k)$ at $k \in \pi(W)$ has $0$ as an essential spectrum. What suggested in \cite{M-T1,M-T2} as the bulk-edge correspondence is that the Fermi arc represents the homology class 
$$
\pi_*(\mathrm{PD}(c_1(E_{\hat{H}}))) \in H_1(T^3, \pi(W)),
$$
where $\mathrm{PD} : H^2(T^3 \backslash W) \to H_1(T^3, W)$ is the Poincar\'{e} duality map, and $\pi_* : H_1(T^3, W) \to H_1(T^2, \pi(W))$ induced from the projection $\pi : T^3 \to T^2$. For example, if both $W$ and $\pi(W)$ consist of distinct two points, then there is an isomorphism $H_1(T^2, \pi(W)) \cong H_1(T^2) \oplus \Z$. A basis of the summand $\Z \cong H_1(T^2, \pi(W))/H_1(T^2)$ is the homology class of a path which connects the two points in $\pi(W)$. Thus, when $F$ consists of (the images of) genuine paths, they represent a homology class in $H_1(T^2, \pi(W))$, and this homology class is expected to be $\pi_*(\mathrm{PD}(c_1(E_{\hat{H}})))$.

By definition, $F$ is just a set. Hence $F$ has no a priori parametrizations as paths. Taking this issue into account, we formulate the homological bulk-edge correspondence for Weyl semimetals as the following claim:

\begin{claim} \label{claim}
For any Weyl semimetal $\hat{H} : T^3 \to \mathrm{Herm}(\C^2)_0$ (with appropriate assumptions), there would exist a homology class
$$
\mathcal{F}(\acute{H}^\sharp) \in H_1(T^2, \pi(W))
$$
which has the following properties.
\begin{enumerate}
\item
$\mathcal{F}(\acute{H}^\sharp)$ is defined by the spectral data of the edge Hamiltonian $\{ \acute{H}^\sharp(k) \}_{k \in T^2}$ associated to $\hat{H}$.

\item
$\mathcal{F}(\acute{H}^\sharp) = \pi_*(\mathrm{PD}(c_1(E_{\hat{H}})))$.

\item
If $F \neq \emptyset$, then $F$ is recovered as the images of paths which represent $\mathcal{F}(\acute{H}^\sharp)$.

\end{enumerate}
\end{claim}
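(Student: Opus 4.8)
\emph{Strategy.} The plan is to reduce the three-dimensional statement to a one-parameter family of two-dimensional bulk--edge correspondences indexed by test loops in the surface Brillouin zone, and then to analyse the edge spectrum of the concrete model near the Weyl points. For the construction in item (1), given a loop $\gamma : S^1 \to T^2 \setminus \pi(W)$ I would look at the pullback family $\gamma^*\acute{H}^\sharp$: since $\gamma$ avoids $\pi(W)$, the essential spectrum of $\acute{H}^\sharp(\gamma(t))$ — which coincides with $\sigma(\hat{H}(\gamma(t),k_z))$ over $k_z \in T^1$ — stays uniformly away from $0$, so $\acute{H}^\sharp\circ\gamma$ is a loop of self-adjoint operators with $0 \notin \sigma_{\mathrm{ess}}$ and the spectral flow $\mathrm{sf}(\acute{H}^\sharp\circ\gamma) \in \Z$ is defined. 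It is invariant under homotopies inside $T^2 \setminus \pi(W)$ and additive under concatenation, hence descends to a homomorphism $H_1(T^2\setminus\pi(W);\Z) \to \Z$, i.e.\ a class in $H^1(T^2\setminus\pi(W);\Z)$. I would then define $\mathcal{F}(\acute{H}^\sharp)$ to be its image under Poincar\'e--Lefschetz duality $H^1(T^2\setminus\pi(W)) \cong H_1(T^2,\pi(W))$, with the sign normalised so that it is the dual of the edge index $-\mathrm{sf}$. This uses only the edge operator family, so (1) holds by construction.

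\emph{Identification with the bulk class (item (2)).} Combining the duality above with the Poincar\'e--Lefschetz duality $H^2(T^3\setminus W) \cong H_1(T^3,W)$ used to define $\mathrm{PD}$, the equality (2) is equivalent to the assertion that for every test loop $\gamma$ the integer $-\mathrm{sf}(\acute{H}^\sharp\circ\gamma)$ equals the intersection number of $\gamma$ with $\pi_*\mathrm{PD}(c_1(E_{\hat{H}}))$. Choosing a $1$-cycle $C$ representing $\mathrm{PD}(c_1(E_{\hat{H}}))$ transverse to the torus $\pi^{-1}(\gamma)\subset T^3$, the projection $\pi$ identifies the transverse points of $C \cap \pi^{-1}(\gamma)$ with those of $\pi_*C \cap \gamma$ with matching signs, so this intersection number equals $\langle c_1(E_{\hat{H}}), [\pi^{-1}(\gamma)]\rangle$. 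On $\pi^{-1}(\gamma) \cong S^1\times T^1$ the restriction of $\hat{H}$ is a family of \emph{invertible} traceless Hermitian $2\times 2$ matrices, i.e.\ a two-dimensional Chern insulator whose Bloch-bundle Chern number is precisely $\langle c_1(E_{\hat{H}}), [\pi^{-1}(\gamma)]\rangle$, and truncating the $z$-direction yields exactly $\acute{H}^\sharp\circ\gamma$ as its edge operator family over $S^1$. Hence (2) follows from the two-dimensional bulk--edge correspondence $c_1 = -\mathrm{sf}$ applied over each $\gamma$, a Hatsugai-type statement which for the model at hand is classical and would be recalled beforehand.

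\emph{The Fermi arc (item (3)) and the main obstacle.} For the last item one must leave topology and work with the concrete Hamiltonian. Near a point $k^0 \in \pi(W)$, properties (W1)--(W2) pin down $\hat{H}$, hence $\acute{H}^\sharp$, up to controlled corrections, and a direct computation with the linearised Weyl Hamiltonian should show that for $k$ close to $k^0$ the operator $\acute{H}^\sharp(k)$ carries a single edge eigenvalue near $0$ which vanishes transversally along a curve emanating from $k^0$ and is absorbed into the essential spectrum as $k\to k^0$. Away from $\pi(W)$ I would show that $0$ is a simple, transversally crossed eigenvalue of $\acute{H}^\sharp(k)$ at every $k \in F$, so that $F$ is a $1$-manifold with $\partial F \subset \pi(W)$; orienting each branch by the sign of the normal derivative of the edge eigenvalue — equivalently by the local spectral flow — makes $F$ a relative $1$-cycle whose class computes the same homomorphism as in the first step, so $[F] = \mathcal{F}(\acute{H}^\sharp)$, and when $F\neq\emptyset$ its components, parametrised as (possibly closed) paths, are the required cycles. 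I expect this spectral analysis — in particular the simplicity and transversality of the zero crossing, which is not automatic — to be the main difficulty, to be settled either by an explicit diagonalisation of the prototypical model or by taking it as the ``appropriate assumption'' alluded to in the statement; a secondary technical point is to check that $\acute{H}^\sharp\circ\gamma$ genuinely arises from a finite-range one-dimensional truncation and depends norm-resolvent continuously on the parameter, so that the two-dimensional correspondence in the second step applies verbatim.
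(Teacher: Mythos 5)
Your proposal follows essentially the same route as the paper: $\mathcal{F}(\acute{H}^\sharp)$ is defined as (minus) the Poincar\'e--Lefschetz dual of the spectral-flow homomorphism on $H_1(T^2\setminus\pi(W))$; item (2) is reduced, by pairing against test loops $\gamma$ and restricting $\hat{H}$ to the tori $\pi^{-1}(\gamma)$, to the two-dimensional correspondence $c_1=-\mathrm{sf}$ (the paper implements exactly this pairing via an explicit basis of $H_1(T^2,\pi(W))$ in Lemma \ref{lem:spectral_flow_class_in_basis_expression} and Proposition \ref{prop:push_forward_of_bulk_index}); and item (3) rests on showing the edge operator has a simple, transversally crossed zero eigenvalue along $F$, which the paper obtains by explicit diagonalisation of the local model (Proposition \ref{prop:spectral_data}) together with the regularity hypotheses of Assumption \ref{assumption} --- precisely the ``appropriate assumptions'' you correctly identify as the crux. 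Your sketch accurately locates the genuine difficulty in (3) and resolves it the same way the paper does, so no substantive gap remains beyond the deferred spectral computation.
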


In comparison with the bulk-edge correspondence for topological insulators, $c_1(E_{\hat{H}})$ plays the role of the bulk index, and $\mathcal{F}(\acute{H}^\sharp)$ that of the edge index.

\medskip

It is rather easy to define $\mathcal{F}(\acute{H}^\sharp)$ satisfying (1) and (2) in a general setup. Actually, in this paper,  we define $\mathcal{F}(\acute{H}^\sharp)$ by the Poincar\'{e} dual of the spectral flow of $\acute{H}^\sharp|_{T^2 \backslash \pi(W)}$ regarded as a first cohomology class (see \S\S\ref{subsec:edge_index} for detail, cf.~\cite{Th2}). To the contrary, the verification of (3) for $\mathcal{F}(\acute{H}^\sharp)$, defined as above for instance, seems somehow difficult. This is because the eigenvalues of operators may merge and branch, so that the set $F$ may not have a simple description in general. Hence to find appropriate assumptions about $\hat{H}$ is the crucial part in showing Claim \ref{claim}

\medskip

Toward a general proof of Claim \ref{claim}, some non-traditional descriptions of homology groups such as in \cite{ADT,D-T} may be useful. In a recent work of Thiang \cite{Th2}, the local behaviour of zero loci of the edge Hamiltonian of a Weyl semimetal is generally analyzed. Though their global nature is not yet clarified, his result would be helpful to show Claim \ref{claim} in general.

%%%%%%%%%%%%%%%%%%%%%%%%%%%%%%%%%%%%%%%%%%%%%%%%
\subsection{Main result}

As is mentioned, this paper considers a certain model of a Weyl semimetal, and then prove the homological bulk-edge correspondence. To explain our model, we introduce $\hat{H}_{\mathrm{loc}} : \R^2 \times S^1 \to \mathrm{Herm}(\C^2)_0$ by
$$
\hat{H}_{\mathrm{loc}}(a, b, \theta)
=
\left(
\begin{array}{cc}
b & a + e^{i\theta} \\
a + e^{-i\theta} & -b
\end{array}
\right).
$$
It is easy to see that $\hat{H}_{\mathrm{loc}}$ has two Weyl points $(1, 0, \pi)$ and $(-1, 0, 0)$. The associated edge Hamiltonian is a family of self-adjoint operators $\{ \acute{H}^\sharp_{\mathrm{loc}}(a, b) \}_{(a, b) \in \R^2}$, and the Fermi arc turns out to be the interval $F = \{ 0 \} \times [-1, 1]$. Taking $\hat{H}_{\mathrm{loc}}$ as the local model, we introduce our model of Weyl semimetals as follows.

\begin{assumption} \label{assumption}
We suppose that a map $\hat{H} : T^3 \to \mathrm{Herm}(\C^2)_0$  is expressed as 
$$
\hat{H}(k_x, k_y, k_z)
=
\left(
\begin{array}{cc}
b(k_x, k_y) & a(k_x, k_y) + e^{ik_z} \\
a(k_x, k_y) + e^{-ik_z} & -b(k_x, k_y)
\end{array}
\right)
$$
in terms of smooth maps $a, b : T^2 \to \R$ with the following property: 
\begin{enumerate}
\item[(a)]
The subset $a^{-1}(\{ \pm 1 \}) \cap b^{-1}(0) \subset T^2$ consists of a finite number of points.

\item[(b)]
Each $w \in \pi(W)$ admits an open neighbourhood $U_w \subset T^2$ on which we have $\det J(k_x, k_y) \neq 0$ for $(k_x, k_y) \in (U_w \backslash \{ w \}) \cap a^{-1}((-1, 1)) \cap b^{-1}(0)$, where $\det J : T^2 \to \R$ is the determinant of the Jacobian of $(a, b) : T^2 \to \R^2$.

\item[(c)]
If we restrict the domain of $b : T^2 \to \R$ to the open subset $a^{-1}((-1, 1)) \subset T^2$, then $0 \in \R$ is a regular value and the inverse image $a^{-1}((-1, 1)) \cap b^{-1}(0)$ consists of a finite number of connected components.

\item[(d)]
There exists $(k^0_x, k^0_y) \in T^2$ such that:
\begin{itemize}
\item
For any $k \in \R/2\pi \Z$, we have $(k^0_x, k), (k, k^0_y) \not\in a^{-1}(\{ \pm 1 \}) \cap b^{-1}(0)$.

\item
For any $k_y \in \R/2\pi \Z$ such that $(k^0_x, k_y) \in a^{-1}((-1, 1)) \cap b^{-1}(0)$, we have $\det J(k^0_x, k_y) \neq 0$. Also, for any $k_x \in \R/2\pi \Z$ such that $(k_x, k^0_y) \in a^{-1}((-1, 1)) \cap b^{-1}(0)$, we have $\det J(k_x, k^0_y) \neq 0$. 

\end{itemize}
\end{enumerate}
\end{assumption}

We notice that the characterization (W1) of a Weyl semimetal follows from (a), because $\pi(W) = a^{-1}(\{ \pm 1 \}) \cap b^{-1}(0)$. The characterization (W2) implies (b) when we understand the meaning of an ``approximation'' suitably (Lemma \ref{lem:characterization}). In this sense, our model is a generalization of Weyl semimetals. From (c), the Fermi arc but the projected Weyl points $F \backslash \pi(W) = a^{-1}((-1, 1)) \cap b^{-1}(0)$ is expressed as the union of a finite number of $1$-dimensional submanifolds. They are the images of smooth embeddings of intervals or circles, from which we get relative homology cycles. Then, using (b) and (d), we can verify the third property in Claim \ref{claim} to obtain our main theorem:

\begin{thm} \label{thm:main_in_introduction}
Claim \ref{claim} holds true for $\hat{H}$ in Assumption \ref{assumption}.
\end{thm}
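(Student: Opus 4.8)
The plan is to produce a single explicit relative $1$-cycle on $T^2$ that represents both sides of property~(2) of Claim~\ref{claim}. By Assumption~\ref{assumption} one has $F\setminus\pi(W)=a^{-1}((-1,1))\cap b^{-1}(0)$; by~(c) this is a smooth $1$-submanifold of $T^2$ with finitely many components (embedded circles and intervals) whose closure meets $\pi(W)$ only at the interval endpoints, and after fixing an orientation of $T^2$ and the co-orientation of $b^{-1}(0)$ given by the sign of the normal derivative of $b$, it defines a relative $1$-cycle whose class I write $[\mathcal A]\in H_1(T^2,\pi(W))$; note $F=\mathcal A\sqcup\pi(W)$. Since $\mathcal F(\acute H^\sharp)$ is by definition the Poincar\'e--Lefschetz dual of $\mathrm{sf}(\acute H^\sharp|_{T^2\setminus\pi(W)})\in H^1(T^2\setminus\pi(W))$, property~(1) holds by construction, and the theorem reduces to the two identities $\mathcal F(\acute H^\sharp)=[\mathcal A]$ and $\pi_*(\mathrm{PD}(c_1(E_{\hat H})))=[\mathcal A]$: property~(2) is their combination, and property~(3) follows from the first, since the arcs of $\mathcal A$ --- completed where needed by constant paths at isolated points of $\pi(W)$ --- have union $F$ and represent $\mathcal F(\acute H^\sharp)$.

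For the edge side I would first write the edge Hamiltonian on $\ell^2(\Z_{\ge1})\otimes\C^2$ as $\acute H^\sharp(k)=\bigl(\begin{smallmatrix}b&a+S\\ a+S^*&-b\end{smallmatrix}\bigr)$ with $S$ the unilateral shift and $a=a(k),\,b=b(k)$ scalars --- a norm-continuous family of bounded self-adjoint operators. Solving $\acute H^\sharp(k)\psi=0$ directly shows that $0$ is an eigenvalue exactly when $b(k)=0$ and $|a(k)|<1$, with one-dimensional kernel spanned by the decaying edge mode $\psi(n)=(-a(k))^{n-1}$, while $\sigma_{\mathrm{ess}}(\acute H^\sharp(k))=\pm[\sqrt{b^2+(|a|-1)^2},\,\sqrt{b^2+(|a|+1)^2}]$ contains $0$ only on $\pi(W)$. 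So over $T^2\setminus\pi(W)$ this is a family of self-adjoint Fredholm operators (so the spectral flow along loops is defined) for which $0$ is an eigenvalue precisely along $\mathcal A$, and there a simple isolated one; by~(c), $\nabla b\neq0$ along $\mathcal A$, so any loop $\gamma$ can be homotoped to meet $\mathcal A$ transversally at finitely many points, taken among the generic points of $\mathcal A$ where $\det J\neq0$. At each crossing a single eigenvalue passes through $0$ with slope of the sign prescribing the co-orientation, so $\mathrm{sf}(\acute H^\sharp|_\gamma)=\mathcal A\cdot\gamma$, the signed intersection number; hence the spectral-flow class is dual to $[\mathcal A]$, i.e.\ $\mathcal F(\acute H^\sharp)=[\mathcal A]$.

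For the bulk side write $\hat H(k)=\vec n(k)\cdot\vec\sigma$ with $\vec n=(a+\cos k_z,\,-\sin k_z,\,b)$, so $c_1(E_{\hat H})$ is, up to sign, the pullback by $\vec n/|\vec n|\colon T^3\setminus W\to S^2$ of the generator of $H^2(S^2;\Z)$. Both classes lie in $H_1(T^2,\pi(W))$, which sits in $0\to H_1(T^2)\to H_1(T^2,\pi(W))\to\widetilde H_0(\pi(W))\to0$; by Poincar\'e--Lefschetz duality it suffices to compare the dual classes in $H^1(T^2\setminus\pi(W))$ on a generating set of $H_1(T^2\setminus\pi(W))$ --- the two coordinate loops through the point $(k^0_x,k^0_y)$ of Assumption~\ref{assumption}(d), plus small loops around the points of $\pi(W)$. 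For a loop $\gamma$ one has $\langle\pi_*\mathrm{PD}(c_1(E_{\hat H})),[\gamma]\rangle=\langle c_1(E_{\hat H}),[\pi^{-1}(\gamma)]\rangle$, the Chern number of the Bloch bundle over the $2$-torus $\pi^{-1}(\gamma)$, i.e.\ the degree of $\vec n/|\vec n|$ there; a regular-value count evaluates this using that over $\gamma(s)$ the fibre circle $k_z\mapsto\vec n(\gamma(s),k_z)$ is the unit circle in $\{z=b(\gamma(s))\}$ centred at $(a(\gamma(s)),0,b(\gamma(s)))$, which links the $z$-axis exactly when $|a(\gamma(s))|<1$, so that $\deg=\mathcal A\cdot\gamma$ again (with signs from $\operatorname{sgn}\frac{d}{ds}b(\gamma(s))$). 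Since $\langle\mathrm{PD}[\mathcal A],[\gamma]\rangle=\mathcal A\cdot\gamma$ too, the two classes agree on all loops, hence are equal; here~(d) supplies the coordinate loops (avoiding $\pi(W)$ and meeting $\mathcal A$ transversally), and for a small loop around $w\in\pi(W)$ the identity reduces, via~(b) and the local normal form at $w$ benchmarked against the model $\hat H_{\mathrm{loc}}$ (whose Fermi arc $\{0\}\times[-1,1]$ joins its two Weyl points), to: the local Berry flux at $w$ equals the oriented count of branches of $\mathcal A$ ending at $w$.

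The hard part is exactly this last, local comparison at the Weyl points. Assumption~\ref{assumption}(b) allows $\det J(w)=0$, so a Weyl point over $w$ may be degenerate and several branches of $\mathcal A$ may emanate from $w$ with possibly cancelling orientations (or $\mathcal A$ may avoid $w$); I will need to show that $[\mathcal A]$ is nonetheless a genuine relative cycle, that its boundary multiplicity at $w$ equals the local winding number of $(a\mp1,b)$ about $w$, and that this integer equals both the spectral-flow contribution of a small loop about $w$ and the Berry-curvature flux through a small sphere about the Weyl point(s) over $w$, with all signs consistent. Together with checking that agreement on the chosen generating set of loops really forces equality in $H_1(T^2,\pi(W))$ (using the displayed exact sequence and the role of~(d) in pinning the $H_1(T^2)$-component), this orientation bookkeeping near the Weyl points is the technical core; by contrast the analytic input on the edge side is comparatively routine given the explicit form of $\acute H^\sharp(k)$.
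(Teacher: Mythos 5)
Your architecture is the right one and, on the edge side, coincides with the paper's: the explicit kernel computation for the Toeplitz-type operator (a simple eigenvalue $E=b$ for $|a|<1$, essential spectrum touching $0$ only over $\pi(W)$) is Proposition \ref{prop:spectral_data}, and "spectral flow along $\gamma$ equals the signed intersection number with the co-oriented Fermi arc" is Lemma \ref{lem:spectral_flow_as_intersection_number}. Where you genuinely diverge is the bulk side: you compute $\langle c_1(E_{\hat H}),[\pi^{-1}(\gamma)]\rangle$ directly as a degree of $\vec n/|\vec n|$ by a regular-value/linking count. The paper instead proves a general two-dimensional bulk--edge correspondence $c_1(\hat H)=-\mathrm{sf}(\hat H^\sharp)$ for gapped families $T^2\to\mathrm{Herm}(\C^r)^*$ (Proposition \ref{prop:bulk_edge_for_2d_insulators}, via homotopy classification against the QWZ models) and applies it to each of the $2$-tori $\{k^0_x\}\times S^1\times S^1$, $S^1\times\{k^0_y\}\times S^1$ and $\partial\overline{D}_i\times S^1$; this converts the bulk coefficients into the very same spectral flows appearing on the edge side, so no Berry-flux computation is ever performed. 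Your degree count is correct for this specific two-band model, but the paper's route is less computational and not tied to the explicit form of $\vec n$.

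The two items you flag as the "technical core" are real obligations, but the second is easier than you fear, and the way the paper closes both is worth internalizing. (i) That evaluation on your generating loops determines the class: the paper builds an explicit basis $[\overline\alpha_x],[\overline\alpha_y],[\overline\alpha_1],\dots,[\overline\alpha_{n-1}]$ of $H_1(T^2,\pi(W))$ (Lemma \ref{lem:choice}) whose coefficients are precisely integrals of the Poincar\'e dual over $\{k^0_x\}\times S^1$, $S^1\times\{k^0_y\}$ and $\partial\overline{D}_i$ (the analogue of Proposition \ref{prop:cohomology_basis} in one dimension lower); Assumption (d) is what lets the coordinate loops avoid $\pi(W)$ and meet the arc where $\det J\neq 0$. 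Note that only $n-1$ of the small circles are needed, consistent with the relation $\Sigma$ in Lemma \ref{lem:Mayer_Vietoris}. (ii) At the Weyl points no identity of the form "boundary multiplicity of $\mathcal A$ at $w$ $=$ local winding $=$ local Berry flux" is ever established or needed: once both classes are expanded in the dual basis, the coefficient at $\overline w_i$ on each side is a single evaluation over $\partial\overline D_i$ (resp.\ $\partial\overline D_i\times S^1$), and Assumption (b) enters only to guarantee that $\partial\overline D_i$ can be perturbed inside $U_{\overline w_i}\setminus\{\overline w_i\}$ so as to meet $\mathcal A$ transversally at points with $\det J\neq 0$, after which Lemma \ref{lem:spectral_flow_as_intersection_number} applies verbatim. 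So possible degeneracy at $w$ itself, cancelling branches, etc., are simply never examined pointwise. Finally, your assertion that the open arcs close up to a relative cycle with endpoints in $\pi(W)$ is not automatic (an embedded open interval need not extend continuously); this is Lemma \ref{lem:extension} and requires the closure computation $\overline{(a,b)^{-1}((-1,1)\times\{0\})}\setminus(a,b)^{-1}((-1,1)\times\{0\})\subset\pi(W)$. With these three points supplied, your proof goes through.
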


We close the introduction by an example. It is well known that a Weyl semimetal arises in a phase transition of topological insulators (see \cite{Mur} for example). In view of the phase transitions of the QWZ model, we substitute $u = 2 + \cos k_z$ into $\hat{H}_{\mathrm{QWZ}}$ to get 
$$
\hat{H}(k_x, k_y, k_z)
= \sin k_x \sigma_x + \sin k_y \sigma_y + 
(2 + \cos k_x + \cos k_y + \cos k_z) \sigma_z.
$$
Note that one can get this from a toy model in \cite{AMV}. Under the change of coordinates $k_x \leftrightarrow k_z$ and the conjugation by an orthogonal matrix (concretely $T$ in the proof of Lemma \ref{lem:spectral_flow_of_Chern_insulator}), $\hat{H}(k_x, k_y, k_z)$ agrees with $\hat{H}_{\mathrm{loc}}(a(k_x, k_y), b(k_x, k_y), k_z)$, in which
\begin{align*}
a(k_x, k_y) &= 2 + \cos k_x + \cos k_y, &
b(k_x, k_y) &= \sin k_y.
\end{align*}
This example satisfies Assumption \ref{assumption}. The set $W$ consists of two Weyl points $(\pi/2, \pi, 0)$ and $(3\pi/2, \pi, \pi)$ on $T^3$. The Fermi arc is the interval $F = [\pi/2, 3\pi/2] \times \{ \pi \}$ connecting the projected Weyl points $(\pi/2, \pi)$ and $(3\pi/2, \pi)$ on $T^2$. It turns out that the relative homology class $\mathcal{F}(\acute{H}^\sharp) \in H_1(T^2, \pi(W)) \cong \Z^3$ is represented by the path $f : [0, 1] \to T^2$ given by $f(t) = ((2t+1)\pi/2, \pi)$. It is clear that its image recovers the Fermi arc: $\mathrm{Im}f = F$. See \S\S\ref{subsec:example} for more detail of this and other examples satisfying Assumption \ref{assumption}.

\bigskip

This paper is organized as follows: In \S\ref{sec:local_model}, we start with a summary of spectral data of the edge Hamiltonian associated to our local model $\hat{H}_{\mathrm{loc}}$. Based on this result, we also give a topological proof of the bulk-edge correspondence $c_1(\hat{H}) = -\mathrm{sf}(\acute{H}^\sharp)$ for Chern insulators here. In \S\ref{sec:Weyl_semimetal}, the main theorem is proved. For this aim, we begin with a review of relevant (co)homology groups. We next construct bases of these (co)homology groups, and define the relative homology class $\mathcal{F}(\acute{H}^\sharp)$. Then, by using the bulk-edge correspondence for Chern insulators and the (co)homology basis, we prove Theorem \ref{thm:main_in_introduction}. The proof appeals to basic techniques in differential topology. Examples are supplied at the end of this section. Finally, in \S\ref{sec:spectral_analysis}, a postponed proof about spectral data in \S\ref{sec:local_model} is given.

\medskip

Throughout the paper, some basic notions in algebraic topology (cohomology, homology and homotopy groups  \cite{Hatch}) will be assumed.

%\newpage

\medskip

\begin{acknowledgment}
The author's research is supported by 
JSPS KAKENHI Grant Numbers 20K03606 and JP17H06461.
\end{acknowledgment}

%%%%%%%%%%%%%%%%%%%%%%%%%%%%%%%%%%%%%%%%%%%%%%%%
%%%%%%%%%%%%%%%%%%%%%%%%%%%%%%%%%%%%%%%%%%%%%%%%

\section{Local model}
\label{sec:local_model}

%%%%%%%%%%%%%%%%%%%%%%%%%%%%%%%%%%%%%%%%%%%%%%%%
\subsection{Summary of spectral data}
\label{subsec:summary_spectral_data}

As in the introduction, let $H_{\mathrm{loc}} : \R^2 \times S^1 \to \mathrm{Herm}(\C^2)_0$ be the following map
$$
\hat{H}_{\mathrm{loc}}(a, b, \theta)
=
\left(
\begin{array}{cc}
b & a + e^{i\theta} \\
a + e^{-i\theta} & -b
\end{array}
\right).
$$
The eigenvalues of $\hat{H}_{\mathrm{loc}}(a, b, \theta)$ is $\lambda_\pm = \pm \sqrt{b^2 + \lvert a + e^{i\theta} \rvert^2}$. Hence $\hat{H}_{\mathrm{loc}}(a, b, \theta)$ is invertible if and only if $(a, b, \theta) \neq (1, 0, \pi), (-1, 0, 0)$.

\medskip

Let $L^2(S^1, \C^2)$ be the space of $L^2$-functions on $S^1 = \R/2\pi \Z$ with values in $\C^2$, and $L^2_+(S^1, \C^2) \subset L^2(S^1, \C^2)$ the $L^2$-completion of the subspace $\bigoplus_{n > 0} \C e^{in \theta} \otimes \C^2$. Put differently, $L^2_+(S^1, \C^2)$ is the closed subspace of $L^2$-functions whose Fourier expansions involve positive modes only. We denote by $\hat{P} : L^2(S^1, \C^2) \to L^2_+(S^1, \C^2)$ the orthogonal projection. Its adjoint $\hat{P}^* : L^2_+(S^1, \C^2) \to L^2(S^1, \C^2)$ is the inclusion. For $(a, b) \in \R^2$, we define $\hat{H}^\sharp_{\mathrm{loc}}(a, b) : L^2_+(S^1, \C^2) \to L^2_+(S^1, \C^2)$ to be the compression of the multiplication operator with $\hat{H}_{\mathrm{loc}}(a, b, \cdot) : S^1 \to \mathrm{Herm}(\C^2)_0$, namely, the composition of the following operators
$$
L^2_+(S^1, \C^2) \overset{\hat{P}^*}{\longrightarrow}
L^2(S^1, \C^2) \overset{\hat{H}_{\mathrm{loc}}(a, b, \cdot)}{\longrightarrow}
L^2(S^1, \C^2) \overset{\hat{P}}{\longrightarrow}
L^2_+(S^1, \C^2).
$$
By construction, $\hat{H}^\sharp_{\mathrm{loc}}(a, b)$ is a self-adjoint bounded operator. Notice that $\hat{H}^\sharp_{\mathrm{loc}}(a, b)$ is essentially the Toeplitz operator \cite{D} associated to $\hat{H}_{\mathrm{loc}}(a, b, \cdot)$. Hence $\hat{H}^\sharp_{\mathrm{loc}}(a, b)$ is Fredholm if $(a, b) \neq (\pm 1, 0)$. Also, by using the continuity and symmetry of the model as well as the fact that essential spectrum is unchanged under compact perturbations, we can identify the essential spectrum of $\hat{H}^\sharp_{\mathrm{loc}}(a, b)$ with the union of the spectrum of the Hermitian matrices $\{ H_{\mathrm{loc}}(a, b, \theta) \}_{\theta \in S^1}$. 
$$
\sigma_{\mathrm{ess}}(\hat{H}^\sharp_{\mathrm{loc}}(a, b))
=
\bigcup_{\theta \in S^1} \sigma(\hat{H}_{\mathrm{loc}}(a, b, \theta)),
$$
where $\sigma(\hat{H}_{\mathrm{loc}}(a, b, \theta))$ is the spectrum (the set of eigenvalues) of $\hat{H}_{\mathrm{loc}}(a, b, \theta)$. Concretely, we have
\begin{align*}
\sigma_{\mathrm{ess}}(\hat{H}^\sharp_{\mathrm{loc}}(a, b))
&= 
[- \sqrt{b^2 + (\lvert a \rvert + 1)^2}, 
- \sqrt{b^2 + (\lvert a \rvert - 1)^2}] \\
& \quad
\cup
[\sqrt{b^2 + (\lvert a \rvert - 1)^2}], 
\sqrt{b^2 + (\lvert a \rvert + 1)^2}].
\end{align*}
As a result, we have $0 \in \sigma_{\mathrm{ess}}(\hat{H}^\sharp_{\mathrm{loc}}(a, b))$ if and only if $(a, b) = (\pm 1, 0)$. The other zero spectra are eigenvalues:

\begin{prop} \label{prop:spectral_data}
There are $\hat{\psi} \in L^2_+(S^1, \C^2)$ and $E \in \R$ such that $(\hat{H}^\sharp_{\mathrm{loc}}(a, b) - E) \hat{\psi} = 0$ only in the following cases.
\begin{itemize}
\item
The case that $a = 0$ and $E = \pm \sqrt{b^2 + 1}$. In this case, the space of $\hat{\psi} \in L^2_+(S^1, \C^2)$ such that $(\hat{H}^\sharp_{\mathrm{loc}}(a, b) - E) \hat{\psi} = 0$ is infinite-dimensional.

\item
The case that $\lvert a \rvert < 1$ and $E = b$. In this case, the space of $\hat{\psi} \in L^2_+(S^1, \C^2)$ such that $(\hat{H}^\sharp_{\mathrm{loc}}(a, b) - E) \hat{\psi} = 0$ is $1$-dimensional.

\end{itemize}
\end{prop}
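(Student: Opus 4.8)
The plan is to analyze the eigenvalue equation $(\hat{H}^\sharp_{\mathrm{loc}}(a,b) - E)\hat\psi = 0$ directly by expanding $\hat\psi$ in Fourier modes on $S^1$, exploiting the fact that $\hat H_{\mathrm{loc}}(a,b,\theta)$ depends on $\theta$ only through the single matrix entry $e^{i\theta}$. Write $\hat\psi(\theta) = \sum_{n\ge 1}\binom{c_n}{d_n}e^{in\theta}$ with $(c_n),(d_n)\in\ell^2$. The multiplication operator $\hat H_{\mathrm{loc}}(a,b,\cdot)$ shifts Fourier modes by $\pm 1$: the off-diagonal entry $a + e^{i\theta}$ maps mode $n$ to modes $n$ and $n+1$, while $a + e^{-i\theta}$ maps mode $n$ to modes $n$ and $n-1$. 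After applying the compression $\hat P$ (which discards all modes $\le 0$), I would obtain a system of recurrence relations for the coefficients. Explicitly, comparing the coefficient of $e^{in\theta}$ for each $n\ge 1$ gives something like
\begin{align*}
(b - E)c_n + a\,d_n + d_{n+1} &= 0, \\
a\,c_n + c_{n-1} - (b + E)d_n &= 0,
\end{align*}
for $n\ge 2$, together with a modified equation at $n=1$ where the term $c_0$ is absent (this is precisely where the compression, i.e.\ the edge, enters).

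Next I would solve these recurrences. Treating them as a linear difference system, I would look for solutions of geometric form $c_n = C z^n$, $d_n = D z^n$; plugging in yields a characteristic condition relating $z$, $E$, $a$, $b$. The $\ell^2$ (in fact $L^2_+$) condition forces $|z| < 1$, so one retains only the decaying branch(es). The boundary equation at $n=1$ then becomes a single scalar constraint, and the statement is that this constraint has a solution only in the two listed regimes. For the generic branch I expect to find that a normalizable solution exists iff $E = b$ and $|a| < 1$ (the decay rate turning out to be $|z| = |a| < 1$), giving a one-dimensional eigenspace — this is the Fermi-arc edge mode. The exceptional case $a = 0$ is where the off-diagonal entry becomes simply $e^{i\theta}$, i.e.\ a pure shift (the unilateral shift $S$): then $\hat H^\sharp_{\mathrm{loc}}(0,b) = b\sigma_z \otimes 1 + \sigma_x\otimes S$-type combination splits or becomes explicitly diagonalizable, and one sees that $E = \pm\sqrt{b^2+1}$ are eigenvalues of infinite multiplicity because the shift has a large "co-invariant" structure — concretely, for any $\hat\psi$ supported in modes $\ge 2$ one can adjust to solve the equation, or the operator restricted to an invariant subspace has these values as point spectrum. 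I would verify the infinite-dimensionality by exhibiting an explicit infinite linearly independent family (e.g.\ translates of a single solution under the shift, which stay in $L^2_+$).

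Finally I would check completeness of the case analysis: outside $a = 0$ and outside $|a|<1$, i.e.\ for $|a|\ge 1$ with $a\ne 0$, together with the already-known essential spectrum computation, no further point spectrum arises — the recurrence's decaying solution fails the boundary condition for every $E$ not in the essential spectrum, and values of $E$ in the interior gaps simply do not support normalizable solutions. One must also confirm that in the case $|a|<1$, $E=b$ genuinely avoids the essential spectrum (indeed $b$ lies strictly between $\sqrt{b^2 + (|a|-1)^2}$ and its negative since $|a|-1 \ne 0$), so it is an honest eigenvalue in a spectral gap. The main obstacle I anticipate is handling the $a=0$ degenerate case cleanly: the jump from a one-dimensional eigenspace to an infinite-dimensional one signals a genuine change in the operator's structure (loss of Fredholmness of $\hat H^\sharp_{\mathrm{loc}}(a,b) - E$ at these points), so the geometric-series ansatz degenerates and I would instead argue via the explicit shift-operator description, or by a separate direct computation, rather than by taking a limit of the $a\ne 0$ analysis. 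Since Proposition~\ref{prop:spectral_data}'s proof is explicitly deferred to \S\ref{sec:spectral_analysis}, I expect the actual argument to be a somewhat lengthy but elementary Toeplitz/Hardy-space computation along these lines.
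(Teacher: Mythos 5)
Your proposal follows essentially the same route as the paper's proof in \S\ref{sec:spectral_analysis}: Fourier transform to a half-line recurrence (a Toeplitz/transfer-matrix computation), solve the bulk recurrence by a geometric ansatz keeping only the contracting branch with $|z|=|a|$, impose the single scalar boundary constraint at $n=1$ to get $E=b$ and $|a|<1$, and treat the degenerate case $a=0$ separately to exhibit the infinite-dimensional eigenspaces at $E=\pm\sqrt{b^2+1}$. The only slip is that your displayed recurrences have the shifts transposed (the coefficient of $e^{in\theta}$ in the first component receives $d_{n-1}$, not $d_{n+1}$, so the absent boundary term is $d_0$ rather than $c_0$), but since you flagged them as schematic this does not affect the argument.
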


The proof of this proposition, which is elementary,  will be given in \S\ref{sec:spectral_analysis}.

\begin{cor}
The spectrum of $\hat{H}^\sharp_{\mathrm{loc}}(a, b)$ contains zero, $0 \in \sigma(\hat{H}^\sharp_{\mathrm{loc}}(a, b))$, if and only if $(a, b) \in [-1, 1] \times \{ 0 \}$. 
\end{cor}

%%%%%%%%%%%%%%%%%%%%%%%%%%%%%%%%%%%%%%%%%%%%%%%%
\subsection{Proof of bulk-edge correspondence for topological insulators}

As an application of Proposition \ref{prop:spectral_data}, we show the bulk-edge correspondence for topological insulators as mentioned in the introduction. This result will be used in the sequel. Our proof appeals to basic topology, while more analytic ones are available in \cite{Bra,Haya}, for instance.

\medskip

Let $\hat{H} : T^2 \to \mathrm{Herm}(\C^r)^*$ be a continuous map from the $2$-dimensional torus $T^2 = \R^2/2\pi \Z^2$ to the space $\mathrm{Herm}(\C^r)^*$ of invertible $r$ by $r$ Hermitian matrices. We assume that $\hat{H}(k)$ has both positive and negative eigenvalues at each $k \in T^2$. Associated to $\hat{H}$ is the Bloch vector bundle $E_{\hat{H}} \to T^2$ whose fiber at $k \in T^2$ consists of negative eigenvectors of $\hat{H}(k)$
$$
E_{\hat{H}} = \bigcup_{k \in T^2} 
\bigoplus_{\lambda < 0} \mathrm{Ker}(\hat{H}(k) - \lambda)
\subset T^2 \times \C^r.
$$
By evaluating the fundamental class $[T^2] \in H_2(T^2; \Z)$ of $T^2$ by the first Chern class $c_1(E_{\hat{H}}) \in H^2(T^2; \Z)$ of the Bloch bundle, we get the first Chern number of $E_{\hat{H}}$, which we denote by
$$
c_1(\hat{H}) = \langle c_1(E_{\hat{H}}), [T^2] \rangle \in \Z.
$$

\medskip

Let $L^2(S^1, \C^r)$ be the space of $L^2$-functions on $S^1$ with values in $\C^r$, and $L^2_+(S^1, \C^r)$ the closed subspace consisting of $L^2$-functions whose Fourier expansions involve positive modes only. For $k_x \in S^1$, we define $\hat{H}^\sharp(k_x) : L^2_+(S^1, \C^r) \to L^2_+(S^1, \C^r)$ to be the compression of the multiplication operator with $\hat{H}(k_x, \cdot) : S^1 \to \mathrm{Herm}(\C^r)^*$, namely, the composition of
$$
L^2_+(S^1, \C^r) \overset{\hat{P}^*}{\longrightarrow}
L^2(S^1, \C^r) \overset{\hat{H}(k_x, \cdot)}{\longrightarrow}
L^2(S^1, \C^r) \overset{\hat{P}}{\longrightarrow}
L^2_+(S^1, \C^r),
$$
where $\hat{P}$ is the orthogonal projection. Then $\hat{H}^\sharp(k_x)$ is a self-adjoint bounded Fredholm operator. By the assumption that $\hat{H}(k_x, k_y)$ has both positive and negative eigenvalues, $\hat{H}^\sharp(k_x)$ is neither essentially positive nor negative, in the sense of \cite{A-Si}. As a consequence, we have a continuous map $\hat{H}^\sharp : S^1 \to \mathrm{Fred}^1$, where $\mathrm{Fred}^1 = \mathrm{Fred}^1(L^2_+(S^1, \C^r))$ is the set of self-adjoint bounded Fredholm operators which are neither essentially positive nor negative, topologized by the operator norm. It is well-known that $\mathrm{Fred}^1$ is a model of the classifying space of the odd $K$-theory, and the set $[S^1, \mathrm{Fred}^1]$ of homotopy classes of continuous maps $S^1 \to \mathrm{Fred}^1$, which is the odd $K$-theory $K^1(S^1)$ of the circle, is identified with $\Z$. This identification is induced by the \textit{spectral flow} \cite{APS}. Intuitively, the spectral flow $\mathrm{sf}(A) \in \Z$ of a continuous map $A : S^1 \to \mathrm{Fred}^1$ counts how many times the eigenvalues of $A(k)$ changes from negative to positive as $k \in S^1$ goes around the circle. A rigours definition can be found in \cite{Phi}.

\begin{prop} \label{prop:bulk_edge_for_2d_insulators}
If $\hat{H} : T^2 \to \mathrm{Herm}(\C^r)^*$ is a continuous map such that $\hat{H}(k)$ admits both positive and negative eigenvalues at each $k \in T^2$, then $c_1(\hat{H}) = - \mathrm{sf}(\hat{H}^\sharp)$. 
\end{prop}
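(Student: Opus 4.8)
\emph{Plan of proof.} The idea is to reduce, via homotopy invariance and additivity, to a single explicit two-band model, and then to read off the edge spectral flow of that model directly from Proposition~\ref{prop:spectral_data}.

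First I would record the formal properties shared by the two assignments $\hat H \mapsto c_1(\hat H)$ and $\hat H \mapsto -\mathrm{sf}(\hat H^\sharp)$, defined on continuous maps $\hat H : T^2 \to \mathrm{Herm}(\C^r)^*$ all of whose values have both positive and negative eigenvalues. (i) Both are invariant under homotopies through such maps: for $c_1$ because the Bloch bundle $E_{\hat H}$ then varies through a vector bundle, and for $-\mathrm{sf}$ because $\hat H \mapsto \hat H^\sharp$ is norm-continuous (compression is norm-decreasing) from such maps into $\mathrm{Map}(S^1, \mathrm{Fred}^1)$ and the spectral flow is a homotopy invariant there. (ii) Both are additive under $\oplus$, since $E_{\hat H_1 \oplus \hat H_2} = E_{\hat H_1} \oplus E_{\hat H_2}$ and, because the Fourier decomposition splits, $(\hat H_1 \oplus \hat H_2)^\sharp = \hat H_1^\sharp \oplus \hat H_2^\sharp$, while spectral flow is additive under direct sums. (iii) Both vanish on constant maps (a constant $\hat H$ has trivial Bloch bundle, and the corresponding $\hat H^\sharp$ is a constant invertible operator, hence has no spectral flow). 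Finally, flattening the spectrum ($\hat H \simeq 1 - 2 P_{E_{\hat H}}$ through invertible Hermitian matrices with the same negative eigenspace) shows that both invariants depend only on the isomorphism class of $E_{\hat H}$.

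Next I would invoke the classification of complex vector bundles over the $2$-complex $T^2$: every such bundle is isomorphic to $L_n \oplus \underline{\C}^{\,\mathrm{rk}-1}$, where $n$ is its Chern number and $L_n$ is the line bundle with $c_1 = n$; moreover $L_n$ embeds in $\underline{\C}^2$ (with complement $L_{-n}$), and any two embeddings of isomorphic bundles into a sufficiently large trivial bundle are homotopic. Combined with (i)--(iii): after stabilizing by a constant summand, any $\hat H$ with $c_1(\hat H)=n$ satisfies $\hat H \simeq \hat H_n \oplus (\text{const})$ where $\hat H_n := 1 - 2 P_{L_n}$ is a $2\times 2$ model with Bloch bundle $L_n$, so $c_1(\hat H) = c_1(\hat H_n)$ and $\mathrm{sf}(\hat H^\sharp) = \mathrm{sf}(\hat H_n^\sharp)$. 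Since $L_n \oplus L_m \cong L_{n+m} \oplus \underline{\C}$, one gets $\hat H_n \oplus \hat H_m \simeq \hat H_{n+m} \oplus (\text{const})$, so $n \mapsto c_1(\hat H_n)$ and $n \mapsto -\mathrm{sf}(\hat H_n^\sharp)$ are additive maps $\Z \to \Z$; the first is the identity. Hence the whole statement reduces to the single identity $-\mathrm{sf}(\hat H_1^\sharp) = 1$ for one model $\hat H_1$ with $c_1=1$. For that I would take $\hat H_1 = \hat H_{\mathrm{QWZ}}$ with $0<u<2$: conjugating by the normalized Hadamard matrix (which swaps $\sigma_x \leftrightarrow \sigma_z$ and negates $\sigma_y$) identifies $\hat H_{\mathrm{QWZ}}(k_x,k_y)$ with $\hat H_{\mathrm{loc}}(u + \cos k_x,\ \sin k_x,\ k_y)$; since a constant conjugation and the choice $\theta = k_y$ of edge variable do not affect the compression, $\hat H_{\mathrm{QWZ}}^\sharp(k_x) \cong \hat H_{\mathrm{loc}}^\sharp(u+\cos k_x, \sin k_x)$, so $\mathrm{sf}(\hat H_{\mathrm{QWZ}}^\sharp)$ is the spectral flow along the loop $k_x \mapsto (a,b) = (u+\cos k_x, \sin k_x)$. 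For $0<u<2$ this loop avoids $(\pm1,0)$, the family stays Fredholm, and $0$ never enters the essential spectrum; by Proposition~\ref{prop:spectral_data} the only contribution comes from the discrete eigenvalue $E=b$ present precisely on the arc where $|a|<1$, i.e.\ $\cos k_x < 1-u$, and there $E = \sin k_x$ runs from a positive value at one endpoint (where it is absorbed into the positive essential band) through $0$ at $k_x = \pi$ to a negative value at the other endpoint, a single transversal crossing with negative slope. This yields $\mathrm{sf}(\hat H_{\mathrm{QWZ}}^\sharp) = -1$, i.e.\ $-\mathrm{sf}(\hat H_{\mathrm{QWZ}}^\sharp) = 1 = c_1(\hat H_{\mathrm{QWZ}})$, completing the proof.

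The main obstacle is the final computation: extracting from Proposition~\ref{prop:spectral_data} that along the circle $C_u$ the discrete branch $E=b$ on $\{|a|<1\}$ contributes exactly one zero-crossing, that it is transversal, that it emerges from and merges into the essential spectrum at the two endpoints of the arc without producing spurious crossings, and that its sign agrees with the spectral-flow convention of \cite{Phi}. The homotopy-theoretic reduction in the first two paragraphs is routine bundle theory over a surface together with the homotopy invariance and additivity already noted.
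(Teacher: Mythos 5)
Your proposal is correct, and its backbone --- homotopy invariance of both sides, reduction to an explicit two-band model, conjugation by the Hadamard matrix to bring the QWZ Hamiltonian into the form $\hat H_{\mathrm{loc}}(u+\cos k_x,\sin k_x,k_y)$, and reading the spectral flow off Proposition~\ref{prop:spectral_data} --- is exactly the paper's. Where you diverge is in the reduction step. The paper proves directly that $[T^2,\mathrm{Herm}_p(\C^r)]\cong\Z$ is detected by $c_1$ (Lemma~\ref{lem:homotopy_classification}, via the CW structure of $T^2$, simple connectivity of the Grassmannian and $\pi_2\cong\Z$), realizes every class by the explicit family $\hat H_n=\hat H_1\circ f_n$ with $f_n(k_x,k_y)=(nk_x,k_y)$, and then computes $\mathrm{sf}(\hat H_n^\sharp)=-n$ for \emph{all} $n$ at once (the local-model eigenvalue branch $E=\sin nk_x$ produces $n$ downward crossings). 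You instead invoke the classification of bundles over a surface plus stabilization, and use additivity of both invariants under $\oplus$ together with $L_n\oplus L_m\cong L_{n+m}\oplus\underline{\C}$ to reduce to the single computation $n=1$. Both routes are sound; the paper's avoids the stabilization/embedding bookkeeping at the cost of one extra explicit family, while yours needs only one spectral-flow computation but leans on the (standard, though unstated in the paper) facts that homotopy classes of projections correspond to isomorphism classes of subbundles after stabilization. One small point worth noting: the paper performs the spectral-flow computation for $1<u<2$, so that $a=u+\cos k_x$ never vanishes and the infinite-multiplicity eigenvalues of Proposition~\ref{prop:spectral_data} at $a=0$ never arise; your choice $0<u<2$ also works, since those eigenvalues sit at $E=\pm\sqrt{b^2+1}$ and never cross zero, but you should say so explicitly.
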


The proposition will be shown based on some lemmas below. 

\begin{lem} \label{lem:Chern_number_of_Chern_insulator}
For any integer $n \in \Z$, we define $\hat{H}_n : T^2 \to \mathrm{Herm}(\C^2)$ by
\begin{align*}
\hat{H}_n(k_x, k_y)
&= 
\sin n k_x \sigma_x + \sin k_y \sigma_y 
+ (u + \cos n k_x + \cos k_y)\sigma_z \\
&=
\left(
\begin{array}{cc}
u + \cos n k_x + \cos k_y & \sin n k_x - i \sin k_y \\
\sin n k_x + i \sin k_y & - u - \cos n k_x - \cos k_y
\end{array}
\right),
\end{align*}
where $u \in \R$ is a parameter. If $0 < u < 2$, then $\hat{H}_n(k_x, k_y)$ is invertible for each $(k_x, k_y) \in T^2$ and we have $c_1(\hat{H}_n) = n$.
\end{lem}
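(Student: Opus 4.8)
The plan is to reduce the statement to the case $n=1$, which is the Qi--Wu--Zhang model already recalled in the introduction. First I would settle invertibility. Writing $\hat H_n(k_x,k_y)=d_x\sigma_x+d_y\sigma_y+d_z\sigma_z$ with $d_x=\sin nk_x$, $d_y=\sin k_y$ and $d_z=u+\cos nk_x+\cos k_y$, the eigenvalues of $\hat H_n(k_x,k_y)$ are $\pm\sqrt{d_x^2+d_y^2+d_z^2}$, so $\hat H_n(k_x,k_y)$ is singular precisely when $d_x=d_y=d_z=0$. The equations $\sin nk_x=\sin k_y=0$ force $\cos nk_x,\cos k_y\in\{\pm1\}$, whence $d_z\in\{u-2,\,u,\,u+2\}$, and for $0<u<2$ none of these equals $0$. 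Hence $\hat H_n$ is everywhere invertible (and has one positive and one negative eigenvalue at each point), so the Bloch bundle $E_{\hat H_n}\to T^2$ and the number $c_1(\hat H_n)$ are defined.

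Next I would exploit the factorization $\hat H_n=\hat H_1\circ\phi_n$, where $\phi_n:T^2\to T^2$ is the smooth map $\phi_n(k_x,k_y)=(nk_x,k_y)$, which is well defined because $n\in\Z$. Being the product of the degree-$n$ self-map $k_x\mapsto nk_x$ of $S^1$ with the identity of $S^1$, the map $\phi_n$ has degree $n$, i.e.\ $(\phi_n)_*[T^2]=n\,[T^2]$ in $H_2(T^2;\Z)$. Since the negative-eigenspace bundle pulls back naturally along continuous maps, $E_{\hat H_n}=\phi_n^*E_{\hat H_1}$, and therefore
\[
c_1(\hat H_n)=\langle\phi_n^*c_1(E_{\hat H_1}),[T^2]\rangle=\langle c_1(E_{\hat H_1}),(\phi_n)_*[T^2]\rangle=n\cdot c_1(\hat H_1).
\]
So the whole lemma comes down to the single identity $c_1(\hat H_1)=1$.

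For $n=1$ the matrix $\hat H_1$ is exactly $\hat H_{\mathrm{QWZ}}$, whose Bloch bundle is well known to have first Chern number $1$ in the regime $0<u<2$ (see \cite{AOP}, and the discussion in \S\ref{sec:introduction}). If one prefers a self-contained argument, one identifies $E_{\hat H_1}$ with the pullback $g^*L_-$ of the negative-eigenspace (``monopole'') line bundle $L_-\to S^2$ of $\vec n\cdot\vec\sigma$ under the Gauss map $g=\vec d/\lVert\vec d\rVert:T^2\to S^2$, so that $c_1(\hat H_1)=\deg(g)\cdot\langle c_1(L_-),[S^2]\rangle$; here $\langle c_1(L_-),[S^2]\rangle=\pm1$ by the standard transition-function computation over the two hemispheres, and $\deg(g)=\mp1$ is the signed count of preimages of a regular value (the north pole, whose preimage consists of the three points of $\{0,\pi\}^2$ at which $u+\cos k_x+\cos k_y>0$, with local degrees read off from the linearization of $\vec d$). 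I do not expect any real obstacle here: this is essentially a warm-up lemma in the language of differential topology, and the only delicate point is the bookkeeping of orientation conventions in the $n=1$ case; once $c_1(\hat H_1)=1$ is granted, naturality of Chern classes under $\phi_n$ finishes the proof.
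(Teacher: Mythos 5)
Your argument is correct and follows essentially the same route as the paper: verify invertibility directly, cite the known value $c_1(\hat H_1)=1$ for the QWZ model from \cite{AOP}, and then use the factorization $\hat H_n=\hat H_1\circ f_n$ together with the fact that $f_n$ multiplies the fundamental class by $n$. Your extra details (the eigenvalue computation for invertibility and the optional Gauss-map argument for $n=1$) only elaborate on steps the paper treats as immediate or cites.
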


\begin{proof}
It is clear that $\det \hat{H}_n(k_x, k_y) \neq 0$ whenever $u \neq 0, \pm 2$. In the case that $n = 1$, the map $H_1$ is the QWZ model of Chern insulators \cite{AOP}, and its Chern number is $c_1(\hat{H}_1) = 1$ for $0 < u < 2$. The map $H_n$ factors as $\hat{H}_n = \hat{H}_1 \circ f_n$, where $f_n : T^2 \to T^2$ is given by $f_n(k_x, k_y) = (n k_x, k_y)$. Since $f_n$ carries the fundamental class $[T^2] \in H_2(T^2; \Z) \cong \Z$ of the torus to $n [T^2]$, it follows that $c_1(\hat{H}_n) = n$.
\end{proof}

Tentatively, we write $\mathrm{Herm}_p(\C^r)$ for the space of invertible Hermitian matrices which have $p$ negative eigenvalues and $r - p$ positive eigenvalues.

\begin{lem}
$\mathrm{Herm}_p(\C^r)$ is homotopy equivalent to the Grassmannian $\mathrm{Gr}_p(\C^r)$ consisting of $p$-dimensional linear subspaces in $\C^r$. Thus, if $p \neq 0, r$, then we have the following homotopy groups
\begin{align*}
\pi_0(\mathrm{Herm}_p(\C^r)) &\cong 0, &
\pi_1(\mathrm{Herm}_p(\C^r)) &\cong 0, &
\pi_2(\mathrm{Herm}_p(\C^r)) &\cong \Z.
\end{align*}
\end{lem}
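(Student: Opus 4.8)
The plan is to exhibit a deformation retraction of $\mathrm{Herm}_p(\C^r)$ onto a subspace canonically homeomorphic to $\mathrm{Gr}_p(\C^r)$, and then read off the homotopy groups from the known ones of the complex Grassmannian. First I would send an invertible Hermitian matrix $A$ with $p$ negative eigenvalues to the orthogonal projection $P_A$ onto the (negative) subspace $V_A = \bigoplus_{\lambda < 0}\ker(A-\lambda)$; this $V_A$ is a $p$-dimensional subspace of $\C^r$, so $A \mapsto V_A$ is a well-defined map $\mathrm{Herm}_p(\C^r) \to \mathrm{Gr}_p(\C^r)$. Continuity follows because $V_A$ is cut out by a spectral projection $P_A = \tfrac{1}{2\pi i}\oint_{\gamma} (z - A)^{-1}\,dz$ for a contour $\gamma$ enclosing the negative part of the spectrum; since the spectrum varies continuously and $0$ is never an eigenvalue, such a contour can be chosen locally constant, making $A \mapsto P_A$ (hence $A \mapsto V_A$) continuous.

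Conversely, a $p$-dimensional subspace $V$ determines the Hermitian involution $J_V = (I - P_V) - P_V = I - 2P_V$, which lies in $\mathrm{Herm}_p(\C^r)$ and satisfies $V_{J_V} = V$; this gives a section $\mathrm{Gr}_p(\C^r) \to \mathrm{Herm}_p(\C^r)$ of the previous map. The homotopy $A_t = (1-t)A + t\,\phi(A)$, where $\phi(A)$ is the Hermitian matrix with the same eigenspaces as $A$ but all negative eigenvalues replaced by $-1$ and all positive eigenvalues by $+1$, interpolates between the identity and $J_{V_A} \circ (A \mapsto V_A)$; one checks $A_t$ is invertible with exactly $p$ negative eigenvalues for all $t \in [0,1]$ (the eigenvalues move monotonically and never cross zero), so $A_t$ stays in $\mathrm{Herm}_p(\C^r)$. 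Thus $\mathrm{Herm}_p(\C^r)$ deformation retracts onto the image of the section, which is homeomorphic to $\mathrm{Gr}_p(\C^r)$, establishing the homotopy equivalence.

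Given the equivalence, the homotopy groups follow from standard facts about $\mathrm{Gr}_p(\C^r)$: it is simply connected (so $\pi_0 = \pi_1 = 0$), and $\pi_2(\mathrm{Gr}_p(\C^r)) \cong \Z$ for $0 < p < r$, generated by a line in a fixed $\C^2 \subset \C^r$, detected by the first Chern class of the tautological bundle. The only genuinely delicate point is verifying continuity of $A \mapsto V_A$ and the claim that the straight-line-type homotopy keeps the eigenvalue count fixed; both hinge on the fact that $0 \notin \sigma(A)$ throughout, which lets us separate the spectrum into its positive and negative parts by a contour that is locally constant in $A$. I expect this spectral-continuity bookkeeping to be the main (though routine) obstacle, and everything else to be formal.
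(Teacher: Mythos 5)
Your proposal is correct and follows essentially the same route as the paper: both identify the deformation retract of $\mathrm{Herm}_p(\C^r)$ with the set of self-adjoint involutions $A/\lvert A\rvert$, identify that set with $\mathrm{Gr}_p(\C^r)$ via the negative spectral projection, and then quote the homotopy groups of the Grassmannian. The only cosmetic difference is the retraction formula — you use the linear interpolation $(1-t)A + t\,\phi(A)$ (valid since $A$ and $\phi(A)$ commute, so eigenvalues never cross $0$), while the paper uses $A/\lvert A\rvert^{t}$ — and your continuity discussion via the Riesz projection is a welcome elaboration of a point the paper leaves implicit.
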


\begin{proof}
Given a matrix $\hat{H} \in \mathrm{Herm}_p(\C^r)$, we have a self-adjoint involution $\eta = \hat{H}/\lvert \hat{H} \rvert$. A self-adjoint involution $\eta$ corresponds bijectively to an orthogonal projection $p = 1 - 2\eta$, and an orthogonal projection to a linear subspace in $\C^r$. Hence $\mathrm{Herm}_p(\C^r)$ contains as a subspace the Grassmannian $\mathrm{Gr}_p(\C^r)$ consisting of $p$-dimensional linear subspaces in $\C^r$. The subspace is a deformation retract, since $\hat{H}$ and $\eta$ are homotopic by the homotopy $\hat{H}/\lvert \hat{H} \vert^t$, ($t \in [0, 1]$). 
\end{proof}

\begin{lem} \label{lem:homotopy_classification}
If $p \neq 0, r$, then the set $[T^2, \mathrm{Herm}_p(\C^r)]$ of homotopy classes of continuous maps $T^2 \to \mathrm{Herm}_p(\C^r)$ is identified with $\Z$. This identification is induced by the assignment $\hat{H} \mapsto c_1(\hat{H})$ of the Chern number.
\end{lem}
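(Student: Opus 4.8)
The plan is to reduce everything to a computation on the Grassmannian. By the previous lemma, $\mathrm{Herm}_p(\C^r)$ is homotopy equivalent to $G := \mathrm{Gr}_p(\C^r)$, which (since $p \neq 0, r$) is simply connected with $\pi_2(G) \cong \Z$. Hurewicz gives $\pi_2(G) \cong H_2(G; \Z) \cong \Z$, and it is standard that $H^2(G; \Z) \cong \Z$, generated by $x := c_1(S)$ where $S \to G$ is the tautological $p$-plane bundle (one checks $x$ is a generator by restricting along an embedded projective line $\C P^1 \hookrightarrow G$). Because $\pi_1(G) = 0$, free and based homotopy classes coincide, so $[S^2, G] = \pi_2(G)$, and I would record that the assignment $g \mapsto \langle g^* x, [S^2] \rangle$ is a bijection $[S^2, G] \to \Z$: it is the Hurewicz isomorphism composed with the universal-coefficient isomorphism $H_2(G; \Z) \cong \mathrm{Hom}(H^2(G;\Z), \Z)$.

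Next I would compare $[T^2, G]$ with $[S^2, G]$. Give $T^2$ its usual CW structure with $1$-skeleton $S^1 \vee S^1$ and a single $2$-cell, and let $q : T^2 \to T^2/(S^1 \vee S^1) = S^2$ be the collapse map. The Puppe cofiber sequence yields an exact sequence of pointed sets $[S^2, G] \xrightarrow{q^*} [T^2, G] \to [S^1 \vee S^1, G]$; since $[S^1 \vee S^1, G] = \pi_1(G) \times \pi_1(G) = 0$, the map $q^*$ is surjective. For injectivity I would use that $q_*[T^2] = \pm [S^2]$ (the collapse map has degree $\pm 1$), so that $q^* : H^2(S^2; \Z) \to H^2(T^2; \Z)$ is injective: if $g_0 \circ q \simeq g_1 \circ q$, then $q^*(g_0^* x) = q^*(g_1^* x)$, hence $g_0^* x = g_1^* x$ in $H^2(S^2; \Z)$, hence $g_0 \simeq g_1$ by the first paragraph. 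Thus $q^*$ is a bijection, and (using $q_*[T^2] = \pm[S^2]$ again) the assignment $f \mapsto \langle f^* x, [T^2] \rangle$ is a bijection $[T^2, G] \to \Z$.

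Finally I would transport this back along the deformation retraction $\rho : \mathrm{Herm}_p(\C^r) \to G$, which sends a matrix to its negative eigenspace; then $\rho^* S$ is the negative-eigenspace bundle over $\mathrm{Herm}_p(\C^r)$, so $(\rho \circ \hat{H})^* S = E_{\hat{H}}$ for every $\hat{H} : T^2 \to \mathrm{Herm}_p(\C^r)$, giving $\langle (\rho \circ \hat{H})^* x, [T^2] \rangle = \langle c_1(E_{\hat{H}}), [T^2] \rangle = c_1(\hat{H})$. Hence $[\hat{H}]$ corresponds to $\pm c_1(\hat{H})$ under the bijection $[T^2, \mathrm{Herm}_p(\C^r)] \cong \Z$, so $\hat{H} \mapsto c_1(\hat{H})$ is itself a bijection onto $\Z$; its surjectivity can alternatively be seen directly from Lemma \ref{lem:Chern_number_of_Chern_insulator} by forming $\hat{H}_n \oplus (\text{constant})$. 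The one step with genuine content is the injectivity of $q^* : [S^2, G] \to [T^2, G]$ — that a map $T^2 \to G$ is detected up to homotopy by its pullback of $x$; the cohomological argument above handles it, and an alternative is obstruction theory, since the Postnikov truncation $G \to K(\Z, 2)$ is $3$-connected and hence induces a bijection on $[K, -]$ for any $2$-complex $K$. The remaining steps are formal.
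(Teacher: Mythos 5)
Your proposal is correct and follows essentially the same route as the paper: both reduce to $[S^2,\mathrm{Gr}_p(\C^r)]\cong\pi_2(\mathrm{Gr}_p(\C^r))\cong\Z$ by collapsing the $1$-skeleton $S^1\vee S^1$ of $T^2$ (the paper via simple connectivity plus the homotopy extension property, you via the Puppe sequence), and both identify the resulting integer with the Chern number of the tautological bundle and obtain surjectivity from the explicit models $\hat{H}_n$. Your explicit cohomological argument for the injectivity of $q^*:[S^2,G]\to[T^2,G]$ is a point the paper's proof passes over more quickly, but it is the same underlying obstruction-theoretic fact.
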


\begin{proof}
Suppose that we have two continuous maps $\hat{H}, \hat{H}' : T^2 \to \mathrm{Herm}_p(\C^r)$ such that $c_1(\hat{H}) = c_1(\hat{H}')$. We shall show that $\hat{H}$ and $\hat{H}'$ are homotopic. For this aim, we notice that any continuous map $\hat{H} : T^2 \to \mathrm{Herm}_p(\C^r)$ is homotopic to $\hat{F} : T^2 \to \mathrm{Herm}_p(\C^r)$ whose restriction to the subspace $S^1 \vee S^1 = (S^1 \times \{ 0 \}) \cup (\{ 0 \} \times S^1)$ in $T^2$ is the constant map $c_p$ with values in a point $p \in \mathrm{Herm}_p(\C^r)$. The reason is as follows: Because $\mathrm{Herm}_p(\C^r)$ is simply connected, there exists a homotopy between the restriction of $\hat{H}$ to $S^1 \vee S^1$ and the constant map $c_p$. Apparently, the torus $T^2$ has the structure of a $2$-dimensional CW complex whose $1$-skeleton is $S^1 \vee S^1$. Then, applying the homotopy extension property, we can extend the homotopy on $S^1 \vee S^1$ to that on $T^2$. As a consequence, we can assume that the restrictions of $\hat{H}$ and $\hat{H}'$ to $S^1 \vee S^1$ are the constant map $c_p$. Then there is a bijective correspondence between continuous maps $T^2 \to \mathrm{Herm}_p(\C^r)$ whose restrictions to $S^1 \vee S^1$ are constant maps and continuous maps $T^2/(S^1 \vee S^1) \to \mathrm{Herm}_p(\C^r)$, where $T^2/(S^1 \vee S^1)$ is the space given by collapsing the subspace $S^1 \vee S^1$, which is homeomorphic to the sphere $S^2$. To summarize, we have the following bijections
\begin{align*}
[T^2/(S^1 \vee S^1), \mathrm{Herm}_p(\C^r)] 
&\cong [S^2, \mathrm{Herm}_p(\C^r)] \\
&\cong \pi_2(\mathrm{Herm}_p(\C^r))
\cong H_2(\mathrm{Herm}_p(\C^r); \Z) \cong \Z.
\end{align*}
The last bijection to $\Z$ is induced from the evaluation of the first Chern class of the tautological (Bloch) vector bundle on $\mathrm{Herm}_p(\C^r) \simeq \mathrm{Gr}_p(\C^r)$. Consequently,  $c_1(\hat{H}) = c_1(\hat{H}')$ implies that $\hat{H}$ and $\hat{H}'$ are homotopic.

To complete the proof, we need to show that there is $\hat{F}_n : T^2 \to \mathrm{Herm}_p(\C^r)$ such that $c_1(\hat{F}_n) = n$ for any given integer $n \in \Z$. Thanks to Lemma \ref{lem:Chern_number_of_Chern_insulator}, this is clear in the case that $p = 1$ and $r = 2$. In the general case, an example of $\hat{F}_n$ can be constructed as $\hat{F}_n(k) = \hat{H}_n(k) \oplus (-1_{\C^{p-1}}) \oplus 1_{\C^{r - p - 1}}$, where $\hat{H}_n$ is given in Lemma \ref{lem:Chern_number_of_Chern_insulator} with the parameter $0 < u < 2$, and $1_{\C^k}$ is the (constant map with its value) identity matrix on $\C^k$. 
\end{proof}

\begin{lem} \label{lem:spectral_flow_of_Chern_insulator}
For $n \in \Z$ and $1 < u < 2$, we have $\mathrm{sf}(\hat{H}^\sharp_n) = -n$, where $\hat{H}_n$ is the map given in Lemma \ref{lem:Chern_number_of_Chern_insulator}.
\end{lem}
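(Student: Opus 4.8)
The plan is to compute $\mathrm{sf}(\hat{H}^\sharp_n)$ directly from the spectral information supplied by Proposition~\ref{prop:spectral_data}, after reducing $\hat{H}_n$ to the local model. First I would perform the coordinate change $k_x \leftrightarrow k_z$ and a constant orthogonal conjugation $T$ (the matrix that swaps the roles of $\sigma_x$ and $\sigma_z$, up to sign) so that $\hat{H}_n(k_x,k_y)$, viewed as a family over the $k_x$-circle of multiplication operators in the $k_y$-variable, becomes unitarily equivalent to $\hat{H}^\sharp_{\mathrm{loc}}(a(k_x),b(k_x))$ with $a(k_x)=u+\cos n k_x$ and $b(k_x)=\sin\text{(appropriate variable)}$; more precisely the edge is taken in the $k_y$-direction, so after the swap the compression is with respect to the Fourier modes in $k_y$, and the remaining parameter running over $S^1$ is $k_x$. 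Concretely, $\hat{H}^\sharp_n(k_x)$ is conjugate to the Toeplitz-type operator $\hat{H}^\sharp_{\mathrm{loc}}(u+\cos n k_x,\ \cdot\,)$ evaluated along a loop; I would spell out the conjugating unitary and record that conjugation does not change the spectral flow.

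Next I would use the corollary to Proposition~\ref{prop:spectral_data}: the operator $\hat{H}^\sharp_{\mathrm{loc}}(a,b)$ has $0$ in its spectrum exactly when $(a,b)\in[-1,1]\times\{0\}$, and when $|a|<1$, $b$ small, the zero eigenvalue is simple with eigenvalue branch $E=b$ crossing transversally as $b$ passes through $0$. For $1<u<2$ the function $a(k_x)=u+\cos n k_x$ satisfies $|a(k_x)|<1$ on a union of arcs around the points where $\cos n k_x$ is most negative, and on each such arc the relevant $b$-coordinate (namely $\sin$ of the swapped variable evaluated appropriately — in fact after the reduction $b$ is essentially $0$ on the loop except that the genuine parameter forcing the crossing is how the loop in $(a,b)$-space winds) passes through $0$ transversally. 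The key point is that the loop $k_x\mapsto(a(k_x),b(k_x))$ in $\R^2$ crosses the segment $[-1,1]\times\{0\}$, and each transversal crossing contributes $\pm1$ to the spectral flow according to the sign of $db/dk_x$, via the simple eigenvalue branch $E=b$ from Proposition~\ref{prop:spectral_data}. Because $a(k_x)=u+\cos nk_x<1$ happens on exactly $n$ intervals (the preimage under $\cos$ of $(\,\cdot\,,1-u)$, wrapped $n$ times), and on each the accompanying $b$ makes one signed crossing with a consistent sign, the total is $\mathrm{sf}=\pm n$; matching conventions (and comparing with $n=1$, where $c_1(\hat{H}_1)=1$ and the edge of the QWZ model is classically known to have spectral flow $-1$) pins the sign down to $-n$.

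An alternative, cleaner route avoids eigenvalue bookkeeping: by Lemma~\ref{lem:homotopy_classification} the homotopy class of $\hat{H}_n:T^2\to\mathrm{Herm}_1(\C^2)$ is determined by $c_1(\hat{H}_n)=n$, and $\mathrm{sf}(\hat{H}^\sharp_n)$ depends only on the homotopy class of the loop $k_x\mapsto \hat{H}^\sharp_n(k_x)$ in $\mathrm{Fred}^1$, which in turn depends only on the homotopy class of $\hat{H}_n$ through the continuous assignment $\hat{H}\mapsto\hat{H}^\sharp$. Since $\hat{H}_n=\hat{H}_1\circ f_n$ with $f_n(k_x,k_y)=(nk_x,k_y)$ multiplying the fundamental class by $n$, and since spectral flow is additive under such a covering of the base circle, $\mathrm{sf}(\hat{H}^\sharp_n)=n\cdot\mathrm{sf}(\hat{H}^\sharp_1)$; so it suffices to establish $\mathrm{sf}(\hat{H}^\sharp_1)=-1$. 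That single case I would do by the explicit crossing computation above (or by quoting the standard Chern-insulator edge-mode calculation), using that for $n=1$ there is exactly one arc where $|u+\cos k_x|<1$ and a single transversal zero-eigenvalue crossing.

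The main obstacle I anticipate is the careful identification of the conjugating unitary and bookkeeping of the sign conventions: getting the orthogonal matrix $T$ right, tracking how the $k_x\leftrightarrow k_z$ swap interacts with which variable the Toeplitz compression is taken in, and fixing the orientation conventions in the definition of spectral flow so that the final answer is $-n$ rather than $+n$. The spectral analysis itself is handed to us by Proposition~\ref{prop:spectral_data}, and the homotopy-invariance/additivity argument is standard; it is precisely the reconciliation of these sign choices — and verifying that the contributions of the $n$ crossings do not cancel but add with the same sign — that requires care.
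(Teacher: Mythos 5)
Your proposal is correct and follows essentially the same route as the paper: conjugate $\hat{H}_n$ by the orthogonal matrix $T=\frac{1}{\sqrt 2}\bigl(\begin{smallmatrix}1&1\\1&-1\end{smallmatrix}\bigr)$ to get $\hat{H}_{\mathrm{loc}}(a_n(k_x),b_n(k_x),k_y)$ with $a_n(k_x)=u+\cos nk_x$ and $b_n(k_x)=\sin nk_x$ (this resolves your ``appropriate variable''), then use Proposition \ref{prop:spectral_data} to see that the only discrete eigenvalue branch is $E=\sin nk_x$ on the arcs where $\lvert a_n\rvert<1$, crossing zero at the $n$ points $k_x=\pi(2\ell-1)/n$ with slope $n\cos nk_x=-n$, so each crossing contributes $-1$. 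The only caveat is that no $k_x\leftrightarrow k_z$ swap is needed in this two-dimensional lemma, and the sign is pinned down directly by the slope of the branch rather than by appeal to the known QWZ edge calculation.
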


\begin{proof}
Let $T$ be the following matrix
$$
T =
\frac{1}{\sqrt{2}}
\left(
\begin{array}{rr}
1 & 1 \\
1 & -1
\end{array}
\right).
$$
By direct computation, we see
\begin{align*}
T^* \hat{H}_n(k_x, k_y) T
&=
\left(
\begin{array}{cc}
\sin n k_x & u + \cos nk_x + e^{ik_y} \\
u + \cos n k_x + e^{-ik_y} & - \sin n k_x
\end{array}
\right) \\
&=
\hat{H}_{\mathrm{loc}}(a_n(k_x), b_n(k_x), k_y),
\end{align*}
where $a_n(k_x) = u + \cos n k_x$ and $b_n(k_x) = \sin n k_x$. It follows that $\mathrm{sf}(\hat{H}^\sharp_n) = \mathrm{sf}(\hat{H}^\sharp)$, where we set $\hat{H}(k_x, k_y) = \hat{H}_{\mathrm{loc}}(a_n(k_x), b_n(k_x), k_y)$ to suppress notations. Now, we apply Proposition \ref{prop:spectral_data}. In the case of $n = 0$, the Fredholm operator $\hat{H}^\sharp(k_x)$ turns out to be invertible for all $k_x \in S^1$, so that $\mathrm{sf}(\hat{H}^\sharp) = 0$. In the case of $n > 0$, the Fredholm operator $\hat{H}^\sharp(k_x)$ has zero eigenvalues of multiplicity one at $k_x = \pi(2\ell - 1)/n$ for $\ell = 1, 2, \ldots, n$. Around these values of $k_x$, the eigenvalues of $\hat{H}^\sharp(k_x)$ are given by $E = \sin n k_x$, so that $\mathrm{sf}(\hat{H}^\sharp) = -n$. In the case of $n < 0$, we find $\mathrm{sf}(\hat{H}^\sharp) = -n$ in the same way.
\end{proof}

We are now in the position to prove Proposition \ref{prop:bulk_edge_for_2d_insulators}:

\begin{proof}[Proof of Proposition \ref{prop:bulk_edge_for_2d_insulators}]
Because of Lemma \ref{lem:homotopy_classification}, if a map $\hat{H} : T^2 \to \mathrm{Herm}(\C^r)^*$ satisfying the hypothesis of the proposition has the Chern number $c_1(\hat{H}) = n$, then $\hat{H}$ is homotopic to $\hat{F}_n$ constructed in the proof of Lemma \ref{lem:homotopy_classification}. Then the associated families of self-adjoint Fredholm operators $\hat{H}^\sharp : S^1 \to \mathrm{Fred}^1$ and $\hat{F}^\sharp_n : S^1 \to \mathrm{Fred}^1$ are also homotopic. Since the spectral flow is a homotopy invariant, it follows that $\mathrm{sf}(\hat{H}^\sharp) = \mathrm{sf}(\hat{F}^\sharp_n)$. By Lemma \ref{lem:spectral_flow_of_Chern_insulator}, we see $\mathrm{sf}(\hat{F}_n^\sharp) = \mathrm{sf}(\hat{H}_n^\sharp) = -n$. Hence $c_1(\hat{H}) = -\mathrm{sf}(\hat{H}^\sharp)$ is established.
\end{proof}

%%%%%%%%%%%%%%%%%%%%%%%%%%%%%%%%%%%%%%%%%%%%%%%%
%%%%%%%%%%%%%%%%%%%%%%%%%%%%%%%%%%%%%%%%%%%%%%%%

\section{Homological bulk-edge correspondence for Weyl semimetals}
\label{sec:Weyl_semimetal}

%%%%%%%%%%%%%%%%%%%%%%%%%%%%%%%%%%%%%%%%%%%%%%%%
\subsection{Relevant homology groups}

We summarize here homology groups relevant to Weyl semimetals as given in \cite{M-T1,M-T2}. All the cohomology and homology groups are those with integer coefficients.

\medskip

Let $W \subset T^3$ be a finite subset. For each $w \in W$, we choose an open ball $D(w) \subset T^3$ centered at $w \in W$ so that their closure do not intersect with each other: $\overline{D(w)} \cap \overline{D(w')} = \emptyset$ for $w \neq w'$. The ball $\overline{D(w)}$ and its boundary sphere $\partial \overline{D(w)}$ inherit orientations from $T^3$. We use this orientation to identify $H^2(\partial \overline{D(w)}) \cong \Z$. We write $i : T^3 \backslash W \to T^3$ and $i_w: \partial \overline{D(w)} \to T^3$ for the inclusions.

\begin{lem} \label{lem:Mayer_Vietoris}
There is an exact sequence of groups
$$
0 \longrightarrow
H^2(T^3) \overset{i^*}{\longrightarrow} 
H^2(T^3 \backslash W) \overset{\oplus i_w^*}{\longrightarrow}
\bigoplus_{w \in W} H^2(\partial \overline{D(w)}) 
\overset{\Sigma}{\longrightarrow} 
\Z \longrightarrow 0,
$$
where the homomorphism $\Sigma$ is given by $(q_j) \mapsto \sum_j q_j$. 
\end{lem}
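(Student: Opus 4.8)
The plan is to derive this exact sequence from the long exact sequence of the pair $(T^3, T^3\backslash W)$ together with excision, so the whole argument is a standard Mayer--Vietoris / excision computation organized around the local contributions near the Weyl points.

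First I would write down the long exact sequence in cohomology of the pair $(T^3, T^3\backslash W)$:
\[
\cdots \to H^k(T^3, T^3\backslash W) \to H^k(T^3) \overset{i^*}{\to} H^k(T^3\backslash W) \to H^{k+1}(T^3, T^3\backslash W) \to \cdots
\]
Next I would compute the relative groups $H^*(T^3, T^3\backslash W)$ by excising the complement of $\bigcup_{w} D(w)$: excision gives
\[
H^k(T^3, T^3\backslash W) \cong \bigoplus_{w\in W} H^k\bigl(\overline{D(w)}, \overline{D(w)}\backslash\{w\}\bigr),
\]
and since $\overline{D(w)}$ is a $3$-ball and $\overline{D(w)}\backslash\{w\}$ deformation retracts onto its boundary sphere $\partial\overline{D(w)}\cong S^2$, the long exact sequence of the pair $(\overline{D(w)}, \partial\overline{D(w)})$ (or equivalently the local cohomology computation) shows $H^2(\overline{D(w)}, \overline{D(w)}\backslash\{w\}) = 0$ and $H^3(\overline{D(w)}, \overline{D(w)}\backslash\{w\}) \cong \widetilde H^2(S^2) \cong \Z$, with the chosen orientation of $T^3$ fixing the generator. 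So $H^2(T^3, T^3\backslash W) = 0$ and $H^3(T^3, T^3\backslash W) \cong \Z^{|W|}$. Feeding $H^2(T^3, T^3\backslash W)=0$ into the long exact sequence immediately gives injectivity of $i^*\colon H^2(T^3)\to H^2(T^3\backslash W)$, which is the left end of the claimed sequence. The cokernel of $i^*$ embeds, via the connecting homomorphism $\delta$, into $H^3(T^3, T^3\backslash W)\cong\bigoplus_w\Z$, and one then continues the long exact sequence one more step: $H^3(T^3, T^3\backslash W)\to H^3(T^3)\to H^3(T^3\backslash W)$. Here $H^3(T^3)\cong\Z$ (fundamental class), and $H^3(T^3\backslash W) = 0$ because $T^3\backslash W$ is an open, hence noncompact, connected $3$-manifold. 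Therefore the map $\bigoplus_w\Z \to H^3(T^3)\cong\Z$ is surjective, and under the excision identification it is exactly the summation map $\Sigma\colon (q_w)\mapsto\sum_w q_w$ — this is the standard fact that the sum of the local orientation classes at the points $w$ maps to the global fundamental class. Its kernel is the image of $\delta$, i.e. the cokernel of $i^*$. Finally I would identify the composite $H^2(T^3\backslash W)\overset{\delta}{\to} H^3(T^3,T^3\backslash W)\cong\bigoplus_w H^2(\partial\overline{D(w)})$ with $\oplus_w i_w^*$; this is a diagram chase comparing the connecting map of the pair $(T^3, T^3\backslash W)$ with the restriction maps to the boundary spheres, using naturality of the long exact sequence under the inclusion $(\overline{D(w)}, \overline{D(w)}\backslash\{w\})\hookrightarrow(T^3, T^3\backslash W)$ and the suspension-type isomorphism $H^2(\partial\overline{D(w)})\cong H^3(\overline{D(w)},\partial\overline{D(w)})$. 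Splicing these four facts together — $\ker i^* = 0$, $\operatorname{im} i^* = \ker(\oplus i_w^*)$, $\operatorname{im}(\oplus i_w^*) = \ker\Sigma$, $\operatorname{im}\Sigma = \Z$ — yields the asserted exact sequence.

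The main obstacle I anticipate is the last identification: checking that the connecting homomorphism $\delta$ really corresponds to $\oplus_w i_w^*$ under the excision and suspension isomorphisms, and in particular getting the orientation conventions consistent so that the final map is $(q_w)\mapsto\sum q_w$ with the stated signs (rather than some twisted version). Everything else is bookkeeping with standard long exact sequences; the sign/orientation compatibility is where care is genuinely needed, and I would handle it by tracking the generator of $H^3(\overline{D(w)},\partial\overline{D(w)})$ through the boundary map to $H^2(\partial\overline{D(w)})$ and confirming it matches the orientation induced from $T^3$ on $\partial\overline{D(w)}$, exactly as the statement prescribes.
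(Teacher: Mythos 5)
Your proof is correct, but it follows a genuinely different route from the paper's. The paper obtains the sequence in one step from the Mayer--Vietoris sequence for the open cover $U = T^3\setminus W$, $V=\bigsqcup_{w\in W}D(w)$, with $U\cap V \simeq \bigsqcup_w \partial\overline{D(w)}$: there the middle map is literally restriction to $U\cap V$, i.e.\ $\oplus_w i_w^*$, with no further identification required, and the only orientation bookkeeping sits in the connecting map $H^2(U\cap V)\to H^3(T^3)\cong\Z$ being the sum of the local classes. You instead use the long exact sequence of the pair $(T^3,T^3\setminus W)$ together with excision and the local cohomology $H^k\bigl(\overline{D(w)},\overline{D(w)}\setminus\{w\}\bigr)$. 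What this buys you is that the vanishing $H^2(T^3,T^3\setminus W)=0$ (hence injectivity of $i^*$) and the surjectivity of $\Sigma$ become completely transparent as local-versus-global orientation statements; the price is the extra diagram chase identifying the connecting homomorphism $\delta$ with $\oplus_w i_w^*$ through the excision and coboundary isomorphisms, which you correctly single out as the delicate point and handle by naturality. Both derivations need $H^3(T^3\setminus W)=0$ (in the Mayer--Vietoris version this is the vanishing of $H^3(U)\oplus H^3(V)$); your appeal to noncompactness is fine, though here it is cleanest to note that $T^3\setminus W$ is homotopy equivalent to the compact manifold with nonempty boundary $T^3\setminus\bigsqcup_w D(w)$, which carries a CW structure of dimension $2$.
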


\begin{proof}
As shown in \cite{M-T1,M-T2}, the exact sequence is derived from the Mayer-Vietoris exact sequence associated to the open sets $U = T^3 \backslash W$ and $V = \bigsqcup_{w \in W} D(w)$.
\end{proof}

As a result, if $W$ consists of $n$ point, then we get $H^2(T^3 \backslash W) \cong \Z^{n + 2}$ abstractly, since the exact sequence in Lemma \ref{lem:Mayer_Vietoris} admits a splitting. However, it should be noticed that there is no natural choice of a splitting. 

\medskip

Let $\pi : T^3 \to T^2$ be the projection $\pi(k_x, k_y, k_z) = (k_x, k_y)$. The image $\pi(W) \subset T^2$ of the finite subset $W \subset T^3$ is again a finite subset in $T^2$.

\begin{lem}
The projection $\pi : T^3 \to T^2$ induces the following homomorphism of exact sequences.
$$
\begin{array}{c@{}c@{}c@{}c@{}c@{}c@{}c@{}c@{}c@{}c@{}c}
0 & \ \to \ & 
H_1(T^3) & \ \to \ &
H_1(T^3, W) & \ \to \ &
H_0(W) & \ \to \ &
H_0(T^3) & \ \to \ &
0, \\
%%%%%
& & \downarrow & & \downarrow & & \downarrow & & \downarrow & & \\
%%%%%
0 & \ \to \ & 
H_1(T^2) & \ \to \ &
H_1(T^2, \pi(W)) & \ \to \ &
H_0(\pi(W)) & \ \to \ &
H_0(T^2) & \ \to \ &
0.
\end{array}
$$
\end{lem}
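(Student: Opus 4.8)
The plan is to obtain each of the two rows as the tail of the long exact homology sequence of a pair, and to obtain the vertical maps together with commutativity of the squares from the naturality of that sequence applied to the map of pairs $\pi \colon (T^3, W) \to (T^2, \pi(W))$; this is a genuine map of pairs precisely because $\pi(W)$ is, by definition, the image of $W$ under $\pi$.

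First I would record that $W \subset T^3$, being finite, is a disjoint union of points, so $H_q(W) = 0$ for all $q \geq 1$, and likewise $H_q(\pi(W)) = 0$ for $q \geq 1$. The long exact sequence of the pair $(T^3, W)$ ends with
\[
\cdots \to H_1(W) \to H_1(T^3) \to H_1(T^3, W) \to H_0(W) \to H_0(T^3) \to H_0(T^3, W) \to 0 .
\]
Since $H_1(W) = 0$, the map $H_1(T^3) \to H_1(T^3, W)$ is injective; and since $T^3$ is path-connected while $W$ is non-empty (as it is for a Weyl semimetal, $W$ being the non-empty set of Weyl points), every path component of $T^3$ meets $W$, so $H_0(T^3, W) = 0$. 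Splicing off these two trivial ends leaves exactly the top row. Repeating the argument with $(T^2, \pi(W))$ in place of $(T^3, W)$ — $\pi(W)$ is again finite and non-empty, $T^2$ is path-connected — yields the bottom row.

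Finally, the long exact sequence of a pair is functorial for maps of pairs, so applying it to $\pi \colon (T^3, W) \to (T^2, \pi(W))$ produces a commutative ladder between the two long exact sequences whose vertical arrows in the four relevant columns are $\pi_*$ on $H_1(T^3)$, $\pi_*$ on $H_1(T^3, W)$, $(\pi|_W)_*$ on $H_0(W)$, and $\pi_*$ on $H_0(T^3)$; truncating to these columns is precisely the displayed diagram. There is no serious obstacle here: the only point worth a remark is the vanishing of $H_0(T^3, W)$ and $H_0(T^2, \pi(W))$ at the right-hand ends, which rests only on path-connectedness of the tori together with $W \neq \emptyset$.
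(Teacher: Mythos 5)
Your proof is correct and takes essentially the same route as the paper: both rows come from the long exact homology sequences of the pairs $(T^3, W)$ and $(T^2, \pi(W))$, and the vertical maps and commutativity come from naturality of that sequence for the map of pairs induced by $\pi$. Your additional remarks on the vanishing of $H_1$ of a finite set and of $H_0(T^3, W)$ (using $W \neq \emptyset$ and path-connectedness) correctly justify the truncation that the paper leaves implicit.
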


\begin{proof}
The exact sequences are associated to the pairs $(T^3, W)$ and $(T^2, \pi(W))$. These exact sequences are natural, and hence $\pi : T^3 \to T^2$ induces the natural homomorphisms $\pi_*$ of homology groups that constitute the homomorphism of exact sequences.
\end{proof}

\begin{lem}[Poincar\'{e} duality] \label{lem:Poincare_duality}
Let $W \subset T^3$ be a finite set. Then there is an isomorphism of groups $\mathrm{PD} : H^2(T^3 \backslash W) \to H_1(T^3, W)$. There is also an isomorphism of groups $\mathrm{PD} : H^1(T^2 \backslash \pi(W)) \to H_1(T^2, \pi(W))$.
\end{lem}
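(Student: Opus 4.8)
The plan is to exhibit both maps as instances of Poincar\'e--Lefschetz duality for a compact oriented manifold with boundary, obtained by thickening the finite point sets into small balls. For the three-dimensional statement, keep the disjoint closed balls $\overline{D(w)}$, $w \in W$, fixed before Lemma \ref{lem:Mayer_Vietoris}, and set $M = T^3 \backslash \bigsqcup_{w \in W} D(w)$. This is a compact oriented $3$-manifold, oriented by restriction from $T^3$, whose boundary $\partial M = \bigsqcup_{w \in W} \partial \overline{D(w)}$ is a disjoint union of $2$-spheres carrying exactly the boundary orientations used above. Then there are three isomorphisms to compose. First, the inclusion $M \hookrightarrow T^3 \backslash W$ is a homotopy equivalence, since each punctured ball $D(w) \backslash \{ w \}$ deformation retracts onto a spherical shell near $\partial \overline{D(w)}$; hence $H^2(T^3 \backslash W) \cong H^2(M)$. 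Second, Poincar\'e--Lefschetz duality for $(M, \partial M)$ furnishes the cap-product isomorphism with the fundamental class $[M, \partial M] \in H_3(M, \partial M)$, namely $H^2(M) \cong H_{3-2}(M, \partial M) = H_1(M, \partial M)$. Third, excising the open balls $\bigsqcup_{w} D(w)$ from the pair $(T^3, \bigsqcup_{w} \overline{D(w)})$ and using that each $\overline{D(w)}$ is contractible (so the long exact sequence of the pair collapses it to the point $w$) yields $H_1(M, \partial M) \cong H_1(T^3, \bigsqcup_{w} \overline{D(w)}) \cong H_1(T^3, W)$. The composite is the desired $\mathrm{PD} : H^2(T^3 \backslash W) \to H_1(T^3, W)$.

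The surface case is strictly parallel. With $N = T^2 \backslash \bigsqcup_{w \in \pi(W)} D(w)$, a compact oriented surface whose boundary is a disjoint union of circles, the same homotopy equivalence gives $H^1(T^2 \backslash \pi(W)) \cong H^1(N)$, Poincar\'e--Lefschetz duality gives $H^1(N) \cong H_{2-1}(N, \partial N) = H_1(N, \partial N)$, and the same excision/contractibility argument gives $H_1(N, \partial N) \cong H_1(T^2, \pi(W))$; composing produces $\mathrm{PD} : H^1(T^2 \backslash \pi(W)) \to H_1(T^2, \pi(W))$.

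No genuine difficulty arises here: this is a bookkeeping exercise built from standard duality theorems (see e.g.~\cite{Hatch}). The only point that demands attention, and the only place where a choice enters, is the orientation conventions --- one must fix orientations of $T^3$ and $T^2$ once and for all, confirm that the boundary orientations on the spheres $\partial \overline{D(w)}$ and on the circles $\partial \overline{D(w)}$ agree with those fixed before Lemma \ref{lem:Mayer_Vietoris}, and check that the homotopy equivalences $M \simeq T^3 \backslash W$ and $N \simeq T^2 \backslash \pi(W)$ are compatible with the excision identifications. Carrying this out makes $\mathrm{PD}$ natural with respect to the inclusions $i_w$ and the Mayer--Vietoris sequence of Lemma \ref{lem:Mayer_Vietoris}; since the present statement asserts only the existence of an isomorphism, even this last compatibility can be postponed to the point where it is actually needed.
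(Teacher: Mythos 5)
Your proposal is correct and follows essentially the same route as the paper: replace $T^3\backslash W$ (resp.\ $T^2\backslash\pi(W)$) up to homotopy equivalence by the compact manifold with boundary obtained by deleting small open balls, apply Poincar\'e--Lefschetz duality there, and then identify the resulting relative homology group with $H_1(T^3,W)$ (resp.\ $H_1(T^2,\pi(W))$). The only cosmetic difference is that you spell out the last identification via excision and contractibility of the closed balls, where the paper simply invokes the homotopy axiom.
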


\begin{proof}
For later convenience, we account for the duality isomorphism via a manifold with boundary: The open submanifold $T^3 \backslash W \subset T^3$ is homotopy equivalent to $T^3 \backslash D$, where $D = \bigsqcup_{w \in W} D(w)$ is the disjoint union of the open balls centered at the points in $W$. Hence the homotopy axiom gives the isomorphism $H^2(T^3 \backslash W) \cong H^2(T^3 \backslash D)$. To the compact oriented $3$-dimensional manifold $T^3 \backslash D$ with its boundary $\partial (T^3 \backslash D) = \partial \overline{D} = \bigsqcup_{w \in W} \partial \overline{D(w)}$, we can apply the Poincar\'{e}(-Lefshetz) duality \cite{Hatch}
$$
H^2(T^3 \backslash D) \cong H_1(T^3 \backslash D, \partial \overline{D}).
$$
Again by the homotopy axiom, we get $H_1(T^3 \backslash D, \partial \overline{D}) \cong H_1(T^3, W)$. In summary, we have an isomorphism $H^2(T^3 \backslash W) \cong H_1(T^3, W)$. The same argument also leads to an isomorphism $H^1(T^2 \backslash W) \cong H_1(T^2, \pi(W))$. 
\end{proof}

%%%%%%%%%%%%%%%%%%%%%%%%%%%%%%%%%%%%%%%%%%%%%%%%
\subsection{Construction of (co)homology basis}

This subsection is devoted to a construction of bases of relevant (co)homology groups. As before, let $W \subset T^3$ be a finite subset. Suppose that its image $\pi(W) \subset T^2$ under the projection consists of $n$ points. We express the points in $\pi(W)$ as $\overline{w}_0, \ldots, \overline{w}_{n-1}$. For $i = 0, \ldots, n-1$, let $m_i$ be the number of points in $\pi^{-1}(\overline{w}_i)$. By definition, the number of points in $W$ is $\sum_{i = 0}^{n-1} m_i$. We express the points in $\pi^{-1}(\overline{w}_i)$ as $w_i^1, \ldots, w_i^{m_i}$.

\begin{lem} \label{lem:choice}
The following holds true.
\begin{enumerate}
\item[(C1)]
There is a point $(k^0_x, k^0_y, k^0_z) \in T^3$ such that the subtori $\{ k^0_x \} \times S^1 \times S^1$, $S^1 \times \{ k^0_y \} \times S^1$ and $S^1 \times S^1 \times \{ k^0_z \}$ have no intersection with $W$.

\item[(C2)]
In the case of $n \ge 2$, there is a smooth embedding of the interval $\overline{\alpha}_i : [0, 1] \to T^2$ for $i = 1, \ldots, n-1$ such that:
\begin{itemize}
\item
$\overline{\alpha}_i(0) = \overline{w}_0$ and $\overline{\alpha}_i(1) = \overline{w}_i$ for each $i$.

\item
$\overline{\alpha}_i((0, 1]) \cap \overline{\alpha}_{i'}((0, 1]) = \emptyset$ if $i \neq i'$.

\item
The image $\overline{\alpha}_i([0, 1])$ has no intersection with the subtori $\{ k^0_x \} \times S^1$ and $S^1 \times \{ k^0_y \}$ for each $i$.

\end{itemize}

\item[(C3)]
In the case of $m_0 \ge 2$, there is a smooth embedding of intervals $\alpha_0^j : [0, 1] \to T^3$ for $j = 2, \ldots, m_0$ such that:
\begin{itemize}
%\item
%$\pi \circ \alpha_0^j = \overline{\alpha}_0$ for each $j$. 

\item
$\alpha_0^j(0) = w_0^1$ and $\alpha_0^j(1) = w_0^j$ for each $j$.

\item 
$\alpha_0^j((0, 1]) \cap \alpha_0^{j'}((0, 1]) = \emptyset$ if $j \neq j'$.

\item
The image $\alpha_0^j([0, 1])$ does not intersect with the subtori $\{ k^0_x \} \times S^1 \times S^1$, $S^1 \times \{ k^0_y \} \times S^1$ or $S^1 \times S^1 \times \{ k^0_z \}$ for each $j$.

\end{itemize}

\item[(C4)]
In the case of $n \ge 2$, there is a smooth embedding $\alpha_i^j : [0, 1] \to T^3$ of intervals for $i = 1, \ldots, n-1$ and $j = 1, \ldots, m_i^j$ such that:
\begin{itemize}
\item
$\pi \circ \alpha_i^j = \overline{\alpha}_i$ for each $i, j$.

\item
$\alpha_i^j(0) = w_0^1$ and $\alpha_i^j(1) = w_i^j$ for each $i, j$.

\item 
$\alpha_i^j((0, 1]) \cap \alpha_{i'}^{j'}((0, 1]) = \emptyset$ if $i \neq i'$ or $j \neq j'$, where $i$ or $i'$ can be $0$.

\item
The image $\alpha_i^j([0, 1])$ does not intersect with the subtori $\{ k^0_x \} \times S^1 \times S^1$, $S^1 \times \{ k^0_y \} \times S^1$ or $S^1 \times S^1 \times \{ k^0_z \}$ for each $i, j$.

\end{itemize}

\end{enumerate}
\end{lem}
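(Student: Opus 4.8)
The plan is to establish the four items in sequence, each reducing to a standard general‑position / Whitney‑type argument in the torus, and to do so in an order in which later choices can be made to avoid the subtori and embedded arcs produced by earlier choices. First I would handle (C1): the set $W\subset T^3$ is finite, so its images $\mathrm{pr}_x(W)$, $\mathrm{pr}_y(W)$, $\mathrm{pr}_z(W)$ under the three coordinate projections $T^3\to S^1$ are each finite. Choosing $k^0_x\in S^1\setminus\mathrm{pr}_x(W)$, and similarly $k^0_y$, $k^0_z$, guarantees that none of the three coordinate subtori $\{k^0_x\}\times S^1\times S^1$, $S^1\times\{k^0_y\}\times S^1$, $S^1\times S^1\times\{k^0_z\}$ meets $W$, which is exactly (C1). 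Fix such a point $(k^0_x,k^0_y,k^0_z)$ once and for all; it is the reference point for all subsequent items.

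Next (C2). In $T^2$ remove the two circles $C_x=\{k^0_x\}\times S^1$ and $C_y=S^1\times\{k^0_y\}$; since $\pi(W)\cap(C_x\cup C_y)=\emptyset$ (this follows from (C1), as $\pi(W)=\mathrm{pr}_{xy}(W)$ and $w\mapsto w_x$, $w\mapsto w_y$ avoid $k^0_x,k^0_y$), all the points $\overline{w}_0,\dots,\overline{w}_{n-1}$ lie in the open set $T^2\setminus(C_x\cup C_y)$, which is a (connected, since it is an open square) $2$‑manifold. In a connected $2$‑manifold any finite set of points can be joined to a chosen basepoint $\overline{w}_0$ by smooth embedded arcs that are pairwise disjoint except at $\overline{w}_0$: build them one at a time, each time taking a smooth path to $\overline{w}_0$, perturbing it to a smooth embedding avoiding the finitely many previously chosen arcs and the remaining $\overline{w}_{i'}$'s (general position: arcs and points are codimension $1$ and $2$ in a surface, so transversality forces disjointness), and finally arranging the arcs to overlap only at their common endpoint $\overline{w}_0$ by a local reparametrization near that point. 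This yields the $\overline{\alpha}_i$ of (C2). For (C3), argue identically but in $T^3\setminus(C_x'\cup C_y'\cup C_z')$, where $C_x',C_y',C_z'$ are the three coordinate subtori from (C1); this open set is a connected $3$‑manifold containing $w_0^1,\dots,w_0^{m_0}$ (again by (C1)), and in a connected $3$‑manifold the same one‑at‑a‑time construction, with general position now even easier (arcs are codimension $2$), produces pairwise‑disjoint‑off‑$w_0^1$ smooth embedded arcs $\alpha_0^j$ avoiding the three subtori, which is (C3).

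The main work is (C4), where the arcs $\alpha_i^j$ in $T^3$ must simultaneously (i) lift the fixed surface arcs $\overline{\alpha}_i$, i.e.\ $\pi\circ\alpha_i^j=\overline{\alpha}_i$, (ii) run from $w_0^1$ to $w_i^j$, (iii) be embeddings, (iv) be mutually disjoint except at the common initial point $w_0^1$ (and disjoint as well from the $\alpha_0^{j}$ of (C3) interpreted as the $i=0$ arcs), and (v) avoid the three coordinate subtori. The strategy is to construct $\alpha_i^j$ inside the cylinder $\overline{\alpha}_i([0,1])\times S^1\subset T^3$ (the preimage under $\pi$ of the embedded arc, which is an embedded annulus or at worst a cylinder over an interval): writing $\alpha_i^j(t)=(\overline{\alpha}_i(t),\theta_i^j(t))$, the lifting condition is automatic and what remains is to choose the $S^1$‑valued functions $\theta_i^j$. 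At $t=0$ all $\theta_i^j(0)$ must equal the $z$‑coordinate of $w_0^1$; at $t=1$, $\theta_i^j(1)$ is the $z$‑coordinate of $w_i^j$. On the cylinder $[0,1]\times S^1$ one can clearly pick smooth $\theta_i^j$ with these boundary values whose graphs are pairwise disjoint for $t\in(0,1]$ — e.g.\ perturb near $t=0$ so that the graphs fan out into distinct sheets immediately, then route them to their targets without crossing, which is possible because finitely many graphs in a surface can be put in general position, hence made disjoint, rel their endpoints (when the endpoints are already distinct). Disjointness of cylinders over distinct $\overline{\alpha}_i$'s away from $\overline{w}_0$ is inherited from the disjointness $\overline{\alpha}_i((0,1])\cap\overline{\alpha}_{i'}((0,1])=\emptyset$ in (C2), reducing cross‑$i$ collisions to the single fibre over $\overline{w}_0$, handled by the fanning‑out near $t=0$; and avoidance of the three subtori follows since $\overline{\alpha}_i$ already avoids $C_x,C_y$ (so $\alpha_i^j$ avoids $C_x',C_y'$ automatically), while $\alpha_i^j$ can be kept off the finitely many points where its $z$‑coordinate would hit $k^0_z$ by a further small perturbation of $\theta_i^j$ away from those isolated parameter values. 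I expect this last coordination — making all the lifted arcs simultaneously embedded, mutually disjoint off $w_0^1$, and subtorus‑avoiding, while respecting the rigid constraint $\pi\circ\alpha_i^j=\overline{\alpha}_i$ — to be the only genuinely delicate point; everything else is routine transversality in surfaces and $3$‑manifolds.
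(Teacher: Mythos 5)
Your proposal is correct and follows essentially the same route as the paper: choose $(k^0_x,k^0_y,k^0_z)$ avoiding the finitely many coordinate values occupied by $W$, work in the complement of the corresponding subtori (the paper phrases this as cutting $T^2$ and $T^3$ into a square and a cube), construct the planar arcs $\overline{\alpha}_i$ and the arcs $\alpha_0^j$ there by elementary general position, and obtain the $\alpha_i^j$ as lifts of $\overline{\alpha}_i$ by an appropriate choice of the $z$-component. The paper leaves these steps as "clear"; you supply the routine transversality and graph-routing details, so no substantive difference.
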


\begin{proof}
The choice of a point $(k^0_x, k^0_y, k^0_z) \in T^3$ is obviously possible. Then $T^2$ is regarded as the square $[k^0_x, k^0_x + 2\pi] \times [k^0_y, k^0_y + 2\pi]$ with its boundary identified suitably. In this description, it is clear that the choice of the smooth paths $\overline{\alpha}_i$ is possible. In the same way, $T^3$ is regarded as the cube $[k^0_x, k^0_x + 2\pi] \times [k^0_y, k^0_y + 2\pi] \times [k^0_z, k^0_z + 2\pi]$ with its boundary identified suitably. In this description of $T^3$, we can find $\alpha_0^j$. We can also construct the smooth lifts $\alpha_i^j$ of $\overline{\alpha}_i$ for $i = 1, \ldots, n$ by choosing the $z$-component appropriately. 
\end{proof}

Under a choice of a point $(k^0_x, k^0_y, k^0_z) \in T^3$, we define
\begin{align*}
\alpha_x &: [0, 1] \to T^3, & 
\alpha_x(t) &= (2\pi t, k^0_y, k^0_z), \\
\alpha_y &: [0, 1] \to T^3, & 
\alpha_y(t) &= (k^0_x, 2\pi t, k^0_z), \\
\alpha_z &: [0, 1] \to T^3, & 
\alpha_z(t) &= (k^0_x, k^0_y, 2\pi t).
\end{align*}
We put $\overline{\alpha}_x = \pi \circ \alpha_x$ and $\overline{\alpha}_y = \pi \circ \alpha_y$, which are maps $[0, 1] \to T^2$.

\begin{lem}
Let $\alpha_x, \alpha_y, \alpha_z$ be the paths constructed from the choice of a point $(k^0_x, k^0_y, k^0_z) \in T^3$ in Lemma \ref{lem:choice} (C1).

\begin{enumerate}
\item[(a)]
Let $\overline{\alpha}_i$ be the paths chosen in Lemma \ref{lem:choice} (C2). Then the homology classes of $\overline{\alpha}_x$, $\overline{\alpha}_y$ and $\overline{\alpha}_i$ form a basis of the free abelian group $H_1(T^2, \pi(W))$.

\item[(b)]
Let $\alpha_i^j$ be the paths chosen in Lemma \ref{lem:choice} (C3) and (C4). Then the homology classes of $\alpha_x$, $\alpha_y$, $\alpha_z$ and $\alpha_i^j$ form a basis of the free abelian group $H_1(T^3, W)$

\end{enumerate}
\end{lem}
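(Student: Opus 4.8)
The plan is to establish both statements simultaneously by exhibiting long exact sequences of pairs in which the chosen paths provide explicit splittings, so that ``basis'' reduces to a counting and linear-independence check. First I would recall that $T^2$ (resp.\ $T^3$) is a CW complex, and that $H_1(T^2) \cong \Z^2$ is generated by the classes of the coordinate loops, which are exactly $[\overline{\alpha}_x]$ and $[\overline{\alpha}_y]$ (the condition in Lemma \ref{lem:choice} (C1) that the subtori miss $W$ guarantees these are well-defined \emph{relative} cycles with boundary in $\pi(W)$, in fact with empty boundary since the loops avoid $\pi(W)$). Similarly $H_1(T^3) \cong \Z^3$ is generated by $[\alpha_x], [\alpha_y], [\alpha_z]$, and again (C1) ensures these loops avoid $W$.

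Next I would write down the exact sequence of the pair, for part (a):
\begin{equation*}
H_1(\pi(W)) \longrightarrow H_1(T^2) \overset{\iota_*}{\longrightarrow} H_1(T^2, \pi(W)) \overset{\partial}{\longrightarrow} H_0(\pi(W)) \overset{j_*}{\longrightarrow} H_0(T^2).
\end{equation*}
Here $H_1(\pi(W)) = 0$ since $\pi(W)$ is finite, so $\iota_*$ is injective; $H_0(\pi(W)) \cong \Z^n$ with basis the points $\overline{w}_0, \dots, \overline{w}_{n-1}$, and $H_0(T^2) \cong \Z$ with $j_*$ sending every $\overline{w}_i$ to the generator. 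Thus $\ker j_* \cong \Z^{n-1}$ with basis $[\overline{w}_i] - [\overline{w}_0]$ for $i = 1, \dots, n-1$, and the key point is that $\partial [\overline{\alpha}_i] = [\overline{\alpha}_i(1)] - [\overline{\alpha}_i(0)] = [\overline{w}_i] - [\overline{w}_0]$ by Lemma \ref{lem:choice} (C2), so the $[\overline{\alpha}_i]$ map bijectively onto this basis of $\operatorname{im}\partial = \ker j_*$. Combining the short exact sequence
\begin{equation*}
0 \longrightarrow H_1(T^2) \overset{\iota_*}{\longrightarrow} H_1(T^2, \pi(W)) \overset{\partial}{\longrightarrow} \ker j_* \longrightarrow 0
\end{equation*}
with the fact that $[\overline{\alpha}_i]$ furnish a section of $\partial$ onto a basis, we get that $\iota_*[\overline{\alpha}_x], \iota_*[\overline{\alpha}_y], [\overline{\alpha}_1], \dots, [\overline{\alpha}_{n-1}]$ form a basis of $H_1(T^2, \pi(W))$, which is exactly claim (a) (and in passing confirms freeness and rank $n+1$). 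For part (b) I would run the identical argument with the pair $(T^3, W)$: now $H_1(T^3) \cong \Z^3$ contributes $[\alpha_x],[\alpha_y],[\alpha_z]$, while $\ker(H_0(W) \to H_0(T^3))$ has rank $(\sum m_i) - 1$ with basis $[w] - [w_0^1]$ ranging over $w \in W \setminus \{w_0^1\}$; the paths $\alpha_0^j$ (for $j = 2,\dots,m_0$, from Lemma \ref{lem:choice} (C3)) together with the $\alpha_i^j$ (for $i \ge 1$, from (C4)) have boundaries $[w_0^j] - [w_0^1]$ and $[w_i^j] - [w_0^1]$ respectively, which is precisely that basis, so they section $\partial$ and part (b) follows.

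The only genuinely delicate point — and the step I expect to need the most care — is the claim that each chosen embedded path, being a \emph{smooth embedding of an interval with endpoints in $W$} (resp.\ $\pi(W)$) and interior disjoint from $W$, actually represents a relative homology class and that the boundary map in singular (or simplicial) homology sends it to the difference of its endpoint $0$-classes; this is routine but deserves a sentence invoking that a singular $1$-simplex is just a continuous map of $\Delta^1 \cong [0,1]$, so $\overline{\alpha}_i$ \emph{is} a relative $1$-cycle mod $\pi(W)$ and $\partial \overline{\alpha}_i = \overline{\alpha}_i(1) - \overline{\alpha}_i(0)$ by definition of the simplicial boundary. I would also remark that the \emph{disjointness} conditions in (C2)--(C4) are not needed for this lemma per se — they are inserted for later use when these cycles are compared with images of bulk cycles — so the proof of the basis statement uses only the endpoint conditions plus (C1). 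Everything else is bookkeeping with the two five-term exact sequences, which I would organize exactly as in the commutative diagram already displayed before Lemma \ref{lem:Poincare_duality}.
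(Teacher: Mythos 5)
Your proof is correct and follows essentially the same route as the paper: both arguments use the long exact sequence of the pair, note that $H_1(T^2)$ (resp.\ $H_1(T^3)$) injects and is spanned by the coordinate loops, and verify that the boundary map sends $[\overline{\alpha}_i]$ (resp.\ $[\alpha_i^j]$) to the differences of endpoint classes, which form a basis of $\ker\bigl(H_0(\pi(W)) \to H_0(T^2)\bigr)$. Your additional remarks (that a path with endpoints in the subspace is literally a relative $1$-cycle, and that the disjointness conditions in (C2)--(C4) are not needed here) are accurate but not required.
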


\begin{proof}
We consider (a) only, since (b) can be shown in the same way. The homology exact sequence for the pair $(T^3, W)$ reads
$$
\underbrace{H_1(\pi(W))}_0 \to
\underbrace{H_1(T^2)}_{\Z^2} \overset{j_*}{\to}
H_1(T^2, \pi(W)) \overset{\partial_*}{\to}
\underbrace{H_0(\pi(W))}_{\Z^n} \overset{j_*}{\to}
\underbrace{H_0(T^2)}_{\Z},
$$
in which $j_* : H_0(\pi(W)) \to H_0(T^2)$ is the homomorphism $(q_i) \mapsto \sum_i q_i$. Hence $H_1(T^2, \pi(W)) \cong \Z^{n + 1}$. The homology classes of the paths $\overline{\alpha}_x$ and $\overline{\alpha}_y$ form a basis of $H_1(T^2) \cong \Z^2$. They inject into $H_1(T^2, \pi(W)$ under the injection $j_* : H_1(T^2) \to H_1(T^2, \pi(W))$. We take them as a part of a basis of $H_1(T^2, \pi(W))$. To see that the homology classes of $\overline{\alpha}_1, \ldots, \overline{\alpha}_{n-1}$ constitute the remaining part of a basis of $H_1(T^2, \pi(W))$, it suffices to see that their image under the boundary map $\partial_* : H_1(T^2, \pi(W)) \to H_0(\pi(W))$ form a basis of the kernel of $j_* : H_0(\pi(W)) \to H_0(T^2)$. This certainly holds true, because if we let a continuous path $\alpha : [0, 1] \to T^2$ such that $\alpha(0), \alpha(1) \in \pi(W)$ represent a homology class in $H_1(T^2, \pi(W))$, then its image under the boundary homomorphism is $\partial_*[\alpha] = \alpha(1) - \alpha(0)$. 
\end{proof}

\begin{dfn}
Under choices in Lemma \ref{lem:choice}, we define a basis $\{ \omega_x, \omega_y, \omega_z, \omega_i^j \}$ of the group $H^2(T^3 \backslash W)$ to be 
\begin{align*}
\omega_x &= \mathrm{PD}^{-1}([\alpha_x]), &
\omega_y &= \mathrm{PD}^{-1}([\alpha_y]), &
\omega_z &= \mathrm{PD}^{-1}([\alpha_z]), &
\omega_i^j &= \mathrm{PD}^{-1}([\alpha_i^j]).
\end{align*}
\end{dfn}

\begin{prop} \label{prop:cohomology_basis}
Let $\{ \omega_x, \omega_y, \omega_z, \omega_i^j \}$ be the basis of $H^2(T^3 \backslash W)$ associated to the choices in Lemma \ref{lem:choice}. Any $\omega \in H^2(T^3 \backslash W)$ is expressed as 
$$
\omega = \sum_{i = x, y, z} q_i \omega_i
+ \sum_{j = 2}^{m_0} q_0^j \omega_0^j
+ \sum_{i = 1}^{n-1} \sum_{j = 1}^{m_i} q_i^j \omega_i^j,
$$
 where the integers $q_x$, $q_y$ and $q_z$ are given by
\begin{align*}
q_x &= \int_{\{ k^0_x \} \times S^1 \times S^1} \omega, &
q_y &= -\int_{\ S^1 \times \{ k^0_y \} \times S^1} \omega, &
q_z &= \int_{\ S^1 \times S^1 \times \{ k^0_z \}} \omega,
\end{align*}
and the integers $q_i^j$ by
$$
q_i^j = - \int_{\partial \overline{D(w_i^j)}} \omega.
$$
\end{prop}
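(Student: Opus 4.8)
The plan is to reduce the statement to a single intersection-number computation via Poincar\'{e}--Lefschetz duality. Since it is already known that $\{ \omega_x, \omega_y, \omega_z, \omega_0^j, \omega_i^j \}$ is a $\Z$-basis of $H^2(T^3\backslash W)$, the expansion of $\omega$ exists and is unique, and the only content is that the displayed integrals compute its coefficients. I would restate this as: the linear functionals $\omega \mapsto \int_{\{k^0_x\}\times S^1\times S^1}\omega$, $\omega \mapsto \int_{S^1\times\{k^0_y\}\times S^1}\omega$, $\omega \mapsto \int_{S^1\times S^1\times\{k^0_z\}}\omega$, and $\omega \mapsto \int_{\partial\overline{D(w_i^j)}}\omega$ for $(i,j)\neq(0,1)$, agree up to the signs in the statement with the basis of $\mathrm{Hom}(H^2(T^3\backslash W),\Z)$ dual to $\{\omega_\bullet\}$; equivalently, that integration over these surfaces detects the dual of the basis $\{[\alpha_\bullet]\}$ of $H_1(T^3,W)$ transported through $\mathrm{PD}^{-1}$.

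To set this up I would work on the compact oriented $3$-manifold with boundary $M = T^3\backslash D$ as in the proof of Lemma \ref{lem:Poincare_duality}; shrinking the radii of the balls $D(w)$ changes no homology class, so by (C1) I may assume that $M$ contains the three coordinate subtori and the loops $\alpha_x,\alpha_y,\alpha_z$, and that each arc $\alpha_i^j$ meets $\overline{D}$ only along $\overline{D(w_0^1)}$ and $\overline{D(w_i^j)}$. Then the coordinate subtori and the boundary spheres $\partial\overline{D(w)}$ are closed oriented surfaces in $M$, hence classes in $H_2(M)$, and by Lemma \ref{lem:Poincare_duality} together with the universal coefficient theorem (all the relevant groups being free, since $H_1(T^3\backslash W)\cong\Z^3$) there is a perfect intersection pairing $H_2(M)\times H_1(M,\partial M)\to\Z$ for which $\int_Z\mathrm{PD}^{-1}(c) = [Z]\cdot c$ with the usual sign conventions. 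So it remains to compute the intersection numbers of these surfaces against $\alpha_x,\alpha_y,\alpha_z$ and the $\alpha_i^j$. Each of $\alpha_x,\alpha_y,\alpha_z$ crosses exactly one coordinate subtorus transversally in one point and lies inside each of the other two (so, after an obvious homology to a parallel loop, has intersection number $0$ with them); by (C3)--(C4) each $\alpha_i^j$ is disjoint from all three subtori and from every ball other than $\overline{D(w_0^1)}$ and $\overline{D(w_i^j)}$, which it meets once; and the loops miss all the balls. Recording the local signs, the resulting matrix is block-diagonal with entries $+1,-1,+1$ on the three torus rows and $-1$ on each arc, hence unimodular; these $2+\#W$ surfaces therefore form a $\Z$-basis of $H_2(M)$ dual to $\{\omega_\bullet\}$ up to those signs, and expanding $\omega = \sum q_i\omega_i + \sum q_0^j\omega_0^j + \sum q_i^j\omega_i^j$ and integrating over each surface yields exactly $q_x$, $-q_y$, $q_z$, $-q_i^j$.

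The part that needs genuine care is the sign bookkeeping: tracking the mismatch between the product orientation implicit in $\int$ over a coordinate subtorus and the ``slice'' orientation natural to the intersection pairing, the outward-normal convention on the spheres $\partial\overline{D(w_i^j)}$ (together with the fact that $\alpha_i^j$ enters $\overline{D(w_i^j)}$ but exits $\overline{D(w_0^1)}$), and the sign in the duality isomorphism relating $\int$ to the intersection number --- these must all be fixed consistently so that the outcome is precisely $(+,-,+)$ on $(q_x,q_y,q_z)$ and $-$ on $q_i^j$. Verifying that the balls can be shrunk and that the arcs and loops are in general position with respect to these surfaces is routine, so I expect the orientation computation at the intersection points to be the only real obstacle.
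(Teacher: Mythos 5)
Your proposal is correct and takes essentially the same route as the paper: the paper realizes the duals $\omega_\bullet$ by Thom forms supported in disjoint tubular neighbourhoods of the (truncated) curves $\alpha_\bullet$ inside $T^3 \backslash D$ and evaluates each integral as a signed count of transverse intersection points with the given surfaces, which is exactly your intersection-pairing computation, including the sign $-1$ on the boundary spheres coming from the induced boundary orientation. The only cosmetic difference is that the paper obtains $q_x, q_y, q_z$ by representing $\omega_x, \omega_y, \omega_z$ by the invariant $2$-forms $dk_y dk_z$, $-dk_x dk_z$, $dk_x dk_y$ rather than by perturbing $\alpha_x, \alpha_y, \alpha_z$ off the two subtori that contain each of them.
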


In the lemma above, the evaluation of the fundamental class $[M] \in H^2(M)$ of a closed oriented $2$-dimensional submanifold $M \subset T^3$ by the restriction of $\omega$ is denoted with $\int_M \omega = \langle \omega|_M, [M] \rangle \in \Z$. We take the orientations on the $2$-dimensional subtori to be the standard ones on them, while the orientations on the spheres $\partial \overline{D(w)}$ are induced from $\partial \overline{D(w)} \subset \overline{D(w)} \subset T^3$.

\begin{proof}
To see the lemma, we review the construction of the Poincar\'{e} dual $\omega_j = \mathrm{PD}^{-1}([\alpha_i^j])$ of $[\alpha_i^j] \in H_1(T^3, W)$. In view of the proof of Lemma \ref{lem:Poincare_duality}, we can consider $H^2(T^3 \backslash D)$ instead of $H^2(T^3 \backslash W)$. Furthermore, since the cohomology group is torsion free, we can regard cohomology classes in $H^2(T^3 \backslash D)$ as represented by differential forms. Then we can apply the construction of the Poincar\'{e} dual in \cite{B-T}. Choosing the open disks $D(w)$ smaller if necessary, we truncate and rescale the domain of the smooth path $\alpha_i^j : [0, 1] \to T^3 \backslash W$ for each $i, j$ to get a smooth path $\lvert \alpha_i^j \rvert : [0, 1] \to T^3 \backslash D$ such that $\lvert \alpha_i^j \rvert(0) \in \partial \overline{D(w_0^1)}$, $\lvert \alpha_i^j \rvert(1) \in \partial \overline{D(w_i^j)}$ and $\mathrm{Im}(\lvert \alpha_i^j \rvert) \subset \mathrm{Im}(\alpha_i^j)$. There is then an open tubular neighbourhood $N_i^j$ of the submanifold $\lvert \alpha_i^j \rvert([0, 1])$ in $T^3 \backslash D$ whose intersection with the boundary $\partial (T^3 \backslash D)$ consists of open $2$-disks on $\partial \overline{D(w_0)}$ and $\partial \overline{D(w_i^j)}$. There is a diffeomorphism $\phi_i^j : N_i^j \to [0, 1] \times D^2$, where $D^2$ is an open disk. Let $\rho_{D^2}$ be a closed $2$-form on $D^2$ with compact support such that $\int_{D^2} \rho_{D^2} = 1$. We use the same symbol $\rho_{D^2}$ to mean its pull-back under the projection $[0, 1] \times D^2 \to D^2$. The extension of the $2$-form $(\phi_i^j)^*\rho_{D^2}$ to $T^3 \backslash D$ by zero represents $\mathrm{PD}^{-1}([\alpha_i^j])$ in $H^2(T^3 \backslash D)$. The duals $\omega_x$, $\omega_y$ and $\omega_z$ admit similar representatives by $2$-forms whose supports are in tubular neighbourhoods of the embedded circles $\alpha_x([0, 1])$, $\alpha_y([0, 1])$ and $\alpha_z([0, 1])$.

Because we can choose $N_i^j$ to be disjoint to each other, we have $\int_{\partial \overline{D(w_i^j)}} \omega_{i'}^{j'} = 0$ whenever $i \neq i'$ or $j \neq j'$. By construction, we also have $\int_{\partial \overline{D(w_i^j)}} \omega_i^j = -1$, where the sign is due to the difference of the orientations on $\partial (T^3 \backslash D) = \partial \overline{D}$ induced from $T^3$. Since the tubular neighbourhoods of $\alpha_x, \alpha_y, \alpha_z$ can be chosen so as to be disjoint to $D$, we have $\int_{\partial \overline{D(w_i^j)}} \omega_k = 0$ for $k = x, y, z$. Thus, in view of Lemma \ref{lem:Mayer_Vietoris}, the coefficients $q_i^j$ are given by $q_i^j = -\int_{\partial \overline{D(w_i^j)}} \omega$. Since the images of the paths $\alpha_i^j$ have no intersection with the subtori $\{ k^0_x \} \times S^1 \times S^1$, $S^1 \times \{ k^0_y \} \times S^1$ or $S^1 \times S^1 \times \{ k^0_z \}$, the integrations of $\omega_i^j$ on these subtori are trivial. Note that $\omega_x$, $\omega_y$ and $\omega_z$ can be regarded as the injective image of a basis of $H^2(T^3) \cong \Z^3$. Then they can be represented by $2$-forms $dk_y dk_z$, $- dk_x dk_z$ and $dk_x dk_z$, respectively. This description leads to the expressions of $q_x$, $q_y$ and $q_z$ as stated.
\end{proof}

\begin{cor} \label{cor:cohomology_basis}
Under choices in Lemma \ref{lem:choice}, we have the following expression of the push-forward of the Poincar\'{e} dual of any $\omega \in H^2(T^3 \backslash W)$
\begin{align*}
\pi_*\mathrm{PD}(\omega)
&= 
\bigg( \int_{\{ k^0_x \} \times S^1 \times S^1} \omega \bigg) 
[\overline{\alpha}_x]
- 
\bigg( \int_{\ S^1 \times \{ k^0_y \} \times S^1} \omega \bigg) 
[\overline{\alpha}_y]
\\
&\quad 
- \sum_{i = 1}^{n-1} 
\bigg( \sum_{j = 1}^{m_i} \int_{\partial \overline{D(w_i^j)}} \omega \bigg) 
[\overline{\alpha}_i].
\end{align*}
\end{cor}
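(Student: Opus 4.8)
The plan is to simply apply $\pi_*\circ\mathrm{PD}$ to the basis expansion of $\omega$ given in Proposition \ref{prop:cohomology_basis} and track what happens to each basis element. First I would expand $\omega = \sum_{i=x,y,z} q_i\omega_i + \sum_{j=2}^{m_0} q_0^j\omega_0^j + \sum_{i=1}^{n-1}\sum_{j=1}^{m_i} q_i^j\omega_i^j$, so that by linearity $\pi_*\mathrm{PD}(\omega) = \sum_{i=x,y,z} q_i\,\pi_*[\alpha_i] + \sum_{j=2}^{m_0} q_0^j\,\pi_*[\alpha_0^j] + \sum_{i=1}^{n-1}\sum_{j=1}^{m_i} q_i^j\,\pi_*[\alpha_i^j]$, recalling that $\mathrm{PD}(\omega_i) = [\alpha_i]$ etc.\ by definition of the $\omega$'s. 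Now I evaluate $\pi_*$ on each type of generator. For $\alpha_x$ and $\alpha_y$ we have $\pi\circ\alpha_x = \overline{\alpha}_x$ and $\pi\circ\alpha_y = \overline{\alpha}_y$ by construction, so $\pi_*[\alpha_x] = [\overline{\alpha}_x]$ and $\pi_*[\alpha_y] = [\overline{\alpha}_y]$. For $\alpha_z(t) = (k^0_x, k^0_y, 2\pi t)$, the composite $\pi\circ\alpha_z$ is the constant path at $(k^0_x, k^0_y)$, hence a null-homologous loop, so $\pi_*[\alpha_z] = 0$ and the $q_z\omega_z$ term drops out. For the fiber paths $\alpha_0^j$ (those over $\overline{w}_0$, from (C3)), we have $\pi\circ\alpha_0^j$ is a path in $T^2$ from $\overline w_0$ to $\overline w_0$, i.e.\ a loop based at a point of $\pi(W)$; its class in $H_1(T^2,\pi(W))$ is the image under $j_*$ of its class in $H_1(T^2)$, and since $\overline{\alpha}_0^j$ can be chosen (shrinking $D(w)$) to lie in a ball and thus be null-homotopic in $T^2$, we get $\pi_*[\alpha_0^j] = 0$; so these terms vanish as well. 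Finally for $\alpha_i^j$ with $i\ge 1$ (from (C4)), the defining property $\pi\circ\alpha_i^j = \overline{\alpha}_i$ gives $\pi_*[\alpha_i^j] = [\overline{\alpha}_i]$, independent of $j$.

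Combining, $\pi_*\mathrm{PD}(\omega) = q_x[\overline{\alpha}_x] + q_y[\overline{\alpha}_y] + \sum_{i=1}^{n-1}\big(\sum_{j=1}^{m_i} q_i^j\big)[\overline{\alpha}_i]$. It remains to substitute the formulas for the coefficients from Proposition \ref{prop:cohomology_basis}: $q_x = \int_{\{k^0_x\}\times S^1\times S^1}\omega$, $q_y = -\int_{S^1\times\{k^0_y\}\times S^1}\omega$, and $q_i^j = -\int_{\partial\overline{D(w_i^j)}}\omega$, whence $\sum_{j=1}^{m_i} q_i^j = -\sum_{j=1}^{m_i}\int_{\partial\overline{D(w_i^j)}}\omega$. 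Plugging these in yields exactly the claimed formula, with the minus sign in front of the $[\overline{\alpha}_i]$-sum coming from the minus sign in $q_i^j$.

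The only point requiring a little care — and the main (mild) obstacle — is justifying that $\pi\circ\alpha_0^j$ and $\pi\circ\alpha_z$ are null-homologous, i.e.\ that the $q_z$ and $q_0^j$ contributions genuinely vanish after pushing forward. For $\alpha_z$ this is immediate since $\pi\circ\alpha_z$ is constant. For $\alpha_0^j$ one should note that (C3) allows the image $\alpha_0^j([0,1])$ to be taken inside a small neighborhood of $\pi^{-1}(\overline w_0)$, so its projection is a loop in $T^2$ contained in a contractible neighborhood of $\overline w_0$; such a loop is null-homotopic in $T^2$, hence maps to $0$ in $H_1(T^2)$ and a fortiori to $0$ in $H_1(T^2,\pi(W))$. (Alternatively, one may invoke the commutative diagram of exact sequences in \S\ref{subsec:relevant_homology_groups}: $\partial_*\pi_*[\alpha_0^j] = \pi_*\partial_*[\alpha_0^j] = \pi_*(w_0^j - w_0^1) = \overline w_0 - \overline w_0 = 0$, so $\pi_*[\alpha_0^j]$ lies in the image of $H_1(T^2)$, and a degree/intersection computation against the subtori — using that $\alpha_0^j$ avoids all three coordinate subtori — shows it is in fact zero.) Everything else is routine bookkeeping with the bases already in hand.
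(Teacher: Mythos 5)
Your proof is correct and is precisely the argument the paper leaves implicit: expand $\omega$ in the basis of Proposition \ref{prop:cohomology_basis}, apply $\pi_*\circ\mathrm{PD}$, and observe that the $\alpha_z$-term dies because $\pi\circ\alpha_z$ is constant and the $\alpha_0^j$-terms die because $\pi\circ\alpha_0^j$ is a null-homologous loop, while $\pi_*[\alpha_i^j]=[\overline{\alpha}_i]$ by (C4). One small imprecision: (C3) does not say $\alpha_0^j$ lies in a small neighbourhood of the fibre $\pi^{-1}(\overline{w}_0)$; the correct reason $\pi\circ\alpha_0^j$ is null-homotopic is that, by the last bullet of (C3), its image avoids $\{k^0_x\}\times S^1$ and $S^1\times\{k^0_y\}$ and hence lies in the contractible open square complementary to these circles --- which is exactly the observation you make in your parenthetical alternative.
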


%%%%%%%%%%%%%%%%%%%%%%%%%%%%%%%%%%%%%%%%%%%%%%%%
\subsection{Definition of the edge index}
\label{subsec:edge_index}

Let $\hat{H} : T^3 \to \mathrm{Herm}(\C^r)$ be a continuous map, where $r \ge 1$. Define $\hat{H}^\sharp(k_x, k_y) : L^2_+(S^1, \C^r) \to L^2_+(S^1, \C^r)$ to be the self-adjoint bounded operator obtained as the compression of the multiplication with $\hat{H}(k_x, k_y, \cdot)$ by the projection $L^2(S^1, \C^r) \to L^2_+(S^1, \C^r)$. The Fourier transform identifies $L^2(S^1, \C^r)$ with the space $L^2(\Z, \C^r)$ of square summarable functions on $\Z$ with values in $\C^r$
$$
L^2(\Z, \C^r) =
\bigg\{ \psi = (\psi(n))_{n \in \Z} \bigg|\
\psi(n) \in \C^r, \sum_{n \in \Z} \lVert \psi(n) \rVert^2 < + \infty \bigg\},
$$
and the subspace $L^2_+(S^1, \C^r) \subset L^2(S^1, \C^r)$ with $L^2(\mathbb{N}, \C^r) \subset L^2(\Z, \C^r)$. The operator $\acute{H}^\sharp(k_x, k_y) : L^2(\mathbb{N}, \C^r) \to L^2(\mathbb{N}, \C^r)$ mentioned in \S\ref{sec:introduction} is the one arising as the Fourier transform of $\hat{H}^\sharp(k_x, k_y)$. Since $\acute{H}^\sharp(k_x, k_y)$ and $\hat{H}^\sharp(k_x, k_y)$ have the same spectral data, we will work with $\hat{H}^\sharp(k_x, k_y)$.

\medskip

Let $W = \{ k \in T^3 |\ \det\hat{H}(k) = 0 \}$ be the subset on that $\hat{H}$ is not invertible, and $\pi : T^3 \to T^2$ the projection $\pi(k_x, k_y, k_z) = (k_x, k_y)$ as before. If $(k_x, k_y) \in T^2 \backslash \pi(W)$, then $\hat{H}(k_x, k_y, k_z)$ is invertible for $k_z \in S^1$, so that $\hat{H}^\sharp(k_x, k_y)$ is Fredholm. Assume that $\hat{H}^\sharp(k_x, k_y)$ has both positive and negative essential spectrum. In this case, we have a map $\hat{H}^\sharp : T^2 \backslash \pi(W) \to \mathrm{Fred}^1(L^2_+(S^1, \C^r))$, which is continuous.

\medskip

In general, if $X$ is a path connected space and $A : X \to \mathrm{Fred}^1$ is a continuous map, then we can construct a homology class $\mathrm{Sf}(A) \in H^1(X)$ by using spectral flows: Given a closed path $\gamma : S^1 \to X$, the spectral flow $\mathrm{sf}(A \circ \gamma) \in \Z$ makes sense. We can then define a map $\mathrm{Sf}(A) : \pi_1(X) \to \Z$ to be $(\mathrm{Sf}(A))([\gamma]) = \mathrm{sf}(A \circ \gamma)$. It turns out that $\mathrm{Sf}(A)$ is a homomorphism. The universal coefficient theorem and the Hurewicz theorem  \cite{Hatch} lead to the isomorphisms
$$
H^1(X) \cong \mathrm{Hom}(H_1(X); \Z)
\cong \mathrm{Hom}(\pi_1(X); \Z).
$$
Therefore we get $\mathrm{Sf}(A) \in H^1(X)$ for any continuous map $A : X \to \mathrm{Fred}^1$. Applying this general construction to $\hat{H}^\sharp|_{T^2 \backslash \pi(W)} : T^2 \backslash \pi(W) \to \mathrm{Fred}^1$, we define our edge index.

\begin{dfn} \label{dfn:edge_index}
Let $\hat{H} : T^3 \to \mathrm{Herm}(\C^r)$ be a continuous map such that $W = \{ k \in T^3 |\ \det\hat{H}(k) = 0 \}$ is a finite set. When $\hat{H}^\sharp(k_x, k_y)$ is neither essentially positive nor negative for each $(k_x, k_y) \in T^2 \backslash \pi(W)$, we define $\mathcal{F}(\hat{H}^\sharp) \in H_1(T^2, \pi(W))$ to be the Poincar\'{e} dual of $\mathrm{Sf}(H^\sharp|_{T^2 \backslash \pi(W)}) \in H^1(T^2 \backslash \pi(W))$ multiplied by $-1$,
$$
\mathcal{F}(\hat{H}^\sharp) = 
- \mathrm{PD}(\mathrm{Sf}(\hat{H}^\sharp|_{T^2 \backslash \pi(W)}))
\in H_1(T^2, \pi(W)).
$$
\end{dfn}

\begin{lem} \label{lem:spectral_flow_class_in_basis_expression}
Let $\hat{H} : T^3 \to \mathrm{Herm}(\C^r)$ be as in Definition \ref{dfn:edge_index}. Under choices in Lemma \ref{lem:choice}, we have
$$
\mathcal{F}(\hat{H}^\sharp) 
= -\mathrm{sf}(\hat{H}^\sharp|_{\{ k^0_x \} \times S^1}) [\overline{\alpha}_x]
+ \mathrm{sf}(\hat{H}^\sharp|_{S^1 \times \{ k^0_y \}}) [\overline{\alpha_y}]
+ \sum_{i = 1}^{n-1} 
\mathrm{sf}(\hat{H}^\sharp|_{\partial \overline{D}_i}) [\overline{\alpha}_i],
$$
where $D_1, \ldots, D_n \subset T^2$ are open disks such that $\overline{w}_i \in D_i$ and $\overline{D}_i \cap \overline{D}_j = \emptyset$ for $i \neq j$.
\end{lem}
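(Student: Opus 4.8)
The plan is to combine the cohomology-basis computation of Proposition~\ref{prop:cohomology_basis} and Corollary~\ref{cor:cohomology_basis} with the defining formula $\mathcal{F}(\hat{H}^\sharp) = -\mathrm{PD}(\mathrm{Sf}(\hat{H}^\sharp|_{T^2\backslash\pi(W)}))$ from Definition~\ref{dfn:edge_index}. Write $\omega = \mathrm{Sf}(\hat{H}^\sharp|_{T^2\backslash\pi(W)}) \in H^1(T^2\backslash\pi(W))$. Since the Poincar\'e duality map commutes with the push-forward, we have $\mathcal{F}(\hat{H}^\sharp) = -\pi_*(\mathrm{PD}(\omega))$, and Corollary~\ref{cor:cohomology_basis} (applied in the two-dimensional version, with the disks $D_i \subset T^2$ replacing the balls $\overline{D(w_i^j)}$ and the surface subtori $\{k^0_x\}\times S^1$, $S^1\times\{k^0_y\}$ replacing the subtori in $T^3$) expresses $\pi_*(\mathrm{PD}(\omega))$ in terms of the three integrals $\int_{\{k^0_x\}\times S^1}\omega$, $\int_{S^1\times\{k^0_y\}}\omega$, and $\int_{\partial\overline{D}_i}\omega$ against the basis $[\overline{\alpha}_x], [\overline{\alpha}_y], [\overline{\alpha}_i]$. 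So the lemma reduces to identifying each of these integrals with the appropriate spectral flow.

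First I would establish the general principle that for a continuous map $A : X \to \mathrm{Fred}^1$ and a closed oriented $1$-dimensional submanifold $C \subset X$ with fundamental class $[C] \in H_1(X)$, the pairing $\langle \mathrm{Sf}(A), [C]\rangle$ equals $\mathrm{sf}(A|_C)$, the spectral flow around $C$ with its orientation; this is immediate from the construction of $\mathrm{Sf}(A)$ as the homomorphism $\pi_1(X) \to \Z$ sending $[\gamma]$ to $\mathrm{sf}(A\circ\gamma)$, together with the fact that evaluating a cohomology class on $[C]$ is the same as evaluating it on the homotopy class of a loop parametrizing $C$. Dually, $\int_C \mathrm{PD}(\mathrm{Sf}(A))$-type intersection numbers with the Poincar\'e dual translate into the same spectral flows up to the orientation conventions already fixed in Proposition~\ref{prop:cohomology_basis}. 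Applying this with $A = \hat{H}^\sharp|_{T^2\backslash\pi(W)}$ and with $C$ equal successively to $\{k^0_x\}\times S^1$, $S^1\times\{k^0_y\}$, and the boundary circles $\partial\overline{D}_i$ (all of which lie in $T^2\backslash\pi(W)$ by the choices in Lemma~\ref{lem:choice}(C1), (C2) — the subtori avoid $\pi(W)$, and the $D_i$ can be shrunk to be disjoint from each other and to contain only $\overline{w}_i$), the three integrals $q_x, q_y, q_i^j$ become $\mathrm{sf}(\hat{H}^\sharp|_{\{k^0_x\}\times S^1})$, $-\mathrm{sf}(\hat{H}^\sharp|_{S^1\times\{k^0_y\}})$, and $-\mathrm{sf}(\hat{H}^\sharp|_{\partial\overline{D}_i})$ (summed over the fiber in the last case). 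Substituting into Corollary~\ref{cor:cohomology_basis} and multiplying by $-1$ per Definition~\ref{dfn:edge_index} yields the stated formula.

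The main obstacle is bookkeeping of signs and orientations: Proposition~\ref{prop:cohomology_basis} carries a $-1$ on the $q_y$ term and a $-1$ in $q_i^j = -\int_{\partial\overline{D(w_i^j)}}\omega$ coming from the boundary orientation, Corollary~\ref{cor:cohomology_basis} folds these in, and Definition~\ref{dfn:edge_index} introduces another global $-1$; one must check that all of these cancel to produce exactly the signs displayed in the lemma (namely $-$ on $[\overline{\alpha}_x]$, $+$ on $[\overline{\alpha}_y]$, $+$ on $[\overline{\alpha}_i]$). A secondary point requiring care is that Corollary~\ref{cor:cohomology_basis} is stated for $T^3$ and must be transported to the two-dimensional setting $T^2\backslash\pi(W)$; since the proofs of Lemmas~\ref{lem:Mayer_Vietoris}, \ref{lem:Poincare_duality} and Proposition~\ref{prop:cohomology_basis} were written to apply verbatim with dimensions lowered by one (indeed Lemma~\ref{lem:Poincare_duality} already records the $T^2$ version), this is routine, but one should note that in dimension two there is no analogue of $\alpha_z$, which is why only $[\overline{\alpha}_x]$ and $[\overline{\alpha}_y]$ appear. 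Apart from these conventions the argument is a direct substitution, so I expect no conceptual difficulty beyond the sign accounting.
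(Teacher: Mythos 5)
Your proposal is correct and follows essentially the same route as the paper: establish the two-dimensional analogue of Proposition~\ref{prop:cohomology_basis} to expand $\mathrm{PD}(\omega)$ for $\omega\in H^1(T^2\backslash\pi(W))$ in the basis $[\overline{\alpha}_x],[\overline{\alpha}_y],[\overline{\alpha}_i]$, then identify each coefficient $\int_C\mathrm{Sf}(\hat{H}^\sharp)$ with $\mathrm{sf}(\hat{H}^\sharp|_C)$ directly from the definition of $\mathrm{Sf}$, and the signs work out as you indicate. One small remark: the appeal to ``Poincar\'e duality commutes with push-forward'' is superfluous here, since Definition~\ref{dfn:edge_index} involves no push-forward --- $\mathrm{Sf}(\hat{H}^\sharp|_{T^2\backslash\pi(W)})$ already lives on the surface momentum space, so only the two-dimensional duality of Lemma~\ref{lem:Poincare_duality} is needed.
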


In Lemma \ref{lem:spectral_flow_class_in_basis_expression}, we give $S^1 \times \{ k^0_y \} = \{ k^0_x \} \times S^1 = S^1$ the standard parametrization, and regard $\hat{H}^\sharp|_{S^1 \times \{ k^0_y \}}$ and $\hat{H}^\sharp|_{\{ k^0_x \} \times S^1}$ as continuous maps $S^1 \to \mathrm{Fred}^1$. Similarly, we regard $\hat{H}^\sharp|_{\partial \overline{D}_i} : S^1 \to \mathrm{Fred}^1$, taking the ``anticlockwise'' parametrization on the boundary circle $\partial \overline{D}_i = S^1$, namely, the orientation on $\overline{D}_i$ compatible with the parametrization agrees with that induced from $T^2$.

\begin{proof}
In the same manner as in the proof of Proposition \ref{prop:cohomology_basis}, we have
\begin{align*}
\mathrm{PD}(\omega)
&= \bigg( \int_{\{ k^0_x \} \times S^1} \omega \bigg)
[\overline{\alpha}_x]
- \bigg( \int_{S^1 \times \{ k^0_y \}} \omega \bigg)
[\overline{\alpha_y}]
-
\sum_{i = 1}^{n-1}
\bigg(
\int_{\partial \overline{D}_i} \omega
\bigg)
[\overline{\alpha}_i]
\end{align*}
for any $\omega \in H^1(T^2 \backslash \pi(W))$. If $\omega = -\mathrm{Sf}(\hat{H}^\sharp|_{T^2 \backslash \pi(W)})$, then
$$
\int_{\{ k^0_x \} \times S^1} \omega
= 
-(\mathrm{Sf}(\hat{H}^\sharp|_{T^2 \backslash \pi(W)}))(\{ k^0_x \} \times S^1)
= -\mathrm{sf}(\hat{H}^\sharp|_{\{ k^0_x \} \times S^1}).
$$
The same expressions hold for other coefficients of the base elements.
\end{proof}

\begin{prop} \label{prop:push_forward_of_bulk_index}
The homology class $\mathcal{F}(\hat{H}^\sharp)$ in Definition \ref{dfn:edge_index} satisfies 
$$
\pi_*\mathrm{PD}(c_1(E_{\hat{H}})) = \mathcal{F}(\hat{H}^\sharp).
$$
\end{prop}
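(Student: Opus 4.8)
The plan is to compare the two sides of the asserted equality coefficient-by-coefficient in the basis $\{[\overline{\alpha}_x], [\overline{\alpha}_y], [\overline{\alpha}_i]\}$ of $H_1(T^2,\pi(W))$. By Corollary \ref{cor:cohomology_basis} applied to $\omega = c_1(E_{\hat H})$, the left-hand side is
\begin{align*}
\pi_*\mathrm{PD}(c_1(E_{\hat H}))
&= \bigg( \int_{\{k^0_x\}\times S^1\times S^1} c_1(E_{\hat H}) \bigg)[\overline{\alpha}_x]
- \bigg( \int_{S^1\times\{k^0_y\}\times S^1} c_1(E_{\hat H}) \bigg)[\overline{\alpha}_y] \\
&\quad - \sum_{i=1}^{n-1}\bigg(\sum_{j=1}^{m_i}\int_{\partial\overline{D(w_i^j)}} c_1(E_{\hat H})\bigg)[\overline{\alpha}_i],
\end{align*}
while by Lemma \ref{lem:spectral_flow_class_in_basis_expression} the right-hand side is
$$
\mathcal{F}(\hat H^\sharp) = -\mathrm{sf}(\hat H^\sharp|_{\{k^0_x\}\times S^1})[\overline{\alpha}_x] + \mathrm{sf}(\hat H^\sharp|_{S^1\times\{k^0_y\}})[\overline{\alpha}_y] + \sum_{i=1}^{n-1}\mathrm{sf}(\hat H^\sharp|_{\partial\overline{D}_i})[\overline{\alpha}_i].
$$
So it suffices to match coefficients. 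Since the basis is free, the whole proposition reduces to three families of scalar identities, each of the same shape: the integral of the first Chern class of a Bloch bundle over a $2$-torus equals (minus) the spectral flow of the associated compressed family over a circle.

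The key observation is that each such identity is exactly the two-dimensional bulk-edge correspondence of Proposition \ref{prop:bulk_edge_for_2d_insulators}. First I would handle $[\overline{\alpha}_x]$: restrict $\hat H$ to the subtorus $\{k^0_x\}\times S^1\times S^1 \subset T^3$, which misses $W$ by Lemma \ref{lem:choice}(C1), so it is a map from a $2$-torus (in the coordinates $(k_y,k_z)$) to $\mathrm{Herm}(\C^r)^*$; its Bloch bundle is the restriction of $E_{\hat H}$, so $\int_{\{k^0_x\}\times S^1\times S^1} c_1(E_{\hat H}) = c_1(\hat H|_{\{k^0_x\}\times S^1\times S^1})$. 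The compression in the $k_z$-variable over this $2$-torus produces the family $\hat H^\sharp|_{\{k^0_x\}\times S^1}$, a map $S^1\to\mathrm{Fred}^1$ whose spectral flow is $\mathrm{sf}(\hat H^\sharp|_{\{k^0_x\}\times S^1})$. One must check that $\hat H^\sharp(k_x^0,k_y)$ is neither essentially positive nor negative; this follows since by hypothesis in Definition \ref{dfn:edge_index} the operator $\hat H^\sharp$ has this property on all of $T^2\setminus\pi(W)$, and $\{k^0_x\}\times S^1$ lies there. Hence Proposition \ref{prop:bulk_edge_for_2d_insulators} gives $\int_{\{k^0_x\}\times S^1\times S^1} c_1(E_{\hat H}) = -\mathrm{sf}(\hat H^\sharp|_{\{k^0_x\}\times S^1})$, which matches the $[\overline{\alpha}_x]$-coefficients. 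The $[\overline{\alpha}_y]$-case is identical after observing that the sign discrepancies on the two sides cancel.

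For the $[\overline{\alpha}_i]$-coefficients the argument is slightly more involved because the relevant surface is not a subtorus but the boundary $2$-sphere(s) $\partial\overline{D(w_i^j)}$ around the Weyl points, and on the edge side we have $\partial\overline{D}_i$, a circle in $T^2$ surrounding $\overline{w}_i$. Here I would use the following: the restriction of $\hat H^\sharp$ to $\partial\overline{D}_i$ is homotopic, within maps $\partial\overline{D}_i \to \mathrm{Fred}^1$, to the restriction over a small circle winding once around $\overline w_i$; and over such a circle the spectral flow equals the sum over $j$ of the spectral flows over small circles around each $\overline{w}_i$... no, more precisely, since $\pi^{-1}(\overline w_i) = \{w_i^1,\dots,w_i^{m_i}\}$ and the fibre over a nearby point of $T^2\setminus\pi(W)$ passes near all of these, I would degenerate $\partial \overline D_i$ to a loop that threads vertically past each $w_i^j$ and argue additivity of the spectral flow, reducing to the contribution of a single $w_i^j$, which by a local computation near a Weyl point (where $\hat H$ looks like $\sum(k_i - k^0_i)\sigma_i$) equals $-\int_{\partial\overline{D(w_i^j)}} c_1(E_{\hat H})$; note $\int_{\partial\overline{D(w_i^j)}} c_1(E_{\hat H}) = \pm 1$ is the monopole charge. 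The main obstacle I expect is making this last step clean: relating the spectral flow of the edge family $\hat H^\sharp$ over a circle in $T^2$ encircling $\overline w_i$ to the \emph{sum} of the local Chern numbers of $E_{\hat H}$ over the spheres around the several Weyl points lying above $\overline w_i$, since the compression mixes the $k_z$-dependence globally and the naive homotopy argument must be carried out carefully, presumably by using a slab neighbourhood $D_i\times S^1$ of $\pi^{-1}(\overline w_i)$ and the $2$-dimensional correspondence on its boundary together with Lemma \ref{lem:Mayer_Vietoris}-type additivity of Chern numbers. Once that is in place, all three coefficient families agree and the proposition follows.
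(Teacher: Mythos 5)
Your reduction to coefficient matching via Corollary \ref{cor:cohomology_basis} and Lemma \ref{lem:spectral_flow_class_in_basis_expression} is exactly the paper's strategy, and your treatment of the $[\overline{\alpha}_x]$- and $[\overline{\alpha}_y]$-coefficients (restrict $\hat H$ to the subtorus missing $W$, apply Proposition \ref{prop:bulk_edge_for_2d_insulators}) is correct and identical to the paper's.

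The gap is in the $[\overline{\alpha}_i]$-coefficients, where you never actually establish $\sum_{j=1}^{m_i}\int_{\partial\overline{D(w_i^j)}}c_1(E_{\hat H})=-\mathrm{sf}(\hat H^\sharp|_{\partial\overline{D}_i})$. Your primary route --- degenerate $\partial\overline{D}_i$ to a loop threading past each $w_i^j$, invoke additivity of spectral flow, and compute the contribution of a single Weyl point from the local form $\sum_i(k_i-k_i^0)\sigma_i$ --- would not work at the level of generality of this Proposition: Definition \ref{dfn:edge_index} assumes only that $\hat H:T^3\to\mathrm{Herm}(\C^r)$ is continuous with $W$ finite, so no linear local model near the points of $W$ is available, and the spectral flow of the compressed family is a global quantity in $k_z$ that does not localize near individual $w_i^j$ in the way you describe. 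The clean argument --- which you name only as a guess in your last sentence --- is the one the paper uses: choose $\overline{D(w_i^j)}\subset\pi^{-1}(D_i)$ and observe that $\pi^{-1}(\overline{D}_i)\setminus\bigsqcup_j D(w_i^j)$ is a $3$-chain in $T^3\setminus W$ whose boundary is $\partial\overline{D}_i\times S^1$ minus the spheres $\partial\overline{D(w_i^j)}$; since $c_1(E_{\hat H})$ is defined on all of $T^3\setminus W$, this gives
$$
\sum_{j=1}^{m_i}\int_{\partial\overline{D(w_i^j)}}c_1(E_{\hat H})
=\int_{\partial\overline{D}_i\times S^1}c_1(E_{\hat H})
=c_1\bigl(\hat H|_{\partial\overline{D}_i\times S^1}\bigr),
$$
and the right-hand side is the Chern number of a family over a genuine $2$-torus ($\partial\overline{D}_i\times S^1\cong T^2$), so Proposition \ref{prop:bulk_edge_for_2d_insulators} applies verbatim to yield $-\mathrm{sf}(\hat H^\sharp|_{\partial\overline{D}_i})$. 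No local analysis near the Weyl points and no spectral-flow additivity argument is needed. With that step filled in, your proof closes.
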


\begin{proof}
By Corollary \ref{cor:cohomology_basis}, we have
\begin{align*}
\pi_*\mathrm{PD}(c_1(E_{\hat{H}}))
&= c_1(\hat{H}|_{\{ k^0_x \} \times S^1 \times S^1}) 
[\overline{\alpha}_x]
- c_1(\hat{H}|_{S^1 \times \{ k^0_y \} \times S^1}) 
[\overline{\alpha_y}] \\
&\quad -
\sum_{i = 1}^{n-1}
\bigg(
\sum_{j = 1}^{m_i} \int_{\partial \overline{D(w_i^j)}} c_1(E_{\hat{H}})
\bigg)
[\overline{\alpha}_i].
\end{align*}
For the moment, we choose and fix $i = 1, \ldots, n-1$. For each $j = 1, \dots, m_i$, we can choose the disk $D(w_i^j)$ so as to be $\overline{D(w_i^j)} \subset \pi^{-1}(D_i)$. The inverse image $\pi^{-1}(\overline{D}_i) \subset T^3$ of the closed disk $\overline{D}_i \subset T^2$ is homeomorphic to the solid torus $\overline{D}_i \times S^1$. Regarding $\pi^{-1}(\overline{D}_i) \backslash \sqcup_{j} D(w_i^j)$ as a homology $2$-chain, we get
$$
\sum_{j = 1}^{m_i} \int_{\partial \overline{D(w_i^j)}} c_1(E_{\hat{H}})
= \int_{\partial \overline{D}_i \times S^1} c_1(E_{\hat{H}})
= c_1(\hat{H}|_{\partial \overline{D}_i \times S^1}).
$$
Now, Proposition \ref{prop:bulk_edge_for_2d_insulators} and Lemma \ref{lem:spectral_flow_class_in_basis_expression} complete the proof.
\end{proof}

%%%%%%%%%%%%%%%%%%%%%%%%%%%%%%%%%%%%%%%%%%%%%%%%
\subsection{Proof of bulk-edge correspondence for Weyl semimetals}

We here prove our main theorem (Theorem \ref{thm:main_in_introduction}). Recall that, in Assumption \ref{assumption}, a map $\hat{H} : T^3 \to \mathrm{Herm}(\C^2)_0$ is supposed to be expressed as 
$$
\hat{H}(k_x, k_y, k_z)
=
\left(
\begin{array}{cc}
b(k_x, k_y) & a(k_x, k_y) + e^{ik_z} \\
a(k_x, k_y) + e^{-ik_z} & -b(k_x, k_y)
\end{array}
\right)
$$
in terms of smooth maps $a, b : T^2 \to \R$ with the following properties: 
\begin{enumerate}
\item[(a)]
The subset $a^{-1}(\{ \pm 1 \}) \cap b^{-1}(0) \subset T^2$ consists of a finite number of points.

\item[(b)]
Each $w \in \pi(W)$ admits an open neighbourhood $U_w \subset T^2$ on which we have $\det J(k_x, k_y) \neq 0$ for $(k_x, k_y) \in (U_w \backslash \{ w \}) \cap a^{-1}((-1, 1)) \cap b^{-1}(0)$, where $\det J : T^2 \to \R$ is the determinant of the Jacobian of $(a, b) : T^2 \to \R^2$.

\item[(c)]
If we restrict the domain of $b : T^2 \to \R$ to the open subset $a^{-1}((-1, 1)) \subset T^2$, then $0 \in \R$ is a regular value and the inverse image $a^{-1}((-1, 1)) \cap b^{-1}(0)$ consists of a finite number of connected components.

\item[(d)]
There exists $(k^0_x, k^0_y) \in T^2$ such that:
\begin{itemize}
\item
For any $k \in \R/2\pi \Z$, we have $(k^0_x, k), (k, k^0_y) \not\in a^{-1}(\{ \pm 1 \}) \cap b^{-1}(0)$.

\item
For any $k_y \in \R/2\pi \Z$ such that $(k^0_x, k_y) \in a^{-1}((-1, 1)) \cap b^{-1}(0)$, we have $\det J(k^0_x, k_y) \neq 0$. Also, for any $k_x \in \R/2\pi \Z$ such that $(k_x, k^0_y) \in a^{-1}((-1, 1)) \cap b^{-1}(0)$, we have $\det J(k_x, k^0_y) \neq 0$.

\end{itemize}
\end{enumerate}

Theorem \ref{thm:main_in_introduction} is equivalent to:

\begin{thm} \label{thm:main}
Let $\hat{H} : T^3 \to \mathrm{Herm}(\C^2)_0$ be as in Assumption \ref{assumption}. Then there exists a homology class $\mathcal{F}(\hat{H}^\sharp) \in H_1(T^2, \pi(W))$ such that:
\begin{enumerate}
\item
$\mathcal{F}(\hat{H}^\sharp)$ is defined by the spectral data of the edge Hamiltonian $\{ \hat{H}^\sharp(k) \}_{k \in T^2}$ associated to $\hat{H}$.

\item
$\mathcal{F}(\hat{H}^\sharp) = \pi_*(\mathrm{PD}(c_1(E_{\hat{H}})))$.

\item
If $F \neq \emptyset$, then $F$ is recovered as the images of paths which represent $\mathcal{F}(\acute{H}^\sharp)$.

\end{enumerate}
\end{thm}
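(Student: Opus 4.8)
The three assertions lie at very different depths. Assertion (1) is essentially automatic: by Assumption~\ref{assumption}(a) the set $W$ is finite, and the explicit formula for $\sigma_{\mathrm{ess}}(\hat{H}^\sharp_{\mathrm{loc}}(a,b))$ in \S\ref{subsec:summary_spectral_data} shows that $\hat{H}^\sharp(k_x,k_y)$ is neither essentially positive nor essentially negative at every $(k_x,k_y)\in T^2\setminus\pi(W)$; hence $\hat{H}$ satisfies the hypotheses of Definition~\ref{dfn:edge_index}, so $\mathcal{F}(\hat{H}^\sharp)$ is defined and, by construction, depends only on the spectral flows of the edge family $\{\hat{H}^\sharp(k)\}_{k\in T^2}$. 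Assertion (2) is then literally Proposition~\ref{prop:push_forward_of_bulk_index}, which already holds for any $\hat{H}$ satisfying Definition~\ref{dfn:edge_index}. So the substantive claim is (3), and the plan runs as follows.

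First I would pin down $F$ as a subset of $T^2$. Since $\hat{H}(k_x,k_y,\,\cdot\,)=\hat{H}_{\mathrm{loc}}(a(k_x,k_y),b(k_x,k_y),\,\cdot\,)$, the compression $\hat{H}^\sharp(k_x,k_y)$ equals $\hat{H}^\sharp_{\mathrm{loc}}(a(k_x,k_y),b(k_x,k_y))$, so by the corollary to Proposition~\ref{prop:spectral_data} one has $F=(a,b)^{-1}([-1,1]\times\{0\})$ and $F\setminus\pi(W)=\Sigma$, where $\Sigma:=a^{-1}((-1,1))\cap b^{-1}(0)$. By Assumption~\ref{assumption}(c), $\Sigma$ is a properly embedded $1$-submanifold of the open set $a^{-1}((-1,1))$ with finitely many components, each an embedded circle or an embedded open arc. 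For an arc component $C$: it is closed in $a^{-1}((-1,1))$ and contained in the closed set $b^{-1}(0)$, so every point of $\overline{C}\setminus C$ has $|a|=1$ and $b=0$, i.e.\ lies in $\pi(W)$; since $\pi(W)$ is finite and each end of $C$ has connected limit set, each end of $C$ converges to a single point of $\pi(W)$. Using Assumption~\ref{assumption}(b), which controls $\det J$ on $\Sigma$ near each $w\in\pi(W)$, together with (a), I would upgrade this to: near each $w$ the set $\Sigma$ is a finite union of half-open arcs each limiting to $w$, so that $\overline{\Sigma}$ is a finite graph embedded in $T^2$ whose vertices are the points of $\pi(W)$, whose edges are the closures of the non-circle components of $\Sigma$ (honest embedded arcs joining points of $\pi(W)$), together with finitely many embedded circles disjoint from $\pi(W)$; and $F$ coincides with $\overline{\Sigma}$ apart from at most finitely many points of $\pi(W)$ that are isolated in $F$ and will carry vanishing local spectral flow. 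I expect this local analysis of $F$ near the projected Weyl points to be the technical core of the argument, and it is what forces the precise shape of Assumption~\ref{assumption}(a)--(b).

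Next I would promote $F$ to a relative $1$-cycle. The level set $b^{-1}(0)$ carries the co-orientation given by $db$; paired with the orientation of $T^2$ this orients $\Sigma$, and the orientation extends continuously over the closures of the arc components. The resulting singular $1$-chain $z$---a sum of parametrisations of the edges and circles of the graph $F$---is a cycle of the pair $(T^2,\pi(W))$: the boundary of each oriented edge is a $0$-chain supported on $\pi(W)$, and each circle is already closed. By construction the support of $z$ is exactly $F$, so assertion (3) will follow once we know $[z]=\mathcal{F}(\hat{H}^\sharp)$ in $H_1(T^2,\pi(W))$, reversing the global co-orientation if necessary to fix the overall sign.

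Finally I would verify $[z]=\mathcal{F}(\hat{H}^\sharp)$. By Lemma~\ref{lem:Poincare_duality} it suffices to show that $-\mathrm{PD}^{-1}([z])$ and $\mathrm{Sf}(\hat{H}^\sharp|_{T^2\setminus\pi(W)})$ coincide in $H^1(T^2\setminus\pi(W))$, and for that it suffices that they agree on every loop $\gamma$ in $T^2\setminus\pi(W)$. Here Proposition~\ref{prop:spectral_data} is the decisive input: along such a $\gamma$ the essential spectrum of $\hat{H}^\sharp$ stays uniformly away from $0$ (as $\gamma$ avoids $\pi(W)$), the eigenvalues $\pm\sqrt{b^2+1}$ occurring at $a=0$ are bounded away from $0$, and the only eigenvalue that can reach $0$ is the simple gap eigenvalue $E=b$, present precisely where $|a|<1$ and vanishing precisely on $\Sigma$. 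Hence $\mathrm{sf}(\hat{H}^\sharp|_\gamma)$ equals, up to one fixed global sign (absorbed into the co-orientation), the algebraic intersection number of $\gamma$ with $\Sigma$ endowed with the $db$-co-orientation, which is exactly the value of $-\mathrm{PD}^{-1}([z])$ on $[\gamma]$; since both sides depend only on $[\gamma]\in H_1(T^2\setminus\pi(W))$, one reduces to transverse loops and the equality holds for all $\gamma$. Applying $\mathrm{PD}$ and Definition~\ref{dfn:edge_index} then gives $\mathcal{F}(\hat{H}^\sharp)=[z]$; in basis form this is the content of Lemma~\ref{lem:spectral_flow_class_in_basis_expression} together with Assumption~\ref{assumption}(d), which is there precisely so that the reference $1$-cycles $\{k^0_x\}\times S^1$ and $S^1\times\{k^0_y\}$ avoid $\pi(W)$ and meet $\Sigma$ cleanly. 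This completes (3). The principal difficulty is the local structure of $F$ at $\pi(W)$ described above; a lesser but genuine annoyance is keeping the orientation and spectral-flow sign conventions mutually consistent throughout.
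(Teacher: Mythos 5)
Your proposal is correct and follows the paper's strategy in all essentials: (1) and (2) are Definition~\ref{dfn:edge_index} and Proposition~\ref{prop:push_forward_of_bulk_index}; for (3) you decompose $F\setminus\pi(W)=a^{-1}((-1,1))\cap b^{-1}(0)$ into finitely many embedded arcs and circles (Lemma~\ref{lem:fermi_arc_as_submanifold}), extend the arcs to relative cycles with endpoints in $\pi(W)$ (Lemma~\ref{lem:extension}), identify the spectral flow along loops avoiding $\pi(W)$ with a signed intersection number, and match the resulting cycle with $\mathcal{F}(\hat{H}^\sharp)$ by Poincar\'{e} duality. The one genuine (and worthwhile) deviation is in the key spectral-flow-equals-intersection-number step, the paper's Lemma~\ref{lem:spectral_flow_as_intersection_number}: you co-orient the level set by $db$ and read the crossing sign directly as $\mathrm{sign}(d(b\circ\gamma)/ds)$, using that the unique gap eigenvalue is exactly $E=b$ by Proposition~\ref{prop:spectral_data}; the paper instead fixes parametrizations, introduces the sign $\epsilon(f_j)$ via the function $c_f$ of Lemma~\ref{lem:function_associated_to_Fermi_arc}, and determines the crossing sign by pushing the picture through the local diffeomorphism $(a,b)$, which is why its lemma (and the role of $\det J$ in Assumptions~(b) and (d)) requires $\det J\neq 0$ at the intersection points. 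Your version needs only transversality of $\gamma$ to the level set (equivalently $db(\dot\gamma)\neq 0$ there), and the two bookkeepings agree since $\mathrm{sign}\bigl(db(\dot\gamma)\bigr)=\epsilon(f)\cdot(\text{local intersection number})$ by the defining relation for $c_f$; so your route is slightly more economical. The remaining differences are cosmetic: you pair the two classes in $H^1(T^2\setminus\pi(W))$ against arbitrary loops, whereas the paper evaluates on the explicit basis of Lemma~\ref{lem:choice} (this is Lemma~\ref{lem:Fermi_acr_and_spectral_flow}); and your ``finite graph'' refinement near $\pi(W)$ is more than is needed. One point to make explicit, which you only hint at: when $a^{-1}((-1,1))\cap b^{-1}(0)=\emptyset$ but $F=\pi(W)\neq\emptyset$ (or when some point of $\pi(W)$ is isolated in $F$), the class $\mathcal{F}(\hat{H}^\sharp)$ contributes nothing there and $F$ is recovered by adjoining constant paths at those points, exactly as in the first case of the paper's proof.
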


The homology class $\mathcal{F}(\hat{H}^\sharp)$ is of course the one as given in Definition \ref{dfn:edge_index}. The proof of the theorem will be given, after some lemmas.

\medskip

By definition, $\hat{H}(k_x, k_y, k_z)$ is expressed as
$$
\hat{H}(k_x, k_y, k_z) = \hat{H}_{\mathrm{loc}}(a(k_x, k_y), b(k_x, k_y), k_z)
$$
in terms of the local model $\hat{H}_{\mathrm{loc}}$. Hence the set of Weyl points $W \subset T^3$, its image under the projection $\pi : T^3 \to T^2$ and the Fermi arc $F \subset T^2$ are given by
\begin{align*}
W &= \{ (k_x, k_y, 0) |\ a(k_x, k_y) = -1, b(k_x, k_y) = 0 \} \\
&\quad \cup \{ (k_x, k_y, \pi) |\ a(k_x, k_y) = 1, b(k_x, k_y) = 0 \}, \\
\pi(W) &= \{ (k_x, k_y) |\ a(k_x, k_y) = \pm 1, b(k_x, k_y) = 0 \}
= a^{-1}(\{ \pm 1 \}) \cap b^{-1}(0), \\
F &= \{ (k_x, k_y) |\ \lvert a(k_x, k_y) \rvert \le 1, b(k_x, k_y) = 0 \}
= a^{-1}([-1, 1]) \cap b^{-1}(0).
\end{align*}
By Assumption \ref{assumption} (a), the sets $W$ and $\pi(W)$ consist of finite number of points.

\begin{lem} \label{lem:fermi_arc_as_submanifold}
If $a^{-1}((-1, 1)) \cap b^{-1}(0) \neq \emptyset$, then there are a finite number of smooth embeddings $f_i : (0, 1) \to T^2$, ($i = 1, \ldots, \ell$) and $f_i^c : S^1 \to T^2$, ($i = 1, \ldots, \ell^c$) such that
$$
a^{-1}((-1, 1)) \cap b^{-1}(0) = 
f_1((0, 1)) \sqcup \cdots f_\ell((0, 1)) \sqcup
f_1^c(S^1) \sqcup \cdots f_{\ell^c}^c(S^1).
$$
\end{lem}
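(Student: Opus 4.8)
The statement is essentially a packaging of the regular value theorem applied to the restricted map $b|_{a^{-1}((-1,1))}$, so the plan is to invoke basic differential topology. First I would observe that $a^{-1}((-1,1)) \subset T^2$ is an open subset, hence a $2$-dimensional manifold without boundary. By Assumption \ref{assumption} (c), $0 \in \R$ is a regular value of the restriction $b : a^{-1}((-1,1)) \to \R$, so the preimage $M := a^{-1}((-1,1)) \cap b^{-1}(0)$ is a $1$-dimensional embedded submanifold of $a^{-1}((-1,1))$, and therefore of $T^2$. Again by Assumption \ref{assumption} (c), $M$ has only finitely many connected components. The classification of $1$-manifolds then tells us that each connected component of $M$ is diffeomorphic either to an open interval $(0,1)$ or to a circle $S^1$ (the closed interval $[0,1]$ is excluded because $M$ has no boundary; a half-open interval is not second-countable-paracompact-obstructed but is simply not a connected $1$-manifold without boundary up to diffeomorphism — it is diffeomorphic to $(0,1)$). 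This gives the desired decomposition with $\ell$ interval components and $\ell^c$ circle components.

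The one point that needs a word of care is that the parametrizing maps $f_i$ and $f_i^c$ are required to be smooth embeddings \emph{into} $T^2$, not merely abstract diffeomorphisms onto the components. For the circle components this is immediate: a connected compact $1$-manifold diffeomorphic to $S^1$ sits inside $T^2$ as the image of a smooth embedding $f_i^c : S^1 \to T^2$, namely the inclusion composed with a diffeomorphism $S^1 \to f_i^c(S^1)$. For the interval components the image $f_i((0,1))$ is a connected non-compact $1$-submanifold, and the diffeomorphism $(0,1) \to f_i((0,1))$ composed with the inclusion into $T^2$ is a smooth injective immersion; since $(0,1)$ and its image carry the same topology (the subspace topology on an embedded submanifold), this composition is a smooth embedding. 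Thus in all cases we obtain genuine smooth embeddings, and the disjoint union decomposition of $M$ is exactly the partition into connected components.

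I do not anticipate a serious obstacle here; this lemma is a routine application of the preimage theorem and the classification of one-dimensional manifolds (as in, e.g., \cite{Hatch} or standard differential topology references), and its role is purely to set up notation for the relative cycles used later. The only mildly delicate point, worth stating explicitly, is that the components which are half-open in appearance — where the arc ``ends'' by running into the locus $a^{-1}(\{\pm 1\})$ — are genuinely open intervals from the viewpoint of $M$ as a manifold, precisely because those limiting endpoints lie in $\pi(W)$ and hence are \emph{not} in the open set $a^{-1}((-1,1))$ over which $M$ is defined; they are therefore not points of $M$ at all, so no boundary is created. This is what makes the open-interval parametrization $f_i : (0,1) \to T^2$ the correct one, and it is exactly this feature that will later allow $\overline{f_i}$ to be completed to a relative cycle in $H_1(T^2, \pi(W))$ by attaching the limit points in $\pi(W)$.
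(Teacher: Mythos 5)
Your proof is correct and follows essentially the same route as the paper: Assumption \ref{assumption} (c) plus the preimage theorem gives a $1$-dimensional submanifold of the open set $a^{-1}((-1,1))$ with finitely many components, and the classification of connected $1$-manifolds (without boundary) yields the decomposition into embedded open intervals and circles. The extra remarks on why the interval components carry no boundary and on why the parametrizations are genuine embeddings into $T^2$ are fine, if more detailed than the paper's own argument.
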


\begin{proof}
By Assumption \ref{assumption} (c), the smooth map $b : a^{-1}((-1, 1)) \to \R$ has $0 \in \R$ as a regular value. By the so-called preimage theorem, $a^{-1}((-1, 1) \cap b^{-1}(0) \subset a^{-1}((-1, 1))$ is a $1$-dimensional submanifold in the open submanifold $a^{-1}((-1, 1)) \subset T^2$. This is supposed to have a finite number of connected components in Assumption \ref{assumption} (c). By the classification of connected $1$-dimensional manifolds, each of the connected components is either an embedded open interval or a circle $S^1$. 
\end{proof}

\begin{lem} \label{lem:extension}
If $a^{-1}((-1, 1)) \cap b^{-1}(0) \neq \emptyset$ and $\ell > 0$ in Lemma \ref{lem:fermi_arc_as_submanifold}, then, for $i = 1, \ldots, \ell$, the smooth embedding $f_i : (0, 1) \to T^2$ extends to a continuous map $f_i : [0, 1] \to T^2$ so that $f_i(0), f_i(1) \in \pi(W)$.
\end{lem}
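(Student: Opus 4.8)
The claim is that each half-open smooth embedding $f_i : (0,1) \to T^2$ parametrizing a non-compact connected component of the $1$-manifold $a^{-1}((-1,1)) \cap b^{-1}(0)$ can be extended to a continuous map on the closed interval $[0,1]$ with the two endpoints landing in $\pi(W) = a^{-1}(\{\pm 1\}) \cap b^{-1}(0)$. The plan is to analyze the limiting behaviour of $f_i(t)$ as $t \to 0^+$ (and symmetrically $t \to 1^-$). First I would observe that, since $T^2$ is compact, the curve $f_i(t)$ has at least one accumulation point in $T^2$ as $t \to 0^+$; call such a point $p$. By continuity of $b$, we have $b(p) = 0$. The key dichotomy is whether $a(p) \in (-1,1)$ or $a(p) \in \{\pm 1\}$ (it cannot be outside $[-1,1]$ by continuity of $a$ and $|a(f_i(t))| < 1$). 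I claim the first case is impossible: if $a(p) \in (-1,1)$, then $p$ lies in the open submanifold $a^{-1}((-1,1))$ on which $0$ is a regular value of $b$, so near $p$ the set $a^{-1}((-1,1)) \cap b^{-1}(0)$ is a single embedded arc through $p$; by the classification of $1$-manifolds and the fact that $f_i((0,1))$ is one whole connected component, $f_i$ would have to be a \emph{proper} embedding of $(0,1)$ into this local arc, which forces $f_i(t)$ to leave every compact neighbourhood of $p$ as $t \to 0^+$ — contradicting that $p$ is an accumulation point of $f_i(t)$ unless the component is actually a circle, which it is not by hypothesis. Hence $a(p) \in \{\pm 1\}$, i.e.\ $p \in \pi(W)$.

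Next I would upgrade "accumulation point" to "limit": since $\pi(W)$ is finite (Assumption (a)) and $b(f_i(t)) \equiv 0$, the only candidate accumulation points of $f_i(t)$ as $t\to 0^+$ lying outside $a^{-1}((-1,1))$ are the finitely many points of $\pi(W)$, and by the argument above these are the \emph{only} possible accumulation points at all. A connected curve in a compact space whose set of accumulation points (as $t \to 0^+$) is a nonempty subset of a finite set must in fact converge: if two distinct points $p \neq p'$ of $\pi(W)$ were both accumulation points, then by connectedness of $f_i((0,\varepsilon))$ and the Hausdorff property one could separate them by disjoint open sets and derive that $f_i(t)$ takes values in the complement of both for a sequence $t \to 0^+$, producing a third accumulation point outside $\pi(W)$ — impossible. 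So $\lim_{t\to 0^+} f_i(t) =: f_i(0)$ exists and lies in $\pi(W)$; the same argument at the other end gives $f_i(1) \in \pi(W)$. Setting $f_i(0)$ and $f_i(1)$ to these limits extends $f_i$ continuously to $[0,1]$, as required.

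The step I expect to be the main obstacle is making rigorous the "properness forces escape from compact neighbourhoods" argument in the case $a(p) \in (-1,1)$, i.e.\ genuinely ruling out that a non-compact component spirals back to accumulate at an interior point of the $1$-manifold. The clean way to handle this is local: by the preimage theorem there is a neighbourhood $V$ of $p$ in $a^{-1}((-1,1))$ with $V \cap b^{-1}(0)$ diffeomorphic to an open interval $I$, and $f_i((0,1)) \cap V$ is an open subset of that whole component, hence (being contained in the $1$-manifold) a union of relatively open subintervals of $I$; the restriction $f_i^{-1}(V) \to I$ is then an open embedding of a $1$-manifold, and one checks that the only way the image can accumulate at the interior point $p \in I$ is if $f_i^{-1}(V)$ contains an interval $(0,\delta)$ with $f_i(t) \to p$ — but then $p$ would have to be an endpoint of the component or the component would be a circle (since the $1$-manifold near $p$ is the single arc $I$ and $f_i$ is injective on it), contradicting $\ell$ counting only the non-circle components and $f_i$ being defined on the open interval $(0,1)$. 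A careful reader should note this is exactly the standard fact that the ends of a non-compact connected $1$-manifold, when the manifold sits as a closed subset of an open submanifold $a^{-1}((-1,1))$ of $T^2$, can only escape toward the topological frontier $\partial(a^{-1}((-1,1))) \subset a^{-1}(\{\pm 1\})$; everything else is bookkeeping with the finiteness in Assumption (a).
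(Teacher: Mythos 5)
Your proof is correct and follows essentially the same route as the paper: extend $f_i$ by its limiting values at the two ends and show these lie in $\overline{a^{-1}((-1,1)) \cap b^{-1}(0)} \setminus \bigl(a^{-1}((-1,1)) \cap b^{-1}(0)\bigr) \subset \pi(W)$ using continuity of $a$ and $b$. If anything you are more careful than the paper on the convergence step --- the paper simply asserts that $f_i(t_n)$ forms a Cauchy sequence, whereas you derive existence of the limit from finiteness of $\pi(W)$ together with connectedness of $f_i((0,\varepsilon))$, and you explicitly rule out accumulation at an interior point of the $1$-manifold via the local arc structure and the embedding property; both details are genuinely needed and your treatment of them is sound.
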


\begin{proof}
We take a metric on $T^2$ compatible with the topology. Then $T^2$ is complete, because $T^2$ is compact. If we choose a sequence $t_n \in (0, 1)$ such that $\lim_{n \to \infty} t_n = 0$, then the embedded image $f_i(t_n) \in T^2$ forms a Cauchy sequence. Thus, there exists a limit point $\lim_{n \to \infty}f_i(t_n) \in T^2$, which is unique and independent of the sequence. We define $f_i(0) \in T^2$ to be this limit point. In the same way, we can define $f_i(1) \in T^2$. By construction, the extension $f_i : [0, 1] \to T^2$ is continuous. The continuity of $b : T^2 \to \R$ leads to $b(f_i(0)) = b(f_i(1)) = 0$. Because $f_i((0, 1)) \subset a^{-1}((-1, 1)) \cap b^{-1}(0)$, we see $f_i(0), f_i(1) \in \overline{a^{-1}((-1, 1)) \cap b^{-1}(0)}$. Recall that $a^{-1}((-1, 1)) \cap b^{-1}(0)$ is the disjoint union of connected $1$-dimensional submanifolds. From this, it follows that $f(0), f(1) \not\in a^{-1}((-1, 1)) \cap b^{-1}(0)$. We have
\begin{align*}
&\overline{ a^{-1}((-1, 1)) \cap b^{-1}(0) } \backslash 
a^{-1}((-1, 1)) \cap b^{-1}(0) \\
&\quad =
\overline{ (a, b)^{-1}((-1, 1) \times \{ 0 \}) } \backslash
(a, b)^{-1}((-1, 1) \times \{ 0 \}) \\
&\quad \subset
(a, b)^{-1}( \overline{(-1, 1) \times \{ 0 \}} ) \backslash
(a, b)^{-1}((-1, 1) \times \{ 0 \}) = \pi(W).
\end{align*}
Hence $f(0), f(1) \in \pi(W)$.
\end{proof}

\begin{lem} \label{lem:function_associated_to_Fermi_arc}
Let $f : (0, 1) \to T^2$ be one of the smooth embeddings $f_1, \ldots, f_\ell$ in Lemma \ref{lem:fermi_arc_as_submanifold}. Expressing $f(t) = (f_x(t), f_y(t)) \in T^2$, we put
$$
c_f(t) = 
\frac{
\displaystyle{
-\frac{\partial b}{\partial k_x}(f(t))
\frac{\partial f_y}{\partial t}(t)
+ 
\frac{\partial b}{\partial k_y}(f(t))
\frac{\partial f_x}{\partial t}(t)
}
}
{
\displaystyle{
\bigg(\frac{\partial b}{\partial k_x}(f(t))\bigg)^2 +
\bigg(\frac{\partial b}{\partial k_y}(f(t))\bigg)^2
}
}.
$$
Then the following holds true.
\begin{itemize}
\item[(a)]
$c_f(t)$ gives rise to a well-defined continuous map $c_f : (0, 1) \to \R$ such that $c_f(t) \neq 0$ for all $t$. As a result, the following sign is also well-defined
$$
\epsilon(f) = \mathrm{sign}(c_f(t))
= \frac{c_f(t)}{\lvert c_f(t) \rvert}.
$$

\item[(b)]
We have the following formula for all $t$
$$
\det J(f(t)) = c_f(t) \frac{d (a \circ f)}{d t}(t).
$$

\end{itemize}
If $f: S^1 \to T^2$ is one of the smooth embeddings $f_1^c, \ldots, f_{\ell^c}^c$ in Lemma \ref{lem:fermi_arc_as_submanifold}, then we can similarly define $c_f : S^1 \to \R$ and $\epsilon(f) \in \{ \pm 1 \}$ with the above properties.
\end{lem}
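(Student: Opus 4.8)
The plan is to extract the scalar content of the defining formula for $c_f(t)$ by recognizing that the numerator is, up to sign, the determinant of a $2\times 2$ matrix whose rows are the gradient of $b$ and the velocity vector of the embedding $f$, and that this determinant cannot vanish because $f$ parametrizes the level set $b^{-1}(0)$. Concretely, write $\nabla b = (\partial_{k_x}b, \partial_{k_y}b)$ and $f'(t) = (f_x'(t), f_y'(t))$. The numerator of $c_f(t)$ is then $\partial_{k_y}b\cdot f_x' - \partial_{k_x}b\cdot f_y'$, which is the $2\times 2$ determinant $\det\!\begin{pmatrix} f_x' & f_y' \\ \partial_{k_x}b & \partial_{k_y}b \end{pmatrix}$ evaluated at $f(t)$; the denominator is $\lvert \nabla b(f(t))\rvert^2$. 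Since $f((0,1))\subset b^{-1}(0)$ and, by Assumption \ref{assumption}(c), $0$ is a regular value of $b$ restricted to $a^{-1}((-1,1))$, we have $\nabla b(f(t))\neq 0$ along the curve, so the denominator is strictly positive; this gives well-definedness and continuity of $c_f$ on $(0,1)$, proving the first half of (a).

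To see $c_f(t)\neq 0$, differentiate the identity $b(f(t)) = 0$ to get $\nabla b(f(t))\cdot f'(t) = 0$, i.e.\ $\partial_{k_x}b\cdot f_x' + \partial_{k_y}b\cdot f_y' = 0$. Thus $f'(t)$ is orthogonal to $\nabla b(f(t))$; since $f$ is an immersion, $f'(t)\neq 0$, and since $\nabla b(f(t))\neq 0$ spans the orthogonal complement of the tangent line to the curve, the vectors $f'(t)$ and $\nabla b(f(t))$ are linearly independent in $\R^2$. Hence the determinant in the numerator is nonzero, so $c_f(t)\neq 0$ for all $t$. Because $c_f$ is continuous and nowhere zero on the connected set $(0,1)$, its sign is constant, so $\epsilon(f) = \mathrm{sign}(c_f(t))$ is well-defined independently of $t$; this completes (a).

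For (b), I would compute $\det J(f(t))$ directly. By definition $\det J = \partial_{k_x}a\cdot\partial_{k_y}b - \partial_{k_y}a\cdot\partial_{k_x}b$, and the chain rule gives $\frac{d(a\circ f)}{dt} = \partial_{k_x}a\cdot f_x' + \partial_{k_y}a\cdot f_y'$. Using the orthogonality relation $\partial_{k_x}b\cdot f_x' = -\partial_{k_y}b\cdot f_y'$ from the previous paragraph, one solves for $f_x'$ and $f_y'$ in terms of (a scalar multiple of) $(-\partial_{k_y}b,\partial_{k_x}b)$: indeed $f'(t) = c_f(t)\,(-\partial_{k_y}b(f(t)),\,\partial_{k_x}b(f(t)))$, which is exactly what the formula for $c_f$ encodes once one checks the normalization against $\lvert\nabla b\rvert^2$. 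Substituting this expression for $(f_x',f_y')$ into $\frac{d(a\circ f)}{dt}$ yields $\frac{d(a\circ f)}{dt} = c_f(t)\big(-\partial_{k_y}b\cdot\partial_{k_x}a + \partial_{k_x}b\cdot\partial_{k_y}a\big) = c_f(t)\det J(f(t))$; wait—one must be careful with the sign, so I would instead substitute directly and verify that the combination reproduces $\det J$ with coefficient $c_f$, i.e.\ rearrange to $\det J(f(t)) = c_f(t)\frac{d(a\circ f)}{dt}(t)$. The identical argument applies verbatim when the domain is $S^1$ instead of $(0,1)$, since connectedness of $S^1$ is all that is used for the constancy of the sign. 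The only mildly delicate point—and the one I expect to require the most care—is tracking signs in the substitution step of (b) and confirming that the normalizing denominator $\lvert\nabla b\rvert^2$ is exactly the factor that makes $f'(t) = c_f(t)(-\partial_{k_y}b,\partial_{k_x}b)$ hold on the nose rather than up to an undetermined scalar; this is a short linear-algebra check but must be done honestly.
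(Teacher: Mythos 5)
Part (a) of your proposal is correct and is essentially the paper's own argument: regularity of $0$ for $b|_{a^{-1}((-1,1))}$ gives $\nabla b\neq 0$ along $f$ (well-definedness and continuity), differentiating $b\circ f=0$ gives orthogonality of $f'$ and $\nabla b$, and nonvanishing of $f'$ (embedding) forces the numerator, an area determinant of two nonzero orthogonal vectors, to be nonzero. One small caveat: the constancy of the sign along $(0,1)$, which you invoke, is not actually part of the claim being used here (the statement only asserts $c_f(t)\neq 0$ pointwise and that the sign is well defined; the paper obtains $\epsilon(f)$ as a constant by the same connectedness remark you make, so this is fine).

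The gap is in (b), precisely at the point you flag and then defer. First, the linear relation you write, $f'(t)=c_f(t)\,(-\partial_{k_y}b,\ \partial_{k_x}b)$, has the wrong sign: plugging it back into the defining formula for $c_f$ yields $c_f=-c_f$. The relation consistent with the normalization by $\lvert\nabla b\rvert^2$ is $f_x'=c_f\,\partial_{k_y}b$, $f_y'=-c_f\,\partial_{k_x}b$ (this is the identity the paper's proof of (a) establishes, characterizing $c_f$ as the unique scalar $c$ with $c\,\partial_{k_x}b=-f_y'$ and $c\,\partial_{k_y}b=f_x'$). Second, substituting the correct relation into the chain rule gives $\frac{d(a\circ f)}{dt}=c_f\bigl(\partial_{k_x}a\,\partial_{k_y}b-\partial_{k_y}a\,\partial_{k_x}b\bigr)=c_f\,\det J$, i.e.\ the identity with $\det J$ and $\frac{d(a\circ f)}{dt}$ in the opposite roles from the displayed formula in (b); your final step of ``rearranging'' $X=c_fY$ into $Y=c_fX$ is not valid unless $c_f^2=1$. (In fact the displayed formula in (b), read literally, fails in the paper's first example, where $c_f\equiv-\pi$ and the two sides differ by the factor $\pi^2$; the identity $\frac{d(a\circ f)}{dt}=c_f\det J$ is the one that holds, and only the resulting sign statement $\mathrm{sign}(\det J)\cdot\mathrm{sign}(d(a\circ f)/dt)=\epsilon(f)$ is used later, in the proof of Lemma \ref{lem:spectral_flow_as_intersection_number}.) The paper itself omits the verification of (b) as ``straightforward,'' so to complete your proof you should carry out the substitution with the corrected signs and state the identity in the form that actually comes out.
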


\begin{proof}
To suppress notation, let us introduce vectors $v, w \in \R^2$ as follows
\begin{align*}
v &=
\bigg(
\frac{\partial b}{\partial k_x}(f(t)),
\frac{\partial b}{\partial k_y}(f(t))
\bigg), &
w &=
\bigg(
\frac{d f_x}{dt}(t), 
\frac{d f_y}{dt}(t)
\bigg).
\end{align*}
Recall that $f : (0, 1) \to T^2$ corresponds to one of the connected components of the submanifold $a^{-1}((-1, 1)) \cap b^{-1}(0)$, where $0$ is a regular value of $b : T^2 \to \R$ restricted to $a^{-1}((-1, 1)) \subset T^2$. Thus, the differential of $b$ does not vanish along $f$, and hence $v \neq 0$. As a result, $c_f(t)$ is well-defined. The continuity of $c_f : (0, 1) \to \R$ is clear. To see that $c_f(t) \neq 0$, recall $b(f(t)) = 0$ by the definition of $f$. Then, 
$$
0 = \frac{d}{dt} b(f(t))
= \frac{\partial b}{\partial k_x}(f(t)) \frac{d f_x}{d t}(t)
+ \frac{\partial b}{\partial k_y}(f(t)) \frac{d f_y}{d t}(t).
$$
This means that $v$ and $w$ are orthogonal, or equivalently, $v$ is parallel to $w$ rotated by $\pi/2$. As is seen, $v \neq 0$. We also have $w \neq 0$, because $f$ is an embedding. As a result, there uniquely exists a real number $c \neq 0$ such that
\begin{align*}
c \frac{\partial b}{\partial k_x}(f(t)) &= - \frac{df_y}{dt}(t), &
c \frac{\partial b}{\partial k_y}(f(t)) &= \frac{df_x}{dt}(t),
\end{align*}
when $t$ is fixed. It is easy to see that $c = c_f(t)$, and (a) is proved. We omit the verification of (b), which is straightforward. Finally, the argument so far is local, and hence applicable to $f = f_1^c, \ldots, f^c_{\ell^c}$ as well.
\end{proof}

\begin{lem} \label{lem:spectral_flow_as_intersection_number}
Suppose $a^{-1}((-1, 1)) \cap b^{-1}(0) \neq \emptyset$. Let $\gamma : S^1 \to T^2 \backslash \pi(W)$ be an embedding of a circle such that:
\begin{itemize}
\item
the embedding $\gamma$ intersects with each of the submanifolds corresponding to $f_1, \ldots, f_\ell, f^c_1, \ldots, f^c_{\ell^c}$ transversally;

\item
$\det J(p) \neq 0$ at each point $p$ that $\gamma$ intersects with $f_1, \ldots, f_\ell, f^c_1, \ldots, f^c_{\ell^c}$.

\end{itemize}
Then we have
$$
\mathrm{sf}(\hat{H}^\sharp \circ \gamma)
= \sum_{j = 1}^\ell \epsilon(f_j) I(f_j, \gamma)
+ \sum_{j = 1}^{\ell^c} \epsilon(f^c_j) I(f^c_j, \gamma),
$$
where $\epsilon(f_j)$ and $\epsilon(f^c_j)$ are given in Lemma \ref{lem:function_associated_to_Fermi_arc}, and 
\begin{align*}
I(f_j, \gamma) &= \int_{\gamma} \mathrm{PD}^{-1}(f_j), &
I(f^c_j, \gamma) &= \int_{\gamma} \mathrm{PD}^{-1}(f^c_j).
\end{align*}
\end{lem}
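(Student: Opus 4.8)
The plan is to localize the computation of the spectral flow $\mathrm{sf}(\hat H^\sharp\circ\gamma)$ to small neighbourhoods of the finitely many transversal intersection points of $\gamma$ with the embedded curves $f_1,\dots,f_\ell,f_1^c,\dots,f_{\ell^c}^c$. Since $\gamma$ avoids $\pi(W)$ by hypothesis, along $\gamma$ the operator $\hat H^\sharp(\gamma(s))$ is Fredholm, self-adjoint, and neither essentially positive nor negative, so $\mathrm{sf}(\hat H^\sharp\circ\gamma)$ is well defined and additive under concatenation of paths. By Proposition~\ref{prop:spectral_data} and its corollary, $\hat H^\sharp(k_x,k_y)$ has zero in its spectrum exactly on $F=a^{-1}([-1,1])\cap b^{-1}(0)$, and the relevant eigenvalue crossings for $\gamma$ avoiding $\pi(W)$ occur only where $\gamma$ meets $a^{-1}((-1,1))\cap b^{-1}(0)$; moreover in that region the unique small eigenvalue of $\hat H^\sharp$ equals the function $b$ (the isolated eigenvalue $E=b$ from Proposition~\ref{prop:spectral_data}), with eigenspace one-dimensional. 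Hence near a crossing point $p=f_j(t_0)=\gamma(s_0)$, the spectral flow contribution is $\pm1$, the sign being that of $\frac{d}{ds}\big(b(\gamma(s))\big)$ at $s_0$.

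\textbf{Key steps.} First I would fix one crossing point $p$ lying on the component $f=f_j$ (the circle case $f_j^c$ is identical, the argument being local) and choose a neighbourhood of $s_0\in S^1$ so small that $\hat H^\sharp\circ\gamma$ restricted there has a single simple eigenvalue near $0$, namely $s\mapsto b(\gamma(s))$, all other spectrum staying bounded away from $0$; this uses continuity of the essential spectrum bands computed in \S\ref{subsec:summary_spectral_data} and the isolation statement in Proposition~\ref{prop:spectral_data}. The local contribution to $\mathrm{sf}$ is then $+1$ if $b(\gamma(s))$ passes from negative to positive and $-1$ otherwise, i.e.\ it equals $\mathrm{sign}\big(\frac{d}{ds}b(\gamma(s))\big|_{s_0}\big)$, which is nonzero because transversality of $\gamma$ with $\{b=0\}$ — guaranteed since $\gamma$ meets the curve $f$ transversally and $f$ lies in $b^{-1}(0)$ with $db\ne0$ along it — forces $\frac{d}{ds}b(\gamma(s))\ne0$. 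Second, I would relate this sign to $\epsilon(f_j)$ and to the local intersection sign of $f_j$ with $\gamma$. Writing $\nabla b(p)=v$ (nonzero), the tangent to $f$ at $p$ is $w$ with $v\perp w$, and by Lemma~\ref{lem:function_associated_to_Fermi_arc} one has $c_f\, v = (w^\perp)$ in the precise sense $c_f\,(\partial_{k_x}b,\partial_{k_y}b)=(-\dot f_y,\dot f_x)$; meanwhile $\frac{d}{ds}b(\gamma(s))=\langle v,\dot\gamma(s)\rangle$. Expanding $\dot\gamma$ in the basis $\{w, v\}$ (valid since $v\ne0$, $w\ne0$ and $v\perp w$ so they span), the $w$-component is killed by $\langle v,\cdot\rangle$, and a short computation shows $\mathrm{sign}\langle v,\dot\gamma(s_0)\rangle = \epsilon(f_j)\cdot(\text{oriented intersection sign of }f_j\text{ with }\gamma\text{ at }p)$, using that the oriented intersection number is $\det[\dot f,\dot\gamma]$ up to the chosen orientations and that $c_f$ encodes precisely the rotation relating $\dot f$ to $v^\perp$. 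Third, I would sum over all crossing points: additivity of spectral flow gives $\mathrm{sf}(\hat H^\sharp\circ\gamma)=\sum_{p}\mathrm{sign}\langle v_p,\dot\gamma\rangle$, and grouping the crossings by which curve $f_j$ or $f_j^c$ they lie on, together with the identity $I(f_j,\gamma)=\int_\gamma\mathrm{PD}^{-1}(f_j)=$ (algebraic intersection number of $f_j$ with $\gamma$) — which is the defining property of the Poincaré dual of a $1$-cycle represented by an embedded curve — yields exactly $\sum_j\epsilon(f_j)I(f_j,\gamma)+\sum_j\epsilon(f_j^c)I(f_j^c,\gamma)$.

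\textbf{Main obstacle.} The delicate point is the sign bookkeeping in the second step: one must check that the orientation conventions implicit in $\epsilon(f_j)$ (through the definition of $c_f$ in Lemma~\ref{lem:function_associated_to_Fermi_arc}), in the Poincaré dual $\mathrm{PD}^{-1}(f_j)$ (which fixes a coorientation of the curve $f_j$, hence an orientation via the ambient orientation of $T^2$), and in the spectral-flow sign convention (eigenvalue crossing from below to above $0$) are mutually compatible, so that the three signs multiply out to $+1$ rather than $-1$. A clean way to organize this is to reduce, via a local diffeomorphism, to the model situation $b(k_x,k_y)=k_y$ near $p$, in which $f_j$ is (a reparametrization of) the $k_x$-axis, $v=(0,1)$, and the identities become transparent; one then tracks how $c_f$ transforms and confirms $\mathrm{sign}\,c_f$ is exactly the discrepancy between the embedding orientation of $f_j$ and the coorientation used by $\mathrm{PD}^{-1}$. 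The analytic input (isolation of the eigenvalue $E=b$ and continuity of the spectrum) is already supplied by \S\ref{subsec:summary_spectral_data} and Proposition~\ref{prop:spectral_data}, so no new hard analysis is needed beyond invoking the standard local description of spectral flow for a simple eigenvalue crossing.
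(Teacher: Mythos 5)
Your argument is correct and its skeleton coincides with the paper's: both proofs localize the spectral flow at the finitely many transversal intersections of $\gamma$ with $a^{-1}((-1,1))\cap b^{-1}(0)$, invoke Proposition~\ref{prop:spectral_data} to identify the unique finite-multiplicity eigenvalue there as $E=b$ (simple, isolated from the essential spectrum), and then match each local crossing sign against $\epsilon(f_j)$ times the local intersection sign, summing to $\epsilon(f_j)I(f_j,\gamma)$ once $I(f_j,\gamma)$ is recognized as an algebraic intersection number. The one genuine divergence is the sign bookkeeping. The paper uses the hypothesis $\det J(p)\neq 0$ to make $(a,b)$ a local diffeomorphism at each crossing, transfers the picture to the $(a,b)$-plane where $f_j$ becomes the interval $(-1,1)\times\{0\}$, and reads the local sign as $\mathrm{sign}(\det J)\cdot\mathrm{sign}(d(a\circ f_j)/dt)$, which equals $\epsilon(f_j)$ by part (b) of Lemma~\ref{lem:function_associated_to_Fermi_arc}. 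You instead stay on $T^2$ and use part (a): from $c_f\,\nabla b=(-\dot f_y,\dot f_x)$ one gets $\det[\dot f\ \dot\gamma]=c_f\langle\nabla b,\dot\gamma\rangle$, while the local spectral-flow contribution is $\mathrm{sign}\bigl(\tfrac{d}{ds}b(\gamma(s))\bigr)=\mathrm{sign}\langle\nabla b,\dot\gamma\rangle$, so the two signs differ exactly by $\epsilon(f_j)$. Your route is marginally more economical (it needs only $\nabla b\neq 0$ along $f_j$, which Assumption~\ref{assumption}~(c) already guarantees, and does not actually use $\det J(p)\neq 0$), and it turns the orientation-compatibility issue you flag as the ``main obstacle'' into a one-line identity; the paper's route buys a reduction to the flat model in the $(a,b)$-plane. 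The conclusions and all signs agree.
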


\begin{proof}
In view of the construction of the Poincar\'{e} dual $\mathrm{PD}^{-1}(f_j)$ as reviewed in the proof of Lemma \ref{lem:Poincare_duality}, one sees that the integer $I(f_j, \gamma)$ agrees with the so-called intersection number, namely, the number of positive intersection points of $f_j$ and $\gamma$ minus that of negative intersection points. Here an intersection point $p \in T^2$ of $f_j$ and $\gamma$ is positive (resp.\ negative) if the pair of tangent vectors $(df_j/dt, d\gamma/dt)$ form a basis of the tangent space at $p \in T^2$ compatible (resp.\ incompatible) with the orientation on $T^2$. By assumption, the determinant of the Jacobian $\det J$ of $(a, b) : T^2 \to \R^2$ does not vanish at each intersection point $p$ of $f_j$ and $\gamma$. Thus, near this point, $(a, b)$ is a diffeomorphism. Because $f_j$ corresponds to a connected component of the submanifold $a^{-1}((-1, 1)) \cap b^{-1}(0)$, the path $(a, b) \circ f = (a \circ f, 0)$ lies on the interval $(-1, 1) \times \{ 0 \} \subset \R^2$. With this interval, the path $(a, b) \circ \gamma$ intersects transversally. For $s \in S^1$ such that $-1 < a(\gamma(s)) < 1$, the unique eigenvalue of $(\hat{H}^\sharp \circ \gamma)(s)$ with finite multiplicity is $E = b(\gamma(s))$ of multiplicity one. Therefore an intersection point $p = \gamma(s)$ of $f_j$ and $\gamma$ corresponds bijectively to the zero eigenvalue of $(\hat{H}^\sharp \circ \gamma)(s)$. Near this $s$, the eigenvalue changes its sign once, and hence contributes to the spectral flow by $\pm 1$. To identify the sign of this contribution, we need to know the sign of $\det J(p)$, which determines whether $(a, b)$ is orientation preserving or not. We also need the sign of $d(a \circ f_j)/dt$. By direct inspection, the positive (resp.\ negative) intersection point contributes to the spectral flow by $+1$ (resp.\ $-1$) when the product of the sings of $\det J$ and $d(a \circ f_j)/dt$ is $+1$ (resp.\ $-1$). This product of signs agrees with $\epsilon(f_j)$, as shown in Lemma \ref{lem:function_associated_to_Fermi_arc}. To summarize, the intersection points of $f_j$ and $\gamma$ contribute to the spectral flow by $\epsilon(f_j) I(f_j, \gamma)$. Taking all the connected components of $a^{-1}((-1, 1)) \cap b^{-1}(0)$ into account, we conclude the expression of the spectral flow.
\end{proof}

\begin{dfn} \label{dfn:Fermi_arc_cycle}
Suppose $a^{-1}((-1, 1)) \cap b^{-1}(0) \neq \emptyset$. We define a singular homology $1$-cycle $f(\hat{H}^\sharp) \in Z^1(T^2, \pi(W))$ by
$$
f(\hat{H}^\sharp) = - \sum_{j = 1}^\ell \epsilon(f_j) f_j
- \sum_{j = 1}^{\ell^c} \epsilon(f^c_j) f^c_j,
$$
where $\epsilon$ is the sign given in Lemma \ref{lem:function_associated_to_Fermi_arc}. 
\end{dfn}

Note that, in the case of $\ell > 0$, Lemma \ref{lem:extension} implies $\pi(W) \neq 0$, and we use the continuous extension $f_i : [0, 1] \to T^2$ in the above definition of $f(\hat{H}^\sharp)$.

\begin{lem} \label{lem:Fermi_acr_and_spectral_flow}
If $a^{-1}((-1, 1)) \cap b^{-1}(0) \neq \emptyset$, then $f(\hat{H}^\sharp)$ represents $\mathcal{F}(\hat{H}^\sharp)$.
\end{lem}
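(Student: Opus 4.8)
The plan is to compare the two relative homology classes $[f(\hat{H}^\sharp)] \in H_1(T^2, \pi(W))$ and $\mathcal{F}(\hat{H}^\sharp)$ by pairing both with a suitable basis of $H^1(T^2 \setminus \pi(W))$ and checking that the pairings coincide. Concretely, I would invoke the basis $\{[\overline\alpha_x], [\overline\alpha_y], [\overline\alpha_i]\}$ of $H_1(T^2,\pi(W))$ built from the point $(k^0_x, k^0_y)$ of Lemma \ref{lem:choice}(C1)–(C2), and dually pair against the Poincaré duals $\mathrm{PD}^{-1}$ of the $1$-cycles $f_j, f^c_j$. By Lemma \ref{lem:spectral_flow_class_in_basis_expression}, the class $\mathcal{F}(\hat{H}^\sharp)$ has basis-coefficients expressed through the spectral flows $\mathrm{sf}(\hat{H}^\sharp|_{\{k^0_x\}\times S^1})$, $\mathrm{sf}(\hat{H}^\sharp|_{S^1\times\{k^0_y\}})$ and $\mathrm{sf}(\hat{H}^\sharp|_{\partial\overline D_i})$. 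So it suffices to compute these three spectral flows in terms of intersection numbers with the $f_j$, $f^c_j$, and to verify that the same numbers appear as the corresponding coefficients of $[f(\hat{H}^\sharp)]$.

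The key steps, in order: (1) Fix the point $(k^0_x, k^0_y)$ afforded by Assumption \ref{assumption}(d); its defining conditions are exactly what makes the circles $\{k^0_x\}\times S^1$ and $S^1\times\{k^0_y\}$, as well as the boundary circles $\partial\overline D_i$ (which may be taken small), satisfy the transversality and non-degeneracy hypotheses ($\det J\neq 0$ at intersection points) of Lemma \ref{lem:spectral_flow_as_intersection_number}. (2) Apply Lemma \ref{lem:spectral_flow_as_intersection_number} to each of these three families of loops to write each spectral flow as $\sum_j \epsilon(f_j) I(f_j,\gamma) + \sum_j \epsilon(f^c_j) I(f^c_j,\gamma)$. (3) Recognize that these intersection numbers $I(f_j,\gamma) = \int_\gamma \mathrm{PD}^{-1}(f_j)$ are precisely the pairings of the relative cycle $f_j$ (resp. $f^c_j$) with the cohomology classes dual to $[\overline\alpha_x],[\overline\alpha_y],[\overline\alpha_i]$, so that the right-hand side of Lemma \ref{lem:spectral_flow_class_in_basis_expression}, read through the formula for $\mathcal{F}(\hat{H}^\sharp)$, matches term by term the expansion of $-\sum_j\epsilon(f_j)[f_j] - \sum_j\epsilon(f^c_j)[f^c_j] = [f(\hat{H}^\sharp)]$ in the same basis. (4) Conclude $[f(\hat{H}^\sharp)] = \mathcal{F}(\hat{H}^\sharp)$ in $H_1(T^2,\pi(W))$ since a class in a free abelian group is determined by its coordinates in a basis (equivalently, by its pairings against the dual basis).

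The main obstacle is step (1)–(3): carefully setting up the duality bookkeeping so that the Poincaré-dual pairings genuinely coincide with the intersection numbers appearing in Lemma \ref{lem:spectral_flow_as_intersection_number}, including matching all orientation conventions (the signs $\epsilon(f_j)$, the orientation on $\partial\overline D_i$ induced from $T^2$, and the signs in the basis expansion of $\mathrm{PD}$). One must also handle the bookkeeping at the endpoints: for $\ell>0$ the cycle $f(\hat{H}^\sharp)$ uses the continuous extensions $f_i:[0,1]\to T^2$ with endpoints in $\pi(W)$ (Lemma \ref{lem:extension}), so that $f(\hat{H}^\sharp)$ is genuinely a relative cycle; the contribution of each $f_j$ to the three reference spectral flows then sees only the interior, and no boundary terms interfere because the reference loops avoid $\pi(W)$. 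Once these sign and boundary conventions are aligned, the identity follows by linear algebra in the free abelian group $H_1(T^2,\pi(W))$.
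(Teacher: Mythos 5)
Your plan follows essentially the same route as the paper's proof: expand both $[f(\hat{H}^\sharp)]$ and $\mathcal{F}(\hat{H}^\sharp)$ in the basis $\{[\overline{\alpha}_x],[\overline{\alpha}_y],[\overline{\alpha}_i]\}$, identify the coefficients of $\mathcal{F}(\hat{H}^\sharp)$ via Lemma \ref{lem:spectral_flow_class_in_basis_expression}, and convert the three reference spectral flows into signed intersection numbers with the $f_j, f^c_j$ via Lemma \ref{lem:spectral_flow_as_intersection_number}. One point in your step (1) is not quite right, though: Assumption \ref{assumption}(d) (and (b) for the circles $\partial\overline{D}_i$) only guarantees $\det J\neq 0$ at the points where the reference loops meet $a^{-1}((-1,1))\cap b^{-1}(0)$; it does \emph{not} guarantee that these loops intersect the Fermi-arc submanifolds transversally, since $\det J\neq 0$ is compatible with $\partial b/\partial k_y=0$ along a vertical circle, say. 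The paper therefore first perturbs each reference loop to a nearby embedding $\gamma$ that is transversal (transversality theorem), checks that $\det J\neq 0$ persists at the perturbed intersection points, and then transfers the count back using the homotopy invariance of both the spectral flow and the intersection number. With that routine extra step inserted, your argument matches the paper's.
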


\begin{proof}
We make the choices in Lemma \ref{lem:choice}. In this choice, we can suppose $k^0_x$ and $k^0_y$ are as in Assumption \ref{assumption} (d). Let $f$ be one of the smooth embeddings $f_1, \ldots, f_\ell, f^c_1, \ldots, f^c_{\ell^c}$. By the same argument as in the proof of Proposition \ref{prop:cohomology_basis}, we can express $\mathrm{PD}^{-1}([f]) \in H^1(T^2 \backslash \pi(W))$ as
\begin{align*}
\mathrm{PD}^{-1}([f])
&=
\bigg( \int_{\overline{\alpha}_y} \mathrm{PD}^{-1}([f]) \bigg)
\mathrm{PD}^{-1}([\overline{\alpha}_x])
- \bigg( \int_{\overline{\alpha}_x} \mathrm{PD}^{-1}([f]) \bigg)
\mathrm{PD}^{-1}([\overline{\alpha}_y]) \\
&\quad
- \sum_{i = 1}^{n-1} 
\bigg( \int_{\partial \overline{D}_i} \mathrm{PD}^{-1}([f]) \bigg)
\mathrm{PD}^{-1}([\overline{\alpha}_i]),
\end{align*}
where $D_i$, ($i = 0, 1, \ldots, n-1$) are small open disks centered at $\overline{w}_i \in \pi(W)$ such that $\overline{D}_i \cap \overline{D}_j = \emptyset$ for $i \neq j$. Taking the Poincar\'{e} duality, we get the following expression of $[f] \in H_1(T^2, \pi(W))$
\begin{align*}
[f]
&=
I(f, \overline{\alpha}_y) [\overline{\alpha}_x]
- I(f, \overline{\alpha}_x) [\overline{\alpha}_y]
- \sum_{i = 1}^{n-1} 
I(f, \partial \overline{D}_i)
[\overline{\alpha}_i],
\end{align*}
where $I(f, \gamma) = \int_\gamma \mathrm{PD}^{-1}([f])$ for any smooth map $\gamma : S^1 \to T^2 \backslash \pi(W)$. To summarize, we have
\begin{align*}
[f(\hat{H}^\sharp)]
&=
-\bigg(
\sum_{j = 1}^\ell \epsilon(f_j) I(f_j, \overline{\alpha}_y)
+ \sum_{j = 1}^{\ell^c} \epsilon(f^c_j) I(f^c_j, \overline{\alpha}_y)
\bigg) [\overline{\alpha}_x] \\
&+
\bigg(
\sum_{j = 1}^\ell \epsilon(f_j) I(f_j, \overline{\alpha}_x)
+ \sum_{j = 1}^{\ell^c} \epsilon(f^c_j) I(f^c_j, \overline{\alpha}_x)
\bigg) [\overline{\alpha}_y] \\
&+
\sum_{i = 1}^{n-1}
\bigg(
\sum_{j = 1}^\ell \epsilon(f_j) I(f_j, \partial \overline{D}_i)
+ \sum_{j = 1}^{\ell^c} \epsilon(f^c_j) I(f^c_j, \partial \overline{D}_i)
\bigg) [\overline{\alpha}_i].
\end{align*}

Because of Assumption \ref{assumption} (d), we can regard $\overline{\alpha}_y$ as a smooth embedding $\overline{\alpha}_y : S^1 \to T^2 \backslash \pi(W)$. By the transversality theorem \cite{G-P}, we can perturb $\overline{\alpha}_y$ to another smooth embedding $\gamma$ so that $\gamma$ intersects with the closed submanifold $a^{-1}((-1, 1)) \cap b^{-1}(0) \subset T^2 \backslash \pi(W)$ transversally. This means that $\gamma$ intersects with each of $f = f_1, \ldots, f_\ell, f^c_1, \ldots, f^c_{\ell^c}$ transversally. By Assumption \ref{assumption} (d) again, if there exists an intersection of $\overline{\alpha}_y$ and $a^{-1}((-1, 1)) \cap b^{-1}(0)$, then $\det J \neq 0$ at the intersection point. Thus, if the perturbation is small enough, then $\det J \neq 0$ at the intersection points of $\gamma$ and $f = f_1, \ldots, f_\ell, f^c_1, \ldots, f^c_{\ell^c}$. Noting that the spectral flow $\mathrm{sf}(\hat{H}^\sharp \circ \gamma)$ and the intersection number $I(f, \gamma)$ depend only on the homotopy class of $\gamma : S^1 \to T^2 \backslash \pi(W)$, we apply Lemma \ref{lem:spectral_flow_as_intersection_number} to have
\begin{align*}
\mathrm{sf}(\hat{H}^\sharp|_{\overline{\alpha}_y})
&= 
\mathrm{sf}(\hat{H}^\sharp|_{\gamma}) 
= \sum_{j = 1}^\ell \epsilon(f_j) I(f_j, \gamma)
+ \sum_{j = 1}^{\ell^c} \epsilon(f^c_j) I(f^c_j, \gamma) \\
&
= \sum_{j = 1}^\ell \epsilon(f_j) I(f_j, \overline{\alpha}_y)
+ \sum_{j = 1}^{\ell^c} \epsilon(f^c_j) I(f^c_j, \overline{\alpha}_y).
\end{align*}
The same argument proves
\begin{align*}
\mathrm{sf}(\hat{H}^\sharp|_{\overline{\alpha}_x})
& 
= \sum_{j = 1}^\ell \epsilon(f_j) I(f_j, \overline{\alpha}_x)
+ \sum_{j = 1}^{\ell^c} \epsilon(f^c_j) I(f^c_j, \overline{\alpha}_x).
\end{align*}
Finally, we choose and fix $i \in \{ 1, \ldots, n - 1 \}$. Let $U_{\overline{w}_i} \subset T^2$ be the open set in Assumption \ref{assumption} (b). We can choose the open disk $D_i$ so as to be $\partial \overline{D}_i \subset U_{\overline{w}_i} \backslash \{ \overline{w}_i \}$. Then, by the transversality theorem again, the smooth embedding $\partial \overline{D}_i : S^1 \to U_{\overline{w}_i} \backslash \{ \overline{w}_i \}$ can be perturbed to a smooth embedding $\gamma :  S^1 \to U_{\overline{w}_i} \backslash \{ \overline{w}_i \}$ so that $\gamma$ intersects with the closed submanifold $(U_{\overline{w}_i} \backslash \{ \overline{w}_i \}) \cap a^{-1}((-1, 1)) \cap b^{-1}(0)$ in $U_{\overline{w}_i} \backslash \{ \overline{w}_i \}$ transversally. Then $\gamma$ intersects with each of $f = f_1, \ldots, f_\ell, f^c_1, \ldots, f^c_{\ell^c}$ transversally. Since an intersection point of $\gamma$ and $a^{-1}((-1, 1)) \cap b^{-1}(0)$ is contained in $(U_{\overline{w}_i} \backslash \{ \overline{w}_i \}) \cap a^{-1}((-1, 1)) \cap b^{-1}(0)$, Assumption \ref{assumption} (b) implies $\det J \neq 0$ at the intersection point. Thus, from Lemma \ref{lem:spectral_flow_as_intersection_number} together with the homotopy invariance of the spectral flow and the intersection number, it follows that
\begin{align*}
\mathrm{sf}(\hat{H}^\sharp|_{\partial \overline{D}_i})
& 
= \sum_{j = 1}^\ell \epsilon(f_j) I(f_j, \partial \overline{D}_i)
+ \sum_{j = 1}^{\ell^c} \epsilon(f^c_j) I(f^c_j, \partial \overline{D}_i).
\end{align*}
In summary, we get
$$
[f(\hat{H}^\sharp)]
= -\mathrm{sf}(\hat{H}^\sharp|_{\overline{\alpha}_y}) [\overline{\alpha}_x]
+ \mathrm{sf}(\hat{H}^\sharp|_{\overline{\alpha}_x}) [\overline{\alpha}_y]
+ \sum_{i = 1}^{n-1} 
\mathrm{sf}(\hat{H}^\sharp|_{\partial \overline{D}_i})
[\overline{\alpha}_i].
$$
This is nothing but the expression of $\mathcal{F}(\hat{H}^\sharp)$ given in Lemma \ref{lem:spectral_flow_class_in_basis_expression}.
\end{proof}

Now, we are in the position to prove Theorem \ref{thm:main}:

\begin{proof}[Proof of Theorem \ref{thm:main}]
We let $\mathcal{F}(\hat{H}^\sharp) \in H^1(T^2, \pi(W))$ be the homology class given in Definition \ref{dfn:edge_index}. Then the first two properties (1) and (2) are satisfied. Hence we show (3) here. 

Suppose that $a^{-1}((-1, 1)) \cap b^{-1}(0)) = \emptyset$. In this case, we have $\pi(W) = F$. Thus, we assume $\pi(W) = F \neq \emptyset$. Then, $F$ consists of a finite number of points. Notice here that the continuous map $\hat{H}^\sharp : T^2 \backslash \pi(W) \to \mathrm{Fred}^1$ takes values in the subspace of invertible operators. As a result, $\mathcal{F}(\hat{H}^\sharp) = 0$. Thus, if we take constant paths with values in the finite set $\pi(W) = F$, then they represent the trivial homology class $\mathcal{F}(\hat{H}^\sharp) = 0$, and (3) is verified.

Suppose that $a^{-1}((-1, 1)) \cap b^{-1}(0)) \neq \emptyset$. In this case, we have the homology cycle $f(\hat{H}^\sharp)$ as in Definition \ref{dfn:Fermi_arc_cycle}. By construction, $F$ is the union of the images of the paths $f_1, \ldots, f_\ell, f^c_1, \ldots, f^c_{\ell^c}$ representing the homology class $[f(\hat{H}^\sharp)]$. By Lemma \ref{lem:Fermi_acr_and_spectral_flow}, we have $[f(\hat{H}^\sharp)] = \mathcal{F}(\hat{H}^\sharp)$, and (3) is verified. 
\end{proof}

\bigskip

Finally, we relate the characterization (W2) of Weyl semimetal to Assumption \ref{assumption} (b). For this aim, we understand the meaning of the approximation $\sim$ in (W2) as the agreement of the Taylor expansions up to the first order after a locally linear transformation of coordinates.

\begin{lem} \label{lem:characterization}
With the above understating, (W2) implies Assumption \ref{assumption} (b).
\end{lem}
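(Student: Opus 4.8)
The plan is to make the ``approximation'' $\sim$ in (W2) precise and to convert it into the statement that the Jacobian determinant $\det J$ of $(a,b)$ does not vanish at the projected Weyl points; Assumption~\ref{assumption}~(b) will then follow immediately from the continuity of $\det J$.

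First I would fix $w=(k^0_x,k^0_y)\in\pi(W)$. As already recorded, $\pi(W)=a^{-1}(\{\pm1\})\cap b^{-1}(0)$, so $b(w)=0$ and $a(w)=\epsilon$ for some $\epsilon\in\{\pm1\}$, and the unique Weyl point of $\hat{H}$ lying over $w$ is $p=(k^0_x,k^0_y,k^0_z)$ with $e^{ik^0_z}=-\epsilon$. Since a traceless $2\times2$ Hermitian matrix with vanishing determinant is the zero matrix, $\hat{H}(p)=0$, so the first-order Taylor polynomial of $\hat{H}$ at $p$ is the linear map $D\hat{H}(p)\colon\R^3\to\mathrm{Herm}(\C^2)_0$. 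Writing $\delta k_i=k_i-k^0_i$ and using $e^{i(k^0_z+\delta k_z)}=-\epsilon(1+i\,\delta k_z)+O(\delta k_z^2)$ together with the first-order Taylor expansions of $a$ and $b$ at $w$, a direct computation shows that the $\sigma_x$- and $\sigma_z$-components of $D\hat{H}(p)$ involve only $(\delta k_x,\delta k_y)$, through the differentials of $a$ and $b$, whereas the $\sigma_y$-component is $\epsilon\,\delta k_z$. Concretely, in the basis $(\sigma_x,\sigma_y,\sigma_z)$ of $\mathrm{Herm}(\C^2)_0$ and the coordinates $(\delta k_x,\delta k_y,\delta k_z)$, the linear map $D\hat{H}(p)$ is represented by
$$
M=\begin{pmatrix}
\partial_{k_x}a(w) & \partial_{k_y}a(w) & 0 \\
0 & 0 & \epsilon \\
\partial_{k_x}b(w) & \partial_{k_y}b(w) & 0
\end{pmatrix},
\qquad
\det M=-\epsilon\,\det J(w).
$$

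Next I would spell out (W2) under the stated understanding: there is an invertible affine change of coordinates $k\mapsto\tilde{k}$ near $p$, with $\tilde{k}(p)=0$, after which the first-order Taylor polynomial of $\hat{H}$ becomes $\sum_{i=x,y,z}\tilde{k}_i\sigma_i$, i.e.\ the standard isomorphism $\Sigma\colon\R^3\to\mathrm{Herm}(\C^2)_0$, $(u_x,u_y,u_z)\mapsto u_x\sigma_x+u_y\sigma_y+u_z\sigma_z$. In the new coordinates this Taylor polynomial differs from $D\hat{H}(p)$ by composition with a linear automorphism of $\R^3$, so such a change of coordinates exists if and only if $D\hat{H}(p)$ is itself a linear isomorphism onto $\mathrm{Herm}(\C^2)_0$; equivalently $M$ is invertible; equivalently $\det J(w)\neq0$. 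Hence (W2) forces $\det J(w)\neq0$ for every $w\in\pi(W)$.

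Finally, $\det J\colon T^2\to\R$ is continuous because $a$ and $b$ are smooth, so $\det J(w)\neq0$ produces an open neighbourhood $U_w\subset T^2$ of $w$ on which $\det J$ never vanishes; a fortiori $\det J(k_x,k_y)\neq0$ for every $(k_x,k_y)\in(U_w\backslash\{w\})\cap a^{-1}((-1,1))\cap b^{-1}(0)$, which is precisely Assumption~\ref{assumption}~(b). Every step is elementary; the one place that needs care is the bookkeeping of the identification $\mathrm{Herm}(\C^2)_0\cong\R^3$ and of the signs in the Taylor expansion, so that the identity $\det M=-\epsilon\,\det J(w)$ is obtained correctly. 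I do not anticipate any genuine obstacle beyond this.
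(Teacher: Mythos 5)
Your proof is correct and takes essentially the same approach as the paper: both reduce (W2) to a first-order Taylor expansion of $\hat{H}$ at a Weyl point and read off $\det J(w)\neq 0$, then pass to a neighbourhood by continuity. The only (minor, and in your favour) difference is that you quantify over an arbitrary linear change of coordinates via the invertibility of $D\hat{H}(p)$ and the identity $\det M=-\epsilon\det J(w)$, whereas the paper fixes a specific transformation (conjugation by $T$ plus a coordinate swap) and demands componentwise agreement of the partial derivatives.
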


\begin{proof}
By the conjugation of the matrix $T$ in the proof of Lemma \ref{lem:spectral_flow_of_Chern_insulator} and the exchange of coordinates $k_y \leftrightarrow k_z$, which comprise a locally linear transformation of coordinates, we have 
$$
T \hat{H}_{\mathrm{loc}}(a(k_x, k_z), b(k_x, k_z), k_y) T^*
= 
b(k_x, k_z) \sigma_x + \sin k_y \sigma_y + (a(k_x, k_z) + \cos k_z) \sigma_z.
$$
Suppose that $(k^0_x, k^0_y, k^0_z) \in W$ is a Weyl point such that $a(k^0_x, k^0_z) = -1$, $b(k^0_x, k^0_z) = 0$ and $k^0_y = 0$. At this point, if we demand from (W2) that the Taylor expansion of $T \hat{H}_{\mathrm{loc}}(a(k_x, k_z), b(k_x, k_z)) T^*$ agrees with $\sum _{i = x, y, z} (k_i - k^0_i) \sigma_i$ up to the first order, then 
\begin{align*}
\frac{\partial a}{\partial k_z}(k^0_x, k^0_z) &= 0, &
\frac{\partial a}{\partial k_z}(k^0_x, k^0_z) &= 1, \\
\frac{\partial b}{\partial k_z}(k^0_x, k^0_z) &= 1, &
\frac{\partial b}{\partial k_z}(k^0_x, k^0_z) &= 0.
\end{align*}
Thus, $\det J (k^0_x, k^0_z) \neq 0$, and hence Assumption \ref{assumption} (b) is fulfilled. At a Weyl point such that $a(k^0_x, k^0_z) = 1$, $b(k^0_x, k^0_z) = 0$ and $k^0_y = \pi$, a similar argument is valid.
\end{proof}

%%%%%%%%%%%%%%%%%%%%%%%%%%%%%%%%%%%%%%%%%%%%%%%%
\subsection{Example}
\label{subsec:example}

We here give three examples of $\hat{H} : T^3 \to \mathrm{Herm}(\C^2)_0$ which satisfies Assumption \ref{assumption}.

\medskip

The first example satisfying Assumption \ref{assumption} is 
\begin{align*}
\hat{H}(k_x, k_y, k_z)
&=
\left(
\begin{array}{cc}
\sin k_y & 2 + \cos k_x + \cos k_y + e^{ik_z} \\
2 + \cos k_x + \cos k_y + e^{-ik_z} & - \sin k_y
\end{array}
\right) \\
&=
\hat{H}_{\mathrm{loc}}(a(k_x, k_y), b(k_x, k_y), k_z),
\end{align*}
where $a(k_x, k_y) = 2 + \cos k_x + \cos k_y$ and $b(k_x, k_y) = \sin k_y$. The Weyl points are
\begin{align*}
w_0^1 &= (\pi/2, \pi, 0), & w_1^1 &= (3\pi/2, \pi, \pi),
\end{align*}
and the Fermi arc consists of a closed interval
$$
F = [\pi/2, 3\pi/2] \times \{ \pi \}.
$$

Assumption \ref{assumption} (a) is clear, since $\pi(W) = a^{-1}(\{ \pm 1 \}) \cap b^{-1}(0) = \{ \overline{w}_0, \overline{w}_1 \}$, where
\begin{align*}
\overline{w}_0 &= \pi(w_0) = (\pi/2, \pi), &
\overline{w}_1 &= \pi(w_1) = (3\pi/2, \pi).
\end{align*}
The determinant of the Jacobian of $(a, b) : T^2 \to \R^2$ is computed as $\det J(k_x, k_y) = - \sin k_x \cos k_y$. Assumption \ref{assumption} (b) follows from $\det J(\overline{w}) \neq 0$ at $\overline{w} \in \pi(W)$. We here notice that $a^{-1}((-1, 1))$ is contained in $(\pi/2, 3\pi/2) \times (\pi/2, 3\pi/2)$. In this region, $b = 0$ and $\partial b/\partial k_y = -1$ on $(\pi/2, 3\pi/2) \times \{ \pi \}$, so that $0$ is a regular value of $b$ on $a^{-1}((-1, 1))$. We have $b^{-1}(0) \cap a^{-1}((-1, 1)) = (\pi/2, 3\pi/2) \times \{ \pi \}$. This is connected, and Assumption \ref{assumption} (c) is verified. Finally, Assumption \ref{assumption} (d) is satisfied by $(k^0_x, k^0_y) = (0, 0)$.

%\begin{figure}[htbp]
%\begin{center}
%\begin{tikzpicture}
%%%%%%% 1st example %%%%%
%\draw [densely dotted] (0, 0)--(4, 0)--(4, 4)--(0, 4)--cycle;
%\node at (4, 0) [below] {$2\pi$};
%\node at (0, 4) [left] {$2\pi$};
%\filldraw [fill=white, draw=black] (0, 0) node[below]{$0$} circle (1pt);
%
%\filldraw (1, 2) node [below] {$\overline{w}_0$} circle (1pt);
%\filldraw (3, 2) node [below] {$\overline{w}_1$} circle (1pt);
%\draw (1, 2)--(3, 2);
%\end{tikzpicture}
%\caption{The first example}
%\end{center}
%\end{figure}

Now, we take $\overline{\alpha}_1 : [0, 1] \to T^2$ in Lemma \ref{lem:choice} to be $\overline{\alpha}_1(t) = ((2t + 1)\pi/2, \pi)$. Applying Lemma \ref{lem:spectral_flow_class_in_basis_expression}, we have
\begin{align*}
\mathcal{F}(\hat{H}^\sharp)
&=
-
\overbrace{\mathrm{sf}(\hat{H}^\sharp|_{\{ k^0_x\} \times S^1})}^{0} 
[\overline{\alpha}_x]
+ 
\overbrace{\mathrm{sf}(\hat{H}^\sharp|_{S^1 \times \{ k^0_y\}})}^{0} 
[\overline{\alpha}_y]
+ 
\overbrace{\mathrm{sf}(\hat{H}^\sharp|_{\partial \overline{D}_1})}^{1} 
[\overline{\alpha}_1]
=
[\overline{\alpha}_1].
\end{align*}
We can take an embedding $f_1 : (0, 1) \to T^2$ corresponding to the submanifold $a^{-1}((-1, 1)) \cap b^{-1}(0) = F \backslash \pi(W)$ as $f_1(t) = ((2t + 1)\pi/2, \pi)$, so that its extension $f_1 : [0, 1] \to T^2$ agrees with $\overline{\alpha}_1$. We readily see $\epsilon(f_1) = -1$, and hence $f(\hat{H}^\sharp) = -\epsilon(f_1) f_1 = \overline{\alpha}_1$, as anticipated. By construction, $\mathrm{Im}f_1 = F$.

Note that we can take another choice of $(k^0_x, k^0_y)$. For example, we can choose ${}'k^0_x \in (\pi/2, 3\pi/2) \backslash \{ \pi \}$ instead of $k^0_x = 0$. Then we have ${}'\overline{\alpha}_y$ correspondingly. The same choice $k^0_y = 0$ as before leads to the same corresponding cycle $\overline{\alpha}_x$. Then, to meet the conditions in Lemma \ref{lem:choice}, we take ${}'\overline{\alpha}_1 : [0, 1] \to T^2$ to be ${}'\overline{\alpha}_1(t) = ((-2t + 1)\pi/2, \pi)$. Under these choices, Lemma \ref{lem:spectral_flow_class_in_basis_expression} leads to 
\begin{align*}
\mathcal{F}(\hat{H}^\sharp)
&=
-
\overbrace{\mathrm{sf}(\hat{H}^\sharp|_{\{ {}'k^0_x\} \times S^1})}^{-1} 
[\overline{\alpha}_x]
+ 
\overbrace{\mathrm{sf}(\hat{H}^\sharp|_{S^1 \times \{ k^0_y\}})}^{0} 
[{}'\overline{\alpha}_y]
+ 
\overbrace{\mathrm{sf}(\hat{H}^\sharp|_{\partial \overline{D}_1})}^{1} 
[{}'\overline{\alpha}_1] \\
&= [\overline{\alpha}_x] + [{}'\overline{\alpha}_1].
\end{align*}
This is consistent with the former calculation, since $[\overline{\alpha}_1] = [\overline{\alpha}_x] + [{}'\overline{\alpha}_1]$ in the homology group $H_1(T^2, \pi(W)) \cong \Z^3$.

\bigskip

The second example is
\begin{align*}
\hat{H}(k_x, k_y, k_z)
&=
\left(
\begin{array}{cc}
\sin k_y & \cos k_x + \cos k_y + e^{ik_z} \\
\cos k_x + \cos k_y + e^{-ik_z} & - \sin k_y
\end{array}
\right) \\
&=
\hat{H}_{\mathrm{loc}}(a(k_x, k_y), b(k_x, k_y), k_z),
\end{align*}
where $a(k_x, k_y) = \cos k_x + \cos k_y$ and $b(k_x, k_y) = \sin k_y$. The Weyl points are
\begin{align*}
w_0^1 &= (\pi/2, 0, \pi), & w_1^1 &= (3\pi/2, 0, \pi), &
w_2^1 &= (\pi/2, \pi, 0), & w_3^1 &= (3\pi/2, \pi, 0).
\end{align*}
Their projections onto $T^2$ are
\begin{align*}
\overline{w}_0 &= (\pi/2, 0), & \overline{w}_1 &= (3\pi/2, 0), &
\overline{w}_2 &= (\pi/2, \pi), & \overline{w}_3 &= (3\pi/2, \pi).
\end{align*}
The Fermi arc consists of disjoint two closed intervals
$$
F = [\pi/2, 3\pi/2] \times \{ 0 \} \cup [-\pi/2, \pi/2] \times \{ \pi \}.
$$
We choose $(k^0_x, k^0_y) = (\pi/4, 7\pi/4)$, and
\begin{align*}
\overline{\alpha}_1(t) &= ((2t + 1)\pi/2, 0), &
\overline{\alpha}_2(t) &= (\pi/2, \pi t), &
\overline{\alpha}_3(t) &= ((2t + 1)\pi/2, \pi t).
\end{align*}
Then we compute the spectral flows to have
\begin{align*}
\mathcal{F}(\hat{H}^\sharp)
&= 
-
\overbrace{\mathrm{sf}(\hat{H}^\sharp|_{\{ k^0_x\} \times S^1})}^{-1} 
[\overline{\alpha}_x]
+ 
\overbrace{\mathrm{sf}(\hat{H}^\sharp|_{S^1 \times \{ k^0_y\}})}^{0} 
[\overline{\alpha}_y] \\
&\quad
+ \overbrace{\mathrm{sf}(\hat{H}^\sharp|_{\partial \overline{D}_1})}^{-1} 
[\overline{\alpha}_1]
+ \overbrace{\mathrm{sf}(\hat{H}^\sharp|_{\partial \overline{D}_2})}^{1}
[\overline{\alpha}_2]
+ \overbrace{\mathrm{sf}(\hat{H}^\sharp|_{\partial \overline{D}_3})}^{-1}
[\overline{\alpha}_3] \\
&= [\overline{\alpha}_x] 
- [\overline{\alpha}_1] + [\overline{\alpha}_2] - [\overline{\alpha}_3].
\end{align*}
We can take the embeddings $f_1, f_2 : (0, 1) \to T^2$ corresponding to the connected components of the submanifold $a^{-1}((-1, 1)) \cap b^{-1}(0) = F \backslash \pi(W)$ as follows
\begin{align*}
f_1 &= ((2t + 1)\pi/2, 0), &
f_2 &= ((-2t + 1)\pi /2, \pi).
\end{align*}
Using the continuous extensions $f_1, f_2 : [0, 1] \to T^2$, we have 
$$
f(\hat{H}^\sharp) = - \epsilon(f_1) f_1 - \epsilon(f_2) f_2 
= - f_1 - f_2.
$$
This is consistent with expression of $\mathcal{F}(\hat{H}^\sharp)$, since $f_1 = \alpha_1$ and $- f_2 = \overline{\alpha}_x + \overline{\alpha}_2 - \overline{\alpha}_3$ in the relative homology group $H_1(T^2, \pi(W)) \cong \Z^5$.

%\begin{figure}[htbp]
%\begin{center}
%\begin{tikzpicture}
%%%%%% 2nd example %%%%%
%\draw [densely dotted] (0, 0)--(4, 0)--(4, 4)--(0, 4)--cycle;
%\node at (0, 0) [below] {$0$};
%\node at (4, 0) [below] {$2\pi$};
%\node at (0, 4) [left] {$2\pi$};
%\filldraw [fill=white, draw=black] (0.5, 1.5) circle (1pt);
%
%\filldraw (1, 0) node [above] {$\overline{w}_0$} circle (1pt);
%\filldraw (3, 0) node [above] {$\overline{w}_1$} circle (1pt);
%\draw (1, 0)--(3, 0);
%
%\filldraw (1, 2) node [above] {$\overline{w}_2$} circle (1pt);
%\filldraw (3, 2) node [above] {$\overline{w}_3$} circle (1pt);
%\draw (0, 2)--(1, 2);
%\draw (3, 2)--(4, 2);
%
%
%%%%%% 3rd example %%%%%
%\draw [densely dotted] (6, 0)--(10, 0)--(10, 4)--(6, 4)--cycle;
%\node at (6, 0) [below] {$0$};
%\node at (10, 0) [below] {$2\pi$};
%\node at (6, 4) [left] {$2\pi$};
%\filldraw [fill=white, draw=black] (6, 0) circle (1pt);
%
%\filldraw (7, 2) node [below] {$\overline{w}_0$} circle (1pt);
%\filldraw (8, 3) node [above] {$\overline{w}_1$} circle (1pt);
%\filldraw (9, 2) node [below] {$\overline{w}_2$} circle (1pt);
%\draw (7, 2)--(8, 3)--(9, 2);
%
%\end{tikzpicture}
%\end{center}
%\caption{The second example (left) and the third example (right)}
%\end{figure}

\bigskip

The third example is 
\begin{align*}
\hat{H}(k_x, k_y, k_z)
&=
\left(
\begin{array}{cc}
\sin k_y - \cos k_x & 2 + \cos k_x + \cos k_y + e^{ik_z} \\
2 + \cos k_x + \cos k_y + e^{-ik_z} & - \sin k_y + \cos k_x
\end{array}
\right) \\
&=
\hat{H}_{\mathrm{loc}}(a(k_x, k_y), b(k_x, k_y), k_z),
\end{align*}
where $a(k_x, k_y) = 2 + \cos k_x + \cos k_y$ and $b(k_x, k_y) = \sin k_y - \cos k_x$. Note that the determinant of the Jacobian of $(a, b) : T^2 \to \R^2$ is
$$
\det J(k_x, k_y) =
- \sin k_x (\cos k_y - \sin k_y)
= - \sqrt{2} \sin k_x \cos (k_y + \pi/4).
$$
The Weyl points are
\begin{align*}
w_0^1 &= (\pi/2, \pi, \pi), &
w_1^1 &= (\pi, 2\pi/2, \pi), &
w_2^1 &= (3\pi/2, \pi, \pi).
\end{align*}
Their projections onto $T^2$ are
\begin{align*}
\overline{w}_0 &= (\pi/2, \pi), &
\overline{w}_1 &= (\pi, 3\pi/2), &
\overline{w}_2 &= (3\pi/2, \pi).
\end{align*}
The Fermi arc is the union of two closed intervals, and is described as
$$
F = \mathrm{Im}(f_1) \cup \mathrm{Im}(f_2),
$$
where $f_1, f_2 : [0, 1] \to T^2$ are given by
\begin{align*}
f_1(t) &= ((t + 1)\pi/2, (t + 2)\pi/2), &
f_2(t) &= ((t + 2)\pi/2, (-t + 3)\pi/2).
\end{align*}
We choose $(k^0_x, k^0_y) = (0, 0)$, and
\begin{align*}
\overline{\alpha}_1(t) &= ((t + 1)\pi/2, (t+2)\pi/2), &
\overline{\alpha}_2(t) &= ((2t + 1)\pi/2, \pi).
\end{align*}
We compute the spectral flows to get
\begin{align*}
\mathcal{F}(\hat{H}^\sharp)
&= 
-
\overbrace{\mathrm{sf}(\hat{H}^\sharp|_{\{ k^0_x\} \times S^1})}^{0} 
[\overline{\alpha}_x]
+ 
\overbrace{\mathrm{sf}(\hat{H}^\sharp|_{S^1 \times \{ k^0_y\}})}^{0} 
[\overline{\alpha}_y] \\
&\quad
+ \overbrace{\mathrm{sf}(\hat{H}^\sharp|_{\partial \overline{D}_1})}^{0} 
[\overline{\alpha}_1]
+ \overbrace{\mathrm{sf}(\hat{H}^\sharp|_{\partial \overline{D}_2})}^{1}
[\overline{\alpha}_2] \\
&= 
[\overline{\alpha}_2].
\end{align*}
The maps $f_1, f_2$ are the continuous extensions of the embeddings corresponding to the connected components of the submanifold $a^{-1}((-1, 1)) \cap b^{-1}(0)$. We have $\epsilon(f_1) = \epsilon(f_2) = -1$ and
$$
f(\hat{H}^\sharp) = -\epsilon(f_1) f_1 - \epsilon(f_2) f_2 = f_1 + f_2.
$$
This represents $\mathcal{F}(\hat{H}^\sharp)$, since $\overline{\alpha}_2 = f_1 + f_2$ in $H_1(T^2, \pi(W)) \cong \Z^4$.

%%%%%%%%%%%%%%%%%%%%%%%%%%%%%%%%%%%%%%%%%%%%%%%%
%%%%%%%%%%%%%%%%%%%%%%%%%%%%%%%%%%%%%%%%%%%%%%%%

\section{Spectral analysis of the local model}
\label{sec:spectral_analysis}

In this section, we prove Proposition \ref{prop:spectral_data}. Let $\hat{H} : S^1 \to \mathrm{Herm}(\C^2)_0$ be the continuous map given by
$$
\hat{H}(\theta) = \hat{H}_{\mathrm{loc}}(a, b, \theta)=
\left(
\begin{array}{cc}
b & a + e^{i\theta} \\
a + e^{-i\theta} & -b
\end{array}
\right),
$$
where $a, b \in \R$. Let $H : L^2(\Z, \C^2) \to L^2(\Z, \C^2)$ be the Fourier transformation of the multiplication operator with $\hat{H}$. Concretely, $H$ acts on $\psi = (\psi(n))_{n \in \Z}$ by
$$
(H\psi)(n) = A \psi(n-1) + V \psi(n) + A^*\psi(n+1),
$$
in which the square matrices $A$ and $V$ are
\begin{align*}
A &= 
\left(
\begin{array}{rr}
0 & 1 \\
0 & 0 
\end{array}
\right),
&
V &=
\left(
\begin{array}{rr}
b & a \\
a & -b 
\end{array}
\right).
\end{align*}
Define $H^\sharp : L^2(\N, \C^2) \to L^2(\N, \C^2)$ by $H^\sharp = P H P^*$, where $P : L^2(\Z, \C^2) \to L^2(\N, \C^2)$ is the orthogonal projection.

\medskip

Given $E \in \R$, we are to consider the solutions to the equation $(H^\sharp - E)\psi = 0$, which is equivalent to the equations for $\psi(1), \psi(2), \psi(3), \ldots \in \C^2$
\begin{align*}
0 &= (V - E) \psi(1) + A^*\psi(2), \\
0 &= A \psi(1) + (V - E) \psi(2) + A^*\psi(3), \\
0 &= A \psi(2) + (V - E) \psi(3) + A^*\psi(4), \\
& \vdots \\
0 &= A \psi(n-1) + (V(k_x) - E) \psi(n) + A^*\psi(n+1).
\end{align*}

\begin{lem} %\label{lem:eigenvalue_a=0}
Suppose that $a = 0$. 
\begin{itemize}
\item
In the case that $E = \pm \sqrt{b^2 + 1}$, the equation $(H^\sharp - E)\psi = 0$ admits $L^2$-solutions. The solutions form a vector space of infinite dimension.

\item
In the case that $E = b$, the equation $(H^\sharp - E)\psi = 0$ admits $L^2$-solutions. The solutions form a vector space of $1$-dimension.

\item
In the other cases, the equation $(H^\sharp - E)\psi = 0$ admits no non-trivial $L^2$-solutions.

\end{itemize}
\end{lem}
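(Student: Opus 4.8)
The plan is to turn the eigenvalue equation into the scalar three–term recursion displayed above, observe that at $a=0$ it splits into two one–sided relations, and then simply read off all $L^2$ solutions. Writing $\psi(n)=\begin{pmatrix} x_n \\ y_n\end{pmatrix}$ and adopting the convention $\psi(0)=0$ (equivalently $y_0=0$), the substitution $a=0$ into the equations of the text turns $(H^\sharp-E)\psi=0$ into, for every $n\ge 1$,
\begin{align*}
y_{n-1}+(b-E)x_n &= 0,\\
x_{n+1}-(b+E)y_n &= 0,
\end{align*}
where the $n=1$ case of the first line reads $(b-E)x_1=0$. The first line gives $y_m=(E-b)x_{m+1}$ for all $m\ge 1$, so $y$ is completely determined by $x$; inserting this into the second line yields the single constraint $(E^2-b^2-1)\,x_m=0$ for all $m\ge 2$. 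Thus $x_m$ is forced to vanish for $m\ge 2$ unless $E^2=b^2+1$, while $x_1$ is forced to vanish unless $E=b$.

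The rest is a short case analysis together with the $\ell^2$ bookkeeping, one case for each bullet of the lemma. If $E^2\neq b^2+1$ then $x_m=0$ for $m\ge 2$ and hence $y_m=0$ for all $m$; if moreover $E\neq b$ then $x_1=0$ as well and only $\psi=0$ remains, giving the third bullet. If $E=b$ — which is automatically distinct from $\pm\sqrt{b^2+1}$ since $b^2\neq b^2+1$ — the equation at $n=1$ imposes nothing on $x_1$, so the solution space is the line spanned by the finitely supported vector $\psi(1)=\begin{pmatrix}1\\0\end{pmatrix}$, $\psi(n)=0$ for $n\ge 2$, which lies in $L^2(\N,\C^2)$; this is the second bullet. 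If $E=\pm\sqrt{b^2+1}$ (so $E\neq b$), then $x_1=0$, the tail $(x_m)_{m\ge 2}$ is unconstrained, and $y_m=(E-b)x_{m+1}$; since $\lVert\psi\rVert^2=(1+|E-b|^2)\sum_{m\ge 2}|x_m|^2$, membership in $L^2$ is equivalent to $(x_m)_{m\ge 2}\in\ell^2$, so the solution space is isomorphic to $\ell^2(\{2,3,\dots\})$ and is infinite-dimensional; this is the first bullet.

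I do not expect a real obstacle here. The only two points needing care are the boundary equation at $n=1$ — this is exactly where the Dirichlet condition enters, and it is what kills the putative bound state except when $E=b$ — and keeping the $L^2$ requirement logically separate from the purely algebraic recursion, since at $E=\pm\sqrt{b^2+1}$ the recursion admits a large space of non-summable formal solutions of which only the square-summable ones survive. The clean decoupling used above is special to $a=0$; for general $a$ the same scheme applies but with a genuinely two–dimensional transfer matrix, which is treated separately.
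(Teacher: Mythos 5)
Your proposal is correct and follows essentially the same route as the paper: substitute $a=0$ into the three-term recursion, observe that the boundary equation forces $(b-E)x_1=0$ while the bulk equations force $(E^2-b^2-1)x_m=0$ for $m\ge 2$, and read off the three cases together with the $\ell^2$ bookkeeping. The only difference is cosmetic — you parametrize the infinite-dimensional eigenspace at $E=\pm\sqrt{b^2+1}$ by the first components $(x_m)_{m\ge2}$ whereas the paper uses the second components $\beta(n)$ — so no further comment is needed.
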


\begin{proof}
We express $\psi(n) = {}^t(\alpha(n) \ \beta(n))$. The equation $0 = (V - E) \psi(1) + A^*\psi(2)$ is equivalent to
\begin{align*}
(b - E) \alpha(1) &= 0, &
\alpha(2) &= (b + E)\beta(1).
\end{align*}
For $n \ge 1$, the equation $0 = A \psi(n) + (V - E) \psi(n+1) + A^*\psi(n+2)$ is equivalent to
\begin{align*}
0 &= \beta(n) + (b - E) \alpha(n+1), &
0 &= - (b + E) \beta(n+1) + \alpha(n+2).
\end{align*}
In summary, we have $(b - E) \alpha(1) = 0$ and
\begin{align*}
\alpha(n + 1) &= (b + E) \beta(n), &
(b - E) \alpha(n +1) &= - \beta(n).
\end{align*}
for $n \ge 1$. Suppose here that $b - E \neq 0$. Then the first equation gives $\alpha(1) = 0$, and $\beta(1) \in \C$ can be any complex number. For $n \ge 1$, we have
$$
\alpha(n+1) 
=  (b + E) \beta(n)
= \frac{-1}{b - E} \beta(n).
$$
This implies that $E^2 = b^2 + 1$, or equivalently $E = \pm \sqrt{b^2 + 1}$, which is consistent with $b - E \neq 0$. Under $E^2 = b^2 + 1$, we have
$$
\alpha(n+1) = (b + E) \beta(n)
$$
for $n \ge 1$. Thus, a solution is given by $\alpha(1) = 0$ and $\alpha(n + 1) = (b + E) \beta(n)$ for $n \ge 1$, where $\beta(1), \beta(2), \ldots \in \C$ are any complex numbers. The square of the $L^2$-norm of this solution is
$$
\lVert \psi \rVert^2
= (1 + \lvert E + b \rvert^2) \sum_{n \ge 1} \vert \beta(n) \rvert^2.
$$
Thus, for each $\beta = (\beta(n))_{n \in \N}$ such that $\sum_{n \ge 1} \lvert \beta(n) \vert^2 < \infty$, we get an $L^2$-solution to $(H^\sharp - E)\psi = 0$. The space of solutions is apparently infinite dimensional. To complete the proof, we suppose that $b - E = 0$. In this case, we can readily solve the equations. A solution is specified by a complex number $\alpha(1) \in \C$ and $\alpha(n+1) = \beta(n) = 0$ for $n \ge 1$. It is clear that this solution gives rise to an $L^2$-solution to $(H^\sharp - E)\psi = 0$, and the space of solutions is $1$-dimensional. 
\end{proof}

\begin{lem}
Suppose that $a \neq 0$. If we ignore the $L^2$-condition, then any solution $\psi = (\psi(n))_{n \in \N}$ to $(H^\sharp - E)\psi = 0$ can be described as follows
\begin{align*}
\psi(1)
&=
z
\left(
\begin{array}{c}
a \\ E- b
\end{array}
\right),
&
\psi(n+1)
&=
R \psi(n), \quad (n \ge 1)
\end{align*}
where $z \in \C$ is a complex number, and 
$$
R =
\left(
\begin{array}{cc}
-a & b + E \\
b - E & \frac{E^2 - b^2 - 1}{a}
\end{array}
\right).
$$
\end{lem}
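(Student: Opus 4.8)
The plan is to reduce everything to a componentwise analysis of the displayed half-line system. Writing $\psi(n) = {}^t(\alpha(n)\ \beta(n))$ and inserting the explicit matrices $A$, $A^*$, $V$, the point is that $A\psi(n-1)$ affects only the first component of the $n$-th equation (contributing $\beta(n-1)$) while $A^*\psi(n+1)$ affects only the second (contributing $\alpha(n+1)$). Thus, for every $n\ge 1$, the second component of the $n$-th equation reads
$$\alpha(n+1) = -a\,\alpha(n) + (b+E)\,\beta(n),$$
with $n=1$ being the second component of the boundary equation, which has the same shape since the missing $A\psi(0)$ term contributes only to the first component; and for $n\ge 1$ the first component of the $(n+1)$-th equation is the genuine bulk relation
$$\beta(n) + (b-E)\,\alpha(n+1) + a\,\beta(n+1) = 0.$$

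First I would dispose of the initial condition. The first component of the boundary equation ($n=1$) is $(b-E)\,\alpha(1) + a\,\beta(1) = 0$, and since $a\neq 0$ this is equivalent to $\psi(1)$ being a scalar multiple of ${}^t(a\ \ E-b)$; writing $\psi(1) = z\,{}^t(a\ \ E-b)$ with $z\in\C$ records exactly this constraint and nothing more. Next I would propagate forward. For $n\ge 1$ the first formula above already gives $\alpha(n+1)$ in terms of $\psi(n)$; substituting it into the bulk relation for the first component and dividing by $a\neq 0$ yields
$$\beta(n+1) = (b-E)\,\alpha(n) + \frac{E^2-b^2-1}{a}\,\beta(n).$$
Collecting the two expressions for $\alpha(n+1)$ and $\beta(n+1)$ into a single $2\times 2$ matrix is precisely $\psi(n+1) = R\,\psi(n)$ with the stated $R$. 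Conversely, any sequence defined by $\psi(1) = z\,{}^t(a\ \ E-b)$ and $\psi(n+1) = R\psi(n)$ solves all the equations, since each scalar equation was used once and in an equivalent form; hence $z\mapsto(\psi(n))_{n}$ is a bijection from $\C$ onto the solution space.

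I do not expect a genuine obstacle: the only feature requiring care is that the hopping matrix $A$ (hence $A^*$) is not invertible, so one cannot simply build an ordinary two-term transfer matrix on $\C^2\oplus\C^2$. The remedy is the interleaving above — read $\alpha(n+1)$ off the second component of equation $n$ and $\beta(n+1)$ off the first component of equation $n+1$ — and this same degeneracy is what makes the boundary equation impose a single linear condition on $\psi(1)$, producing the one free parameter $z$ rather than pinning $\psi(1)$ down completely. A minor bookkeeping point is to check that the formulas for $\alpha(n+1)$ and $\beta(n+1)$ are valid already starting at $n=1$, so there is no special seam between the boundary step and the bulk recursion.
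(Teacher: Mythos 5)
Your proposal is correct and follows essentially the same route as the paper: split the boundary and bulk equations into components, read off $\beta(1)=\frac{E-b}{a}\alpha(1)$ from the first component of the boundary equation, obtain $\alpha(n+1)$ from the second components and $\beta(n+1)$ by substitution into the first components, and divide by $a\neq 0$. The only difference is presentational — you make the interleaving of components and the reversibility of each step explicit, which the paper leaves implicit.
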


\begin{proof}
We express $\psi(n) = {}^t(\alpha(n) \ \beta(n))$. The equation $0 = (V - E) \psi(1) + A^*\psi(2)$ is equivalent to
\begin{align*}
(b - E) \alpha(1) + a \beta(1) & = 0, &
a \alpha(1) - (b + E)\beta(1) + \alpha(2) &= 0.
\end{align*}
Under the assumption $a \neq 0$, these equations lead to
\begin{align*}
\alpha(2) &= - a \alpha(1) + (b + E) \beta(1), &
\beta(1) &= \frac{E - b}{a} \alpha(1), &
\end{align*}
For $n \ge 1$, the equation $0 = A \psi(n) + (V - E) \psi(n+1) + A^*\psi(n+2)$ leads to
\begin{align*}
0 &= \beta(n) + (b - E) \alpha(n+1) + a \beta(n+1), \\
0 &= a \alpha(n+1) - (b + E) \beta(n+1) + \alpha(n+2).
\end{align*}
To summarize, $\alpha(1)$ determines $\beta(1)$ as $\beta(1) = \frac{E - b}{a} \alpha(1)$, and we have
\begin{align*}
\alpha(n + 1) &= -a \alpha(n) + (b + E)\alpha(n), &
\beta(n + 1) &= \frac{E - b}{a} \alpha(n+1) - \frac{1}{a} \beta(n),
\end{align*}
for $n \ge 1$. This provides us
\begin{align*}
\alpha(n + 1) &= - a \alpha(n) + (b + E)\beta(n), \\
\beta(n+1) &= (b - E) \alpha(n) + \frac{E^2 - b^2 - 1}{a} \beta(n),
\end{align*}
for $n \ge 1$. Hence the proof is completed. 
\end{proof}

The trace and the determinant of $R$ are
\begin{align*}
\mathrm{tr}R &= \frac{E^2 - a^2 - b^2 - 1}{a}, &
\det R &= 1.
\end{align*}
The discriminant of the quadratic equation $\lambda^2 - \lambda \mathrm{tr}R + \det R = 0$ is
$$
\Delta = (\mathrm{tr}R)^2 - 4 \det R 
= \frac{((a + 1)^2 - (E^2 - b^2))((a - 1)^2 - (E^2 - b^2))}{a^2}.
$$

\begin{lem} \label{lem:eigenvalue_non_positive_discriminant}
Suppose that $a \neq 0$. In this case, if $\Delta \le 0$, then there is no $L^2$-solutions to $(H^\sharp - E) \psi = 0$.
\end{lem}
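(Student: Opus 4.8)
The plan is to read off the decay (or non-decay) of a solution $\psi$ directly from the transfer matrix $R$. By the preceding lemma, every solution of $(H^\sharp - E)\psi = 0$ (disregarding the $L^2$-condition) has the form $\psi(1) = z\,{}^t(a,\ E - b)$ and $\psi(n+1) = R\psi(n)$ for $n \ge 1$, so $\psi(n) = R^{n-1}\psi(1)$. Since $a \neq 0$, the vector ${}^t(a,\ E-b)$ is nonzero, and hence a nontrivial solution is precisely one with $\psi(1) \neq 0$. Such a solution belongs to $L^2(\N, \C^2)$ only if $\lVert\psi(n)\rVert = \lVert R^{n-1}\psi(1)\rVert \to 0$ as $n \to \infty$. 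Thus it suffices to prove: if $\Delta \le 0$, then $\lVert R^n v\rVert \not\to 0$ for every $v \neq 0$; in fact I would show the stronger statement $\inf_n \lVert R^n v\rVert > 0$, which immediately yields $\sum_{n \ge 1}\lVert\psi(n)\rVert^2 = \infty$.

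To establish that I would use $\det R = 1$, so the eigenvalues $\lambda_1, \lambda_2$ of $R$ satisfy $\lambda_1\lambda_2 = 1$ and $\Delta = (\mathrm{tr}R)^2 - 4\det R = (\mathrm{tr}R)^2 - 4$. If $\Delta < 0$, the eigenvalues are complex conjugates of modulus $1$ and $R$ is diagonalizable over $\C$, so $R^{-1}$ is likewise diagonalizable with unit-modulus eigenvalues and $C := \sup_{n \ge 0}\lVert R^{-n}\rVert < \infty$; then $\lVert v\rVert = \lVert R^{-n}R^n v\rVert \le C\lVert R^n v\rVert$ gives $\lVert R^n v\rVert \ge \lVert v\rVert / C$ for all $n$. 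If $\Delta = 0$, the two eigenvalues coincide at $\lambda$ with $\lambda^2 = 1$, i.e.\ $\lambda = \pm 1$, so $R = \lambda(I + N)$ with $N$ nilpotent, $N^2 = 0$, and $R^n v = \lambda^n(v + nNv)$; this equals $\pm v$ when $Nv = 0$, and has norm $\ge n\lVert Nv\rVert - \lVert v\rVert \to \infty$ when $Nv \neq 0$. In every case $\inf_n\lVert R^n v\rVert > 0$ for $v \neq 0$, and applying this with $v = \psi(1)$ finishes the argument.

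The only step that calls for a little care is the degenerate subcase $\Delta = 0$ with $N \neq 0$: there $\lVert R^{-n}\rVert$ grows linearly, so the clean estimate used when $\Delta < 0$ is unavailable, and one must instead invoke the explicit polynomial form $R^n = \lambda^n(I + nN)$. I do not expect this to be a genuine obstacle, since the $2\times 2$ Jordan structure is entirely explicit. As an alternative bookkeeping device one could phrase the whole proof in terms of the stable subspace of $R$: decay of $R^n v$ forces $v$ to lie in the span of generalized eigenvectors whose eigenvalue has modulus $< 1$, and the constraints $\det R = 1$ and $\Delta \le 0$ leave $R$ with no such eigenvalue, so the stable subspace is $\{0\}$.
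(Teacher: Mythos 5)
Your proof is correct and follows essentially the same route as the paper's: both exploit $\det R = 1$ together with $\Delta \le 0$ to conclude that the eigenvalues of the transfer matrix have modulus one, then split into the diagonalizable case ($\Delta<0$, or $R=\pm I$) and the Jordan-block case ($\Delta=0$, $R$ non-diagonalizable) to see that $\lVert \psi(n)\rVert$ cannot tend to zero. The only cosmetic difference is that you package the conclusion as $\inf_n\lVert R^n v\rVert>0$ while the paper computes the divergent $L^2$-norm of $T^{-1}\psi$ directly.
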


\begin{proof}
In the case that $\Delta < 0$, the matrix $R$ has distinct eigenvalues $\lambda$ and $\bar{\lambda}$ such that $\lambda \overline{\lambda} = 1$. We diagonalize $R$ as $T^{-1} R T = \mathrm{diag}(\lambda, \lambda')$, with $\lvert \lambda \rvert = \lvert \lambda' \vert = 1$. Then the $L^2$-norm of a solution $\psi$ to $(H^\sharp - E) \psi = 0$ is convergent if and only if that of $T^{-1}\psi = (T^{-1}\psi(n))$ is convergent. If we express $T^{-1}\psi(1) = {}^t(\alpha'(1) \ \beta'(1))$, then 
$$
\lVert T^{-1} \psi \rVert^2
= \sum_{n \ge 1} (
\lvert \lambda \rvert^{2n} \lvert \alpha'(1) \rvert^2 +
\lvert \lambda' \rvert^{2n} \lvert \beta'(1) \rvert^2 
)
=
\sum_{n \ge 1} (
\lvert \alpha'(1) \rvert^2 +
\lvert \beta'(1) \rvert^2 
),
$$
which is divergent unless $\psi = 0$. In the case that $\Delta = 0$, the matrix $R$ has an eigenvalue $\lambda = 1$ with multiplicity $2$, or $\lambda = -1$ with multiplicity $2$. When $R$ is diagonalizable, we have $R = \pm 1_{\C^2}$, so that the $L^2$-norm of $\psi$ is divergent unless $\psi = 0$. When $R$ is not diagonalizable, we consider the Jordan normal form of $R$: In the case that $\lambda = 1$, there is an invertible matrix $T$ such that
$$
T^{-1}RT =
\left(
\begin{array}{cc}
1 & 1 \\
0 & 1
\end{array}
\right).
$$
As before, unless $\psi = 0$, the $L^2$-norm of $T^{-1}\psi$ is divergent, and so is $\lVert \psi \rVert$. The same holds true in the case that $\lambda = -1$.
\end{proof}

\begin{lem}
Suppose that $a \neq 0$, $\Delta > 0$ and $E = -b$. In this case, $\Delta > 0$ is equivalent to $a \neq \pm 1$, and the equation $(H^\sharp - E)\psi = 0$ admits $L^2$-solutions if and only if $E = -b = 0$ and $\lvert a \rvert < 1$. The solutions form a vector space of dimension $1$. 
\end{lem}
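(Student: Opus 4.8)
\emph{Plan.} The idea is to feed the hypothesis $E = -b$ directly into the discriminant $\Delta$ and the transfer matrix $R$ from the two preceding lemmas, and then reduce the $L^2$ question to an elementary linear-algebra computation identifying which eigenvector of $R$ the initial vector $\psi(1)$ is allowed to be proportional to. First I would observe that $E = -b$ gives $E^2 - b^2 = 0$, so the formula for $\Delta$ collapses to $\Delta = (a+1)^2(a-1)^2/a^2 = (a^2-1)^2/a^2$. Since $a \neq 0$ this is $\ge 0$, with equality exactly when $a^2 = 1$; hence, under the standing hypotheses $a \neq 0$ and $E = -b$, the condition $\Delta > 0$ is equivalent to $a \neq \pm 1$, which is the first assertion.

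Next I would substitute $E = -b$ into $R$ and into $\psi(1) = z\,{}^t(a \ \ E-b)$, using $b + E = 0$, $b - E = 2b$, $E^2 - b^2 - 1 = -1$ and $E - b = -2b$, to obtain the lower-triangular matrix
$$
R = \begin{pmatrix} -a & 0 \\ 2b & -1/a \end{pmatrix}, \qquad
\psi(1) = z\begin{pmatrix} a \\ -2b \end{pmatrix},
$$
so that every solution (ignoring the $L^2$-condition) is $\psi(n) = R^{n-1}\psi(1)$. Because $\Delta > 0$ means $a \neq \pm 1$, the eigenvalues $-a$ and $-1/a$ of $R$ (note $\det R = 1$, $\mathrm{tr}R = -a - 1/a$) are distinct, so $R$ is diagonalizable, and exactly one of $\lvert -a\rvert$, $\lvert -1/a\rvert$ is $< 1$ while the other is $> 1$. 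As in the proof of Lemma \ref{lem:eigenvalue_non_positive_discriminant}, diagonalizing $R$ shows that $\psi(n) = R^{n-1}\psi(1)$ is in $L^2$ if and only if $\psi(1)$ lies in the one-dimensional eigenspace for the eigenvalue of modulus $< 1$ (otherwise $\lVert \psi(n)\rVert$ grows geometrically). A short computation gives that the $(-a)$-eigenspace is spanned by ${}^t(1-a^2\ \ 2ab)$ and the $(-1/a)$-eigenspace by ${}^t(0\ \ 1)$. I then split into two cases. If $\lvert a\rvert > 1$ the contracting eigenspace is $\C\,{}^t(0\ 1)$, and $\psi(1) = z\,{}^t(a\ -2b)$ lies in it only when $za = 0$, i.e.\ $z = 0$; so there is no nontrivial $L^2$-solution. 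If $\lvert a\rvert < 1$ the contracting eigenspace is $\C\,{}^t(1-a^2\ 2ab)$, and the vectors ${}^t(a\ -2b)$ and ${}^t(1-a^2\ 2ab)$ are proportional precisely when the determinant $a\cdot 2ab - (1-a^2)(-2b) = 2b$ vanishes, i.e.\ when $b = 0$ (hence $E = -b = 0$). In that subcase $\psi(1) = z\,{}^t(a\ 0)$ is itself a $(-a)$-eigenvector, so $\psi(n) = (-a)^{n-1}z\,{}^t(a\ 0)$ is square-summable for every $z \in \C$ and the solution space is exactly one-dimensional. Collecting the cases yields the statement.

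All the computations are routine; the only step needing care is the bookkeeping in the last paragraph, namely correctly identifying the contracting eigenspace in each of the two sign regimes of $\lvert a\rvert$ and verifying that the proportionality of $\psi(1)$ with that eigenvector pins down $b = 0$ exactly. The remaining verifications (diagonalizability, the $L^2$ dichotomy) are immediate from the corresponding arguments already used for Lemma \ref{lem:eigenvalue_non_positive_discriminant}.
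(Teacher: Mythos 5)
Your proof is correct and follows essentially the same route as the paper: both substitute $E=-b$ to get the lower-triangular transfer matrix $R$ with eigenvalues $-a$ and $-1/a$, diagonalize, and determine when the expanding component of $\psi(1)$ vanishes (the paper reads this off from an explicit formula for $T^{-1}\psi(1)$, you phrase it as $\psi(1)$ lying in the contracting eigenspace and compute a $2\times2$ determinant — the same computation). The case bookkeeping and the verification that $\Delta>0\iff a\neq\pm1$ are both handled correctly.
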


\begin{proof}
Under the assumption, the eigenvalues of $R$ are solutions to
$$
\lambda^2 - 
\left( -a - \frac{1}{a} \right) \lambda + 1 = 0,
$$
so that $\lambda = -a, -1/a$, where $a \neq \pm 1, 0$. Let $T$ be the matrix
$$
T =
\left(
\begin{array}{cc}
\frac{1}{a} - a & 0 \\
2b & 1
\end{array}
\right).
$$
This is invertible, and diagonalizes $R$ as $T^{-1} R T = \mathrm{diag}(-a, -1/a)$. We have
$$
\psi(n) = R^n \psi(1)
= \frac{z}{\det T}
T
\left(
\begin{array}{c}
-(-a)^{n+1} \\
2b\big( - \frac{1}{a} \big)^{n+1}
\end{array}
\right).
$$
Therefore $\lVert \psi \rVert < +\infty$ if and only if $\lvert a \rvert < 1$ and $b = 0$. 
\end{proof}

\begin{lem}
Suppose that $a \neq 0$, $\Delta > 0$ and $E \neq -b$. In this case, the equation $(H^\sharp - E)\psi = 0$ admits $L^2$-solutions if and only if $E = b$ and $\lvert a \rvert < 1$. The solutions form a vector space of dimension $1$. 
\end{lem}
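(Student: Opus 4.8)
The plan is to reduce the statement to the location of the admissible initial vector $\,{}^t(a\ \ E-b)$ among the eigenlines of the transfer matrix $R$, reusing the diagonalization argument of Lemma~\ref{lem:eigenvalue_non_positive_discriminant}.

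First I would recall the structural input already proved: every formal solution of $(H^\sharp - E)\psi = 0$ has the form $\psi(n) = z\,R^{n-1}\,{}^t(a\ \ E-b)$ for some $z \in \C$, so the space of formal solutions is one-dimensional and a solution is trivial exactly when $z = 0$; here $a \neq 0$ ensures ${}^t(a\ \ E-b) \neq 0$. Since $\det R = 1$ and $\Delta > 0$, the matrix $R$ has two distinct real eigenvalues with product $1$; neither equals $\pm 1$ (that would force a repeated root), so I may write them as $\lambda$ and $\lambda^{-1}$ with $0 < |\lambda| < 1$, and $R$ is diagonalizable. Choosing $T$ with $T^{-1}RT = \mathrm{diag}(\lambda, \lambda^{-1})$ as in Lemma~\ref{lem:eigenvalue_non_positive_discriminant}, $\psi$ lies in $L^2$ iff $T^{-1}\psi$ does; writing $T^{-1}\psi(1) = {}^t(\alpha'\ \ \beta')$, the $n$-th term of $T^{-1}\psi$ is ${}^t(\lambda^{n-1}\alpha'\ \ \lambda^{-(n-1)}\beta')$, and the second coordinate fails to be square summable unless $\beta' = 0$. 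Hence a nontrivial $L^2$-solution exists if and only if ${}^t(a\ \ E-b)$ lies in the eigenline $\ker(R - \lambda)$, and in that case every formal solution is in $L^2$, so the $L^2$-solution space is one-dimensional.

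The second step is the elementary computation that pins down when ${}^t(a\ \ E-b)$ is an eigenvector. Setting $m = E^2 - b^2 - a^2 - 1 = a\,\mathrm{tr}R$, a direct multiplication gives $R\,{}^t(a\ \ E-b) = {}^t\!\big(m+1,\ (E-b)m/a\big)$, and this is proportional to ${}^t(a\ \ E-b)$ precisely when $(m+1)(E-b) = (E-b)m$, i.e.\ when $E = b$. Conversely, when $E = b$ one has ${}^t(a\ \ E-b) = {}^t(a\ \ 0)$ and $R\,{}^t(a\ \ 0) = -a\,{}^t(a\ \ 0)$, so ${}^t(a\ \ E-b)$ is then the eigenvector of $R$ with eigenvalue $-a$. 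Moreover $E = b$ together with $E \neq -b$ forces $b \neq 0$, consistently with the hypothesis; and for $E = b$ one computes $\mathrm{tr}R = -a - a^{-1}$, so the eigenvalues are $-a$ and $-a^{-1}$ and $\Delta > 0$ holds automatically whenever $a \neq \pm 1$.

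It remains to combine these. If a nontrivial $L^2$-solution exists, then ${}^t(a\ \ E-b)$ spans $\ker(R-\lambda)$, hence is an eigenvector, hence $E = b$; its eigenvalue is then $-a$, which must equal $\lambda$, so $|a| = |\lambda| < 1$. Conversely, if $E = b$ and $|a| < 1$ (so $a \neq \pm 1$ and $\Delta > 0$ is automatic), the solutions are $\psi(n) = z(-a)^{n-1}\,{}^t(a\ \ 0)$ with $\|\psi\|^2 = |z|^2 a^2/(1-a^2) < \infty$, forming a one-dimensional space parametrized by $z \in \C$. The only point requiring care is the square-summability dichotomy of the first step, namely that the unstable component $\beta'$ must vanish; but this is argued exactly as in Lemma~\ref{lem:eigenvalue_non_positive_discriminant}, so I anticipate no genuine obstacle.
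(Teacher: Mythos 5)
Your proof is correct and follows essentially the same route as the paper: diagonalize the transfer matrix $R$ (which has distinct real eigenvalues $\lambda,\lambda^{-1}$ with $0<|\lambda|<1$ since $\det R=1$ and $\Delta>0$) and observe that square-summability forces the initial vector ${}^t(a\ \ E-b)$ onto the contracting eigenline. The only difference is in how $E=b$ is extracted — you use the proportionality condition $R\,{}^t(a\ \ E-b)\parallel{}^t(a\ \ E-b)$ directly, whereas the paper computes $T^{-1}\psi(1)$ explicitly and invokes $\lambda_+\lambda_-=1$ and $\lambda_++\lambda_-=\mathrm{tr}\,R$; both computations are sound, and yours is arguably tidier.
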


\begin{proof}
Let $\lambda_\pm$ be the eigenvalues of $R$. Let $T$ be the matrix
$$
T 
=
\left(
\begin{array}{cc}
b + E & b + E \\
a + \lambda_+ & a + \lambda_- \\
\end{array}
\right).
$$
This is invertible, since $\det T = (b + E)(\lambda_- - \lambda_+)$ does not vanish under the present assumptions $E \neq -b$ and $\Delta > 0$. The matrix $T$ diagonalizes $R$ as $T^{-1}RT = \mathrm{diag}(\lambda_+, \lambda_-)$. The $L^2$-norm of $\psi$ is convergent if and only if that of $T^{-1}\psi$ is convergent. We have
$$
\lVert T^{-1}\psi \rVert^2
= \sum_{n \ge 1}
(
\lvert \lambda_+ \rvert^{2n} \lvert \alpha'(1) \rvert^2 +
\lvert \lambda_- \rvert^{2n} \lvert \beta'(1) \rvert^2
),
$$
where $\alpha'(1)$ and $\beta'(1)$ are components of $T^{-1}\psi(1)$
$$
T^{-1}\psi(1)
=
\left(
\begin{array}{c}
\alpha'(1) \\ \beta'(1)
\end{array}
\right)
=
\frac{z}{\det T}
\left(
\begin{array}{c}
-E^2 + a^2 + b^2 + a \lambda_- \\
E^2 - a^2 - b^2 - a \lambda_+
\end{array}
\right).
$$
It follows that the equation $(H^\sharp - E)\psi = 0$ admits non-trivial $L^2$-solutions only in the following two cases:
\begin{itemize}
\item[(i)]
$\lvert \lambda_+ \rvert > 1$, $\lvert \lambda_- \rvert < 1$ and $E^2 - a^2 - b^2 - a \lambda_- = 0$; or

\item[(ii)]
$\lvert \lambda_+ \rvert < 1$, $\lvert \lambda_- \rvert > 1$ and $E^2 - a^2 - b^2 - a \lambda_+ = 0$.

\end{itemize}
In the first case, we have $\lambda_- = (E^2 - a^2 - b^2)/a$ and $\lambda_+ = 1/\lambda_-$. From the consistency $\lambda_+ + \lambda_- = \mathrm{Tr} R$, it follows that $E^2 = b^2$. Because of the assumption $E \neq -b$, we get $E = b$. In the second case, the same argument leads to $E = b$. Thus, in both cases, we have $E = b$. Now, we verify again the $L^2$-condition under the assumption $E = b$. Under this assumption, we can take $\lambda_+ = -a$ and $\lambda_- = -1/a$. We have 
$$
\left(
\begin{array}{c}
\alpha'(1) \\ \beta'(1)
\end{array}
\right)
=
\frac{z}{\det T}
\left(
\begin{array}{c}
a^2 + a \lambda_- \\
-a^2 - a \lambda_+
\end{array}
\right)
=
\frac{z}{\det T}
\left(
\begin{array}{c}
a^2 - 1 \\
0
\end{array}
\right).
$$
Note that $a^2 - 1 \neq 0$ under the present assumption $\Delta > 0$, which is equivalent to $a^2 \neq 1$. Hence the case (i) cannot occur: if $\lvert \lambda_+ \rvert = \lvert a \rvert > 1$, then there is only the trivial $L^2$-solution ($z = 0$). If $\lvert \lambda_+ \rvert = \lvert a \rvert < 1$ (the case (ii)), then we have non-trivial $L^2$-solutions which form a $1$-dimensional space. 
\end{proof}

Now, we are in the position to complete the proof of Proposition \ref{prop:spectral_data}:

\begin{prop} %\label{prop:eigenvalues}
The equation $(H^\sharp - E)\psi = 0$ for $\psi = (\psi(n))_{n \in \N}$ admits a non-trivial $L^2$-solution in the following cases.
\begin{itemize}
\item
The case that $a = 0$ and $E = \pm \sqrt{b^2 + 1}$. In this case, the space of $L^2$-solutions is infinite dimensional.

\item
The case that $\lvert a \rvert < 1$ and $E = b$. In this case, the space of $L^2$-solutions is $1$-dimensional.

\end{itemize}
\end{prop}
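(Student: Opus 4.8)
The plan is to deduce the statement from the lemmas already established in this section by a single case analysis: first on whether $a = 0$, and then, when $a \neq 0$, on the sign of the discriminant $\Delta$.

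When $a = 0$, the first lemma of the section is already a complete answer: it exhibits non-trivial $L^2$-solutions exactly when $E = \pm\sqrt{b^2 + 1}$, where the solution space is infinite-dimensional, and when $E = b$, where it is one-dimensional, and asserts there are none otherwise. The first alternative is the first bullet of the proposition, and the second is the $a = 0$ instance of the second bullet, the condition $\lvert a \rvert < 1$ being vacuous here.

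When $a \neq 0$, I would split on $\Delta$. If $\Delta \le 0$, Lemma~\ref{lem:eigenvalue_non_positive_discriminant} rules out non-trivial $L^2$-solutions, so nothing new appears. If $\Delta > 0$, I would split once more according to whether $E = -b$, invoking the two corresponding lemmas: for $E = -b$ there are $L^2$-solutions iff $E = -b = 0$ and $\lvert a \rvert < 1$ (hence in particular $E = b$), and for $E \neq -b$ there are $L^2$-solutions iff $E = b$ and $\lvert a \rvert < 1$; in each case the solution space is one-dimensional. Together these say that, for $a \neq 0$, non-trivial $L^2$-solutions occur precisely at the points with $\lvert a \rvert < 1$ and $E = b$, with one-dimensional solution space, which is the $a \neq 0$ part of the second bullet.

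Collecting everything, the cases producing non-trivial $L^2$-solutions are exactly $(a = 0,\ E = \pm\sqrt{b^2+1})$ with infinite-dimensional solution space and $(\lvert a\rvert < 1,\ E = b)$ with one-dimensional solution space, which are the two bullets of the statement. I do not expect a genuine obstacle here, since the real content of the section lies in the transfer-matrix lemmas; the only thing to watch is the bookkeeping at the boundaries of the cases: that the $a = 0$, $E = b$ sub-case of the first lemma is consistent with the $\lvert a\rvert<1$, $E=b$ bullet; that at $b = 0$, where $E = -b$ and $E = b$ both single out the origin, the two $\Delta > 0$ sub-lemmas agree rather than overcount; and that $\lvert a\rvert<1$ forces $\Delta>0$ at $E=b$ (there $\Delta = ((a^2-1)/a)^2 > 0$), so the quoted one-dimensionality applies. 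With these verifications the proposition follows; since $H^\sharp$ is the Fourier transform of $\hat{H}^\sharp_{\mathrm{loc}}(a,b)$, it is the same statement as Proposition~\ref{prop:spectral_data}.
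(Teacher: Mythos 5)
Your proposal is correct and matches the paper's own argument, which likewise just collects the preceding lemmas into the cases $a=0$ versus $a\neq 0$ (subdivided by $\Delta$ and by $E=-b$ or $E\neq -b$) and observes that the $a\neq 0$ cases consolidate with the $a=0$, $E=b$ case into the single condition $\lvert a\rvert<1$, $E=b$. Your extra boundary checks (e.g.\ that $E=b$ with $\lvert a\rvert<1$, $a \neq 0$ forces $\Delta>0$) are sound and, if anything, slightly more careful than the paper's two-line summary.
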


\begin{proof}
By the lemmas so far, we have non-trivial $L^2$-solutions in the following cases.
\begin{enumerate}
\item
$a = 0$ and $E = \pm \sqrt{b^2 + 1}$. 

\item
$a = 0$ and $E = b$. 

\item
$a \neq 0$, $\lvert a \rvert < 1$ and $E = b = 0$.

\item
$a \neq 0$, $\lvert a \rvert < 1$ and $E = b \neq 0$.

\end{enumerate}
In the first case, the space of $L^2$-solutions is infinite dimensional, whereas the space is $1$-dimensional in each of the remaining cases. These remaining cases can be summarized as a single case where $\lvert a \rvert < 1$ and $E = b$.
\end{proof}

%%%%%%%%%%%%%%%%%%%%%%%%%%%%%%%%%%%%%%%%%%%%%%%%
%%%%%%%%%%%%%%%%%%%%%%%%%%%%%%%%%%%%%%%%%%%%%%%%


\begin{thebibliography}{999}


\bibitem{AMV}N.~P.~Armitage, E.~J.~Mele and A.~Vishwanath,
\textit{Weyl and Dirac semimetals in three-dimensional solids}.
Rev. Mod. Phys. 90, 015001 (2018)



\bibitem{AOP}J.~K.~Asb\'{o}th, L.~Oroszl\'{a}ny and A.~P\'{a}lyi,
\textit{A short course on topological insulators}.
Lecture Notes in Physics. Volume 919. Springer 2016


%\bibitem{A}M. F. Atiyah, 
%\textit{$K$-theory}. 
%Lecture notes by D. W. Anderson W. A. Benjamin, Inc., 
%New York-Amsterdam 1967


\bibitem{APS}
M. F. Atiyah, V. K. Patodi and I. M. Singer,
\textit{Spectral asymmetry and Riemannian geometry. III}. 
Math. Proc. Cambridge Philos. Soc. 79 (1976), no. 1, 71--99.



\bibitem{A-Si}M.~F.~Atiyah and I.~M.~Singer,
\textit{Index theory for skew-adjoint Fredholm operators}. 
Inst. Hautes \'{E}tudes Sci. Publ. Math. No. 37 1969 5--26.




\bibitem{ASV}J.~C.~Avila, H.~Schulz-Baldes and C.~Villegas-Blas,
\textit{Topological invariants of edge states for 
periodic two-dimensional models}.
Math. Phys. Anal. Geom. 16 (2013), no. 2, 137--170.
%arXiv:1202.0537




\bibitem{ASS}J.~E.~Avron, R.~Seiler and B.~Simon,
\textit{Charge deficiency, charge transport and comparison of dimensions}.
Comm. Math. Phys. 159 (1994), no. 2, 399--422.



\bibitem{ADT}N.~Azamov, T.~Daniels and Y.~Tanaka,
\textit{A topological approach to unitary spectral flow 
via continuous enumeration of eigenvalues}.
arXiv:1502.02319.



\bibitem{BES}J.~Bellissard, A.~van Elst and H.~Schulz-Baldes,
\textit{The noncommutative geometry of the quantum Hall effect}.
Topology and physics.
J. Math. Phys. 35 (1994), no. 10, 5373--5451.
%arXiv:cond-mat/9411052



\bibitem{B-T}R. Bott and L. W. Tu,
\textit{Differential forms in algebraic topology}. 
Graduate Texts in Mathematics, 82. 
Springer-Verlag, New York-Berlin, 1982.




%\bibitem{BKR}C.~Bourne, J.~Kellendonk and A.~Rennie,
%\textit{The $K$-theoretic bulk-edge correspondence 
%for topological insulators}.
%Ann. Henri Poincar\'{e} 18 (2017), no. 5, 1833--1866.
%arXiv:1604.02337.


\bibitem{Bra}M.~Braverman,
\textit{Spectral flows of Toeplitz operators and bulk-edge correspondence}.
Lett. Math. Phys. 109 (2019), no. 10, 2271--2289.




\bibitem{D-T}A.~Dold and R.~Thom,
\textit{Quasifaserungen und unendliche symmetrische Produkte}. 
Ann. of Math. (2) 67 (1958), 239--281. 


\bibitem{D}R.~G.~Douglas, 
\textit{Banach algebra techniques in operator theory}.
Second edition. Graduate Texts in Mathematics, 179. 
Springer-Verlag, New York, 1998.


\bibitem{Gr-Po}G.~M.~Graf and M.~Porta,
\textit{Bulk-edge correspondence for two-dimensional topological insulators}.
Comm. Math. Phys. 324 (2013), no. 3, 851--895.




\bibitem{G-P}V. Guillemin and A. Pollack,
\textit{Differential topology}. 
Reprint of the 1974 original. 
AMS Chelsea Publishing, Providence, RI, 2010.



\bibitem{Hatch}A.~Hatcher, 
\textit{Algebraic topology}. 
Cambridge University Press, Cambridge, 2002. 



\bibitem{Hatsu}Y.~Hatsugai,
\textit{Chern number and edge states in the integer quantum Hall effect}.
Phys. Rev. Lett. 71, 3697 (1993)




\bibitem{Haya}S.~Hayashi,
\textit{Bulk-edge correspondence and the cobordism invariance of the index}.
Rev. Math. Phys. 29 (2017), no. 10, 1750033, 31 pp.



\bibitem{Hsieh}D.~Hsieh, et.\ al.,
%D.~Qian, L.~Wray, Y.~Xia, Y.~S.~Hor, 
%R.~J.~Cava and M.~Z.~Hasan,
\textit{A topological Dirac insulator in a quantum spin Hall phase}.
Nature, volume 452, pages 970--974 (2008)
%arXiv:0902.1356




\bibitem{KRS1}J.~Kellendonk, T.~Richter and H.~Schulz-Baldes,
\textit{Edge current channels and Chern numbers 
in the integer quantum Hall effect}.
Rev. Math. Phys. 14 (2002), no. 1, 87--119.




\bibitem{KRS2}J.~Kellendonk, T.~Richter and H.~Schulz-Baldes,
\textit{Simultaneous quantization of edge and bulk Hall conductivity}.
J. Phys. A 33 (2000), no. 2, L27--L32.


%\bibitem{Kub}Y.~Kubota,
%\textit{Controlled topological phases and bulk-edge correspondence}.
%Comm. Math. Phys. 349 (2017), no. 2, 493--525.
%arXiv:1511.05314.



\bibitem{M-T1}V.~Mathai and G.~C.~Thiang,
\textit{Global topology of Weyl semimetals and Fermi arcs}.
J. Phys. A 50 (2017), no. 11, 11LT01, 11 pp.



\bibitem{M-T2}V.~Mathai and G.~C.~Thiang,
\textit{Differential topology of semimetals}.
Comm. Math. Phys. 355 (2017), no. 2, 561--602. 


\bibitem{Mur}S.~Murakami,
\textit{Phase transition between the quantum spin Hall and 
insulator phases in 3D: emergence of a topological gapless phase}.
New J. Phys. Volume 9: 356 (2007).


\bibitem{Phi}J.~Phillips, 
\textit{Self-adjoint Fredholm operators and spectral flow}.
Canad. Math. Bull. 39 (1996), no. 4, 460--467.



\bibitem{P-SB}E.~Prodan and H.~Schulz-Baldes,
\textit{Bulk and boundary invariants for complex topological insulators. 
From K-theory to physics}. 
Mathematical Physics Studies. Springer, 2016.



%\bibitem{Spa}E. H. Spanier,
%\textit{Algebraic topology}. 
%Corrected reprint. Springer-Verlag, New York-Berlin, 1981.




%\bibitem{Th1}G.~C.~Thiang,
%\textit{Edge-following topological states}.
%J. Geom. Phys. 156 (2020), 103796, 17 pp.
%arXiv:1908.09559.



\bibitem{Th2}G.~C.~Thiang,
\textit{On spectral flow and Fermi arcs}.
arXiv:2007.06193.



\bibitem{Wan}X.~Wan, A.~M.~Turner, A.~Vishwanath and S.~Y.~Savrasov,
\textit{Topological semimetal and Fermi-arc surface states 
in the electronic structure of pyrochlore iridates}.
Phys. Rev. B 83, 205101. 



\bibitem{Xu}S.-Y.~Xu, et.~al.
\textit{Discovery of a Weyl fermion semimetal and topological Fermi arcs}.
Science  07 Aug 2015:
Vol. 349, Issue 6248, pp. 613-617.


\end{thebibliography}
\end{document}